
\documentclass[acmsmall,authorversion,nonacm]{acmart}
\settopmatter{printfolios=true,printccs=true,printacmref=false}

\AtBeginDocument{%
  \providecommand\BibTeX{{%
    \normalfont B\kern-0.5em{\scshape i\kern-0.25em b}\kern-0.8em\TeX}}}

\setcopyright{none}
\bibliographystyle{ACM-Reference-Format}
\citestyle{acmauthoryear} 

\usepackage{booktabs}   
\usepackage{subcaption} 

\usepackage{wasysym}
\usepackage{xspace}
\usepackage{array,multirow}
\usepackage{adjustbox}

\usepackage{todonotes}

\usepackage{amsmath}
\usepackage{mathtools}
\usepackage{suffix}
\usepackage{tikz-cd}
\usepackage{mathpartir}
\usepackage{enumitem}
\usepackage{stmaryrd}
\usepackage[all]{xy}
\usepackage{twoopt}
\usepackage{array}
\usepackage{listings}
\usepackage{multirow, bigdelim}
\usepackage{graphicx}

\lstdefinelanguage{llql}%
{morekeywords={
  if,then,else,let,in,
  op2,op1,map,map2,foldl,
  scanl,scanr,map3,fun,reduce,
  shift1L, shift1R,
  op,var,J,
  sum,prod,dot,fst,
  zip,
  true,false,
  OnesLike,
  ZerosLike,
  pair,proj,cos,sin,exp
  },%
  sensitive,%
  morecomment=[l]//,%
  morecomment=[s]{/*}{*/},%
  morestring=[b]",%
  morestring=[b]',%
  showstringspaces=false,%
  morecomment=[s][\color{gray}]{@}{\ },%
    breaklines=true,%
  mathescape=true,%
showspaces=false,
showtabs=false,
showstringspaces=false,
breakatwhitespace=true,
  aboveskip=1pt,
  belowskip=1pt,
  lineskip=-0.2pt,
  numbersep=5pt,
  numberstyle=\tiny\ttfamily,
  basicstyle=\small\ttfamily,
  keywordstyle=\bfseries\color{blue!70!black},%
  columns=fullflexible,
  frame=single,
  escapeinside={(*@}{@*)},
  literate={->}{$\rightarrow\;$}{2}
           {<}{$\langle$}{1}
           {>}{$\rangle$}{1}
}[keywords,comments,strings]%

\lstset{language=llql}
\lstMakeShortInline[columns=fixed]!

\newenvironment{absolutelynopagebreak}
  {\par\nobreak\vfil\penalty0\vfilneg
   \vtop\bgroup}
  {\par\xdef\tpd{\the\prevdepth}\egroup
   \prevdepth=\tpd}

\DeclareMathOperator{\J}{J}

\newcommand{\supfull}{\CIRCLE}
\newcommand{\suphalf}{\LEFTcircle}
\newcommand{\supnone}{\Circle}

\newcommand{\RR}{\mathbb{R}}

\newtheorem{notation}{Notation}

\newcommand{\codekw}[1]{\texttt{\small\bfseries\color{blue!70!black}#1}}

\newcommand{\Diff}{\mathbf{Diff}}
\newcommand{\sem}[1]{\llbracket #1\rrbracket}

\newcommand{\defeq}{\stackrel {\mathrm{def}}=}

\makeatletter
\newcommand\@TyAlph[1]{%
\ifcase #1\or \tau\or \sigma\or \rho\else \@ctrerr \fi%
}
\newcommand\ty[1][1]{{\@TyAlph{#1}}}

\newcommand\tvar[1][1]{{\@TyVarAlph{#1}}}
\newcommand\@TyVarAlph[1]{%
\ifcase #1\or \alpha\or \beta\or \gamma\else \@ctrerr \fi%
}

\newcommand\var[1][1]{{\@VarAlph{#1}}}
\newcommand\@VarAlph[1]{%
\ifcase #1\or x\or y\or z\or u\or v\or w\else \@ctrerr \fi%
}

\newcommand\trm[1][1]{{\@TermAlph{#1}}}
\newcommand\@TermAlph[1]{%
\ifcase #1\or t\or s\or r\else \@ctrerr \fi%
}

\newcommand\val[1][1]{%
\ifcase #1\or v\or w\or u\else \@ctrerr \fi%
}

\newcommand\op[1][1]{%
\ifcase #1\or \mathsf{op}\or \mathsf{op}'\or \mathsf{op}''\else \@ctrerr \fi%
}
\makeatother

\newcommand\Dsynrevsymbol[1][]{{\scalebox{0.8}{$\overleftarrow{\mathcal{D}}$}_{#1}}}

\newcommand\UNFSymbol{\mathbf{UNF}}

\newcommand\invUNFSymbol{\mathbf{UNF}^{-1}}

\newcommand\cnst{\underline{c}}

\newcommand\reals{\mathbf{R}}

\newcommand{\BB}{\mathbb{B}}

\newcommand\ArraySym{\mathbf{A}}
\newcommand\Array[2]{\ArraySym[#1]^{#2}}

\newcommand{\grammarcomment}[1]{\textit{\small #1}}

\newcommand{\Source}{\mathbf{Source}}
\newcommand{\Target}{\mathbf{Target}}

\newcommand{\cost}{Cost}
\newcommand{\transto}{\text{ $\leadsto$ }}
\newcommand{\directD}[3]{\Dsynrevsymbol^{#1}_{#2;#3}}
\newcommand{\ad}[2]{\mathcal{D}_{#1;#2}}
\newcommand{\grad}{\nabla}

\newcommand{\pcomp}{\widetilde{;}}
\newcommand{\comp}{;}

\newcommand{\icomp}{\circ}

\newcommand{\SynSource}{SynSource}
\newcommand{\SynTarget}{SynTarget}
\newcommand{\catC}{\mathcal{C}}
\newcommand{\ConcatS}{Concat1}
\newcommand{\ConcatT}{Concat2}
\newcommand{\seml}{\llbracket}
\newcommand{\semr}{\rrbracket}

\newcommand{\concatcomp}{\mathmakebox[2ex][s]{+\kern-1ex+\kern0.8ex}}
\newcommand{\CartSp}{\mathbf{CartSp}}

\newcolumntype{R}[2]{%
    >{\adjustbox{angle=#1,lap=\width-(#2)}\bgroup}%
    l%
    <{\egroup}%
}
\newcommand*\rot{\multicolumn{1}{R{90}{0em}|}}
\newcommand{\plots}[1]{\mathcal{P}_{#1}}
\newcommand\seq[2][]{\left(#2\right)_{#1}}
\newcommand{\CSource}{\mathcal{C}_{Source}}
\newcommand{\CTarget}{\mathcal{C}_{Target}}
\newcommand{\NestA}{NAO}
\newcommand{\smartpara}[1]{\noindent \textbf{#1.}}
\newcommand{\pluseq}{\mathrel{+}=}

\begin{document}

\title{Denotationally Correct, Purely Functional, Efficient Reverse-mode Automatic Differentiation}

\author{Mathieu Huot}
\email{mathieu.huot@cs.ox.ac.uk}   
\affiliation{
  \institution{University of Oxford}
  \city{Oxford}
  \country{UK}}

\author{Amir Shaikhha}
\affiliation{
  \institution{University of Edinburgh}
  \city{Edinburgh}
  \country{UK}}

\renewcommand{\shortauthors}{Huot, et al.}

\begin{abstract}
    Reverse-mode differentiation is used for optimization, but it introduces references, which break the purity of the underlying programs, making them notoriously harder to optimize.
    We present a reverse-mode differentiation on a purely functional language with array operations.  
    It is the first one to deliver a provably efficient, purely functional, and denotationally correct reverse-mode differentiation.
    We show that our transformation is semantically correct and verifies the cheap gradient principle. 
    Inspired by PROPs and compilation to categories, we introduce a novel intermediate representation that we call `unary form'.
    Our reverse-mode transformation is factored as a compilation scheme through this intermediate representation.
    We obtain provably efficient gradients by performing general partial evaluation optimizations after our reverse-mode transformation, as opposed to manually derived ones.
    For simple first-order programs, the obtained output programs resemble static-single-assignment (SSA) code. 
    We emphasize the modularity of our approach and show how our language can easily be enriched with more optimized primitives, as required for some speed-ups in practice.
\end{abstract}

\keywords{categorical semantics, automatic differentiation, functional programming}

\maketitle

\section{Introduction}
\label{sec:intro}



Deep learning is moving towards increasingly sophisticated optimization objectives that employ tensors and operations on tensors.
Reverse-mode Automatic Differentiation (AD) is a technique to automatically compute the gradient of objective functions of the form $\RR^n\to\RR$.
Such functions appear a lot in practice: for instance, as loss functions in machine learning.

In order to reach the efficiency of the usual imperative version of reverse-mode, the transformations usually introduce references, even in functional languages \cite{lantern_icfp}.
The lack of purity in reverse-mode  makes it significantly harder to optimize and parallelize. Sophisticated heuristics are often used (e.g. \cite{xla}), which provide no theoretical performance guarantee.
As a result, to optimize for efficiency, a specific hand-crafted reverse-derivative must often be given for every non-elementary operation, even if an automated one can be compositionally obtained from the derivatives of its elementary constituting operations.
Abstracting away from imperative code in automatic differentiation is still a hurdle that functional implementations need to overcome.
 
In this paper, we define a purely functional (without references or control mechanisms such as state monads), denotationally correct, and provably efficient reverse-mode AD. 
To do so, we define the Unary Normal Form (UNF) representation inspired by PROPs~\cite{maclane1965categorical} 
and compilation to categories~\cite{elliott2017compiling}.
We can easily define and prove correctness of reverse-mode on this representation. 
The whole reverse-mode transformation is obtained by compiling the language to this Intermediate Representation (IR), applying the simpler reverse-mode transformation, and compiling again to the original language.
After standard optimizations, the output program looks like SSA~\cite{cytron1989efficient} or ANF~\cite{sabry1993reasoning}, which leads to more efficient implementations.

\begin{table}
 \label{fig:comparison-table}
 \small
\addtolength{\tabcolsep}{-1pt}
 \begin{tabular}{|l|c|c|c|c|c|c|c|c|c|c|c|c|c|c|c|c|c|c|}
 \hline
  & \rot{This Paper} & \rot{\cite{lantern_icfp}}  & \rot{\cite{shaikhha2019efficient}}  & \rot{\cite{huot2020correctness}}  & \rot{\cite{brunel2019backpropagation}}  & \rot{\cite{abadi-plotkin2020}}  & \rot{\cite{barthe2020versatility}}  & \rot{\cite{pearlmutter2008reverse}}  & \rot{\cite{Elliott:2018:SEA:3243631.3236765}} & \rot{\cite{sherman2021}}  & \rot{\cite{vytiniotis2019differentiable}}  & \rot{\cite{mak2020differential}}  & \rot{\cite{vakar2020reverse}}  & \rot{\cite{Manzyuk2012}}  & \rot{\cite{cockett2019reverse}}  & \rot{\cite{gallagher-sdg}} 
  & \rot{\cite{krawiec2022provably}}
  & \rot{\cite{paszke2021parallelism}}
  \\ \hline
Reverse Mode &
\supfull & \supfull & \supnone & \supfull & \supfull & \supfull & \supnone & \supfull & \supfull & \supnone & \supfull & \supfull & \supfull & \supnone & \supfull & \supnone & \supfull & \supfull\\ \hline
Complexity &
\supfull & \suphalf & \supnone & \supnone & \suphalf & \suphalf & \supnone & \supfull & \supnone & \supnone & \suphalf & \supnone & \supnone & \supnone & \supnone & \supnone & \supfull & \supfull \\ \hline
Pure Derivatives &
\supfull & \supnone & \supfull & \supfull & \supfull & \supfull & \supfull & \supnone & \supfull & \supfull & \supfull & \supfull & \supfull & \supfull & \supfull & \supfull & \supfull & \supfull \\ \hline
Correctness &
\supfull & \supnone & \supnone & \supfull & \supfull & \supfull & \supfull & \supnone & \supfull & \supfull & \supnone & \supfull & \supfull & \supfull & \supfull & \supfull & \supfull & \supnone \\ \hline
Tensor Support &
\supfull & \supfull & \supfull & \supnone & \supnone & \supnone & \supnone & \supnone & \supnone & \supnone & \supfull & \supnone & \supfull & \supnone & \supnone & \supnone 
& \supnone & \supfull
\\ \hline
HO Functions &
\supnone & \supfull & \supfull & \supfull & \supfull & \supnone & \supfull & \supfull & \supnone & \supfull & \supfull & \supfull & \supfull & \supfull & \supnone & \supfull 
& \supfull & \supnone\\ \hline
Recursion &
\suphalf & \supfull & \suphalf & \suphalf & \supnone & \supfull & \supnone & \supfull & \supnone & \suphalf & \supnone & \supnone & \supnone & \supnone & \supnone & \supnone 
& \supnone & \supnone\\ \hline
Conditional &
\supfull & \supfull & \supfull & \supfull & \supnone & \supfull & \supfull & \supfull & \supnone & \supfull & \supnone & \supnone & \supnone & \supnone & \supnone & \supnone 
& \supnone & \supnone \\ \hline
 \end{tabular}
 \caption{Comparison of different functional differentiable programming frameworks.
 $\supfull$ means that the property is verified, and $\supnone$ means that it is absent in the work.
 $\suphalf$ for complexity means that the proof is not fully covered, and for recursion, that it does not support general recursion but map, reduce and/or fold. HO stands for higher-order. Correctness is ticked if a proof is formalized in the paper.}
 \label{tbl:relwork}
 \end{table}

\subsection{Examples}

We introduce the general idea of efficient reverse-mode in a functional setting through the following examples.

\begin{example}[First-order term]
Let us consider the term !let w$_1:\reals$ = x$_1$ * x$_2$ in let w$_2:\reals$ = w$_1$ * x$_1$ in w$_2$! in the context $\Gamma:=\{x_1:\reals,x_2:\reals, x_3:\reals\}$.\\
After an (inefficient) reverse-mode transformation, we obtain:
\begin{center}
    \begin{tabular}{l}
        !let w$_1:\reals$,w$_1':\reals^4\to\reals^3$! = !< x$_1$ * x$_2$, fun (y$_1$,$\ldots$, y$_4$) -> (y$_1$+x$_2$*y$_4$, y$_2$+x$_1$*y$_4$, y$_3$)> in!\\
        !let w$_2:\reals$,w$_2':\reals^5\to\reals^3$! = !< w$_1$*x$_1$, fun (y$_1$,$\ldots$, y$_5$) -> w$_1'$!!(y$_1$+w$_1$*y$_5$, y$_2$, y$_3$, y$_4$+x$_1$*y$_5$)>!\\
        !in w$_2'$(0,0,0,0,1)!
    \end{tabular}
\end{center}
The part !(0,0,0,0,1)! corresponds to initializing the tangent variables in the imperative reverse-mode algorithm. 
After some general partial evaluation techniques that will be detailed further in the paper, we obtain:    

    \begin{center}
            \begin{tabular}{l}
                !let w$_1:\reals$ = x$_1$ * x$_2$ in!\\ 
                !let w$_2:\reals$ = w$_1$ * x$_1$ in!\\
                !let y$_1:\reals$,y$_2:\reals$,y$_3:\reals$,y$_4:\reals$,y$_5:\reals$ = 0,0,0,0,1 in!\\
                !let y$_1':\reals$! != y$_1$+w$_1$*y$_5$ in!\\
                !let y$_4':\reals$! != y$_4$+x$_1$*y$_5$ in!\\
                !(y$_1'$+x$_2$*y$_4'$!!, y$_2$+x$_1$*y$_4'$!!, y$_3$)!
            \end{tabular}
    \end{center}   
This is very close to the SSA form \cite{cytron1989efficient} of what the imperative reverse-mode differentiation of our initial term would be.

\begin{absolutelynopagebreak}
This term can be further optimized via constant propagation and algebraic simplifications to give
    \begin{center}
            \begin{tabular}{{c}}
                !let w$_1:\reals$ = x$_1$ * x$_2$ in!\\ 
                !let w$_2:\reals$ = w$_1$ * x$_1$ in!\\
                !(w$_1$+x$_2$*x$_1$, x$_1$*x$_1$, 0)!
            \end{tabular}
    \end{center}
\end{absolutelynopagebreak}

\end{example}

\begin{example}[Simple operations on arrays]
    On arrays, three simple operations of interest are the dot-product of two vectors, and the product or sum of the elements of a vector.
    In a functional setting, these can be defined as follows:
\begin{center}
    \begin{tabular}{{r c l}}
        !prod(A:$\Array{\reals}{n}$)! &:=& !reduce * 1 A! \\
        !sum(A:$\Array{\reals}{n}$)! &:=& !reduce + 0 A! \\
        !dot(A:$\Array{\reals}{n}$,B:$\Array{\reals}{n}$)! &:=& !reduce + 0 (map2 * A B)!     
    \end{tabular}
\end{center}
where !reduce! is a known fold-left operator for which the function argument is associative. 
It is notably faster to execute than a fold-left, as it is parallel-friendly.

The gradient of each of these expressions with respect to !A! is:
\begin{center}
    \begin{tabular}{{r c l}}
        $\nabla_A$!prod(A)! &:=& !map2 * (scanr * 1 (shift1L A)) (shift1R (scanl * 1 A))! \\
        $\nabla_A$!sum(A)! &:=& !map (x -> 1) A!\\
        $\nabla_A$!dot(A,B)! &:=& !B! 
    \end{tabular}
\end{center}
where
\begin{itemize}
\item !scanl! is the scan-left operator that returns all the intermediate results of fold-left,
\item !scanr! is the scan-right operator that returns all the intermediate results of fold-right,
\item !shift1L [v$_1$,$\ldots$,v$_n$]! is the shift-left operator and returns ![v$_2$,$\ldots$,v$_{n}$]!, and 
\item !shift1R [v$_1$,$\ldots$,v$_n$]! is the shift-right operator and returns ![v$_1$,$\ldots$,v$_{n-1}$]!.
\end{itemize}
These gradients are a few examples among numerous ones which are usually derived by hand, and are here obtained automatically as special cases of our work.
\end{example}   

\begin{figure}
\[
\begin{tikzcd}
    \Source \ar[rrrr,"\text{efficient }\Dsynrevsymbol (Fig.~\ref{fig:direct_diff_macro})"] \ar[dr,"(Fig.~\ref{fig:source_to_unf})"'] &&&& \Target \arrow[r,loop right,"\text{optim }(Fig.~\ref{fig:optim})"] \\
    & Source\UNFSymbol \ar[rr,"\Dsynrevsymbol (Fig.~\ref{fig:diff_macro})"'] && Target\UNFSymbol \ar[ur,"(Fig.~\ref{fig:unf_to_target})"'] & 
\end{tikzcd}
\]
\vspace{-0.5cm}
\caption{Outline of the compilation scheme.}
\vspace{-0.5cm}
\label{fig:outline}
\end{figure}

\subsection{Contributions}
We propose a source-code transformation on a simple purely functional language for purely functional reverse-mode differentiation.
Our transformation consists of a compilation scheme that is outlined in Figure~\ref{fig:outline}.
We make the following contributions:

\begin{itemize}[leftmargin=*]
\item We present our work with a simple yet expressive array-based language (with constructs such as !map2! and !reduce!) in Section~\ref{sec:simplediff}. 
We show how to directly compute an efficient reverse-mode AD for the expressions of this program (top of Figure~\ref{fig:outline}).  
Furthermore, we show how to extend our work to a richer language in Section~\ref{sec:generalization}.
\item One of the key insights behind efficient reverse-mode AD is to only consider unary operators. 
Inspired by this insight and following Intermediate Representations (IR) such as SSA and ANF, we introduce a novel IR, which we call UNF (Section~\ref{sec:unf}).
We introduce an alternative and easier-to-follow compilation pipeline for efficient reverse-mode AD (bottom of Figure~\ref{fig:outline}).
\item We prove complexity guarantees for the programs transformed under reverse-mode AD. Furthermore, we show a list of optimizations that can further improve the constant factors (Section~\ref{sec:complexity}).
\item We prove the correctness of our transformations (top/bottom parts of Figure~\ref{fig:outline}) by defining a denotational semantics of our languages using multicategories and concategories (Section~\ref{sec:correctness}).
\end{itemize}

Next, we recall rudiments of automatic differentiation, forward and reverse-mode differentiation. 
\section{Reverse-mode Automatic Differentiation}
\label{sec:background}

\subsection{Rudiments of AD and dual numbers}

To find the gradient $\nabla f$ of a function $f:\RR\to \RR$ compositionally, need both $f$ and $\nabla f$ when calculating $\nabla (f;g)$.
This is the reason why we are more generally interested in transforming a function $f:\RR^n\to \RR$ into a function
$g:(\RR\times \RR)^n\to \RR\times \RR$ in such a way that for every
$f_1, \dots, f_n:\RR\to\RR$, 
\begin{center}
    $(f_1,\nabla f_1,\dots, f_n,\nabla f_n);g = ((f_1,\dots, f_n);f,\nabla ((f_1, \dots, f_n);f))$.
\end{center}

The idea of AD is to systematically transform a differentiable function $f:\RR^n\to \RR$ into a function $g:\RR^{2n}\to \RR^2$ which captures~$f$ and all its partial derivatives.
An intuition for $g$ is often given in terms of dual numbers. The transformed function operates on pairs of numbers, $(x,x')$, and it is common
to think of such a pair as $x+x'\epsilon$ for an `infinitesimal' $\epsilon$. The main two ways in which AD is performed in practice is by operator overloading or by source code transformation (see e.g. \cite{griewank2008evaluating} Chapter 6). 
Our approach focuses on a source code transformation, which is better fitted for compilation and optimizations.

\subsection{Reverse-mode Automatic Differentiation}

A potential computational problem shows up when one wants to compute the full gradient of a function $\RR^n\to\RR$, for a large $n$. 
Forward-mode only computes one directional derivative, for instance one partial derivative. 
This implies $n$ passes must be performed through the forward derivative to compute the whole gradient.
By using the symmetry in the chain rule, there is a way to compute the whole gradient faster, and this method is reverse-mode automatic differentiation.
Suppose given a function $f=f_n\circ\ldots\circ f_1 :\RR^n\to\RR$. 
Mathematically, forward mode essentially computes $(\J f)v=\J f_n(\J f_{n-1}(\ldots(\J f_1v))\ldots)$ for a direction $v\in\RR^n$. 
Reverse-mode, on the other hand, computes $(\J f)^Tv = \J^Tf_1(\J^T f_{2}(\ldots(\J^T f_nv))\ldots)$ for a vector $v\in\RR$.
In particular, taking $v=1$ computes the gradient of $f$.

Because the computation flow of the function is reversed, the actual implementation of reverse-mode is quite tricky. 
Reverse-mode AD is only well-understood as a source-code transformation on limited programming languages. 
Typically, its implementations on more expressive languages that have features such as higher-order functions and conditionals
make use of define-by-run approaches. These approaches first build a computation graph during runtime, effectively evaluating the program until a straight-line
first-order program is left, and then they evaluate this new program \cite{carpenter2015stan,paszke2017automatic}. 
Such approaches have the severe drawback that the obtained code cannot benefit from existing optimizing compilers.
As such, the implementation process is tedious and labor-intensive as these AD libraries need to be implemented using carefully, manually optimized code.
In addition, some whole-program optimizations that a compiler would detect are completely missed.

\subsection{Inefficiency of purely functional reverse-mode AD}

Following \cite{pearlmutter2008reverse}, there is a simple way to define an inefficient yet purely functional reverse-mode transformation for first-order programs.
We review a slight modification of their transformation, which is also better explained through an example. 

Let us consider the term $x_1:\RR,\ldots,x_n:\RR\vdash \exp(\cos(x_i))$.
To compute its gradient, following the chain rule, we need the Jacobian matrices of $\cos$ at $x_i$ and of $\exp$ at $\cos(x_i)$. 
Instead of considering these as operations from $\RR\to\RR$, we consider them as functions from the whole context. So $\cos$ and $\exp$ are seen as functions $\RR^n\to\RR$.
However, by simply doing this, we lose compositionality. 
So we modify $\cos$ to also return its context. It is now seen as a function $\sem{\cos}:\RR^n\to\RR^{n+1}$.
Similarly, $\exp$ is transformed. It also needs to take the return value of $\sem{\cos}$ as an extra argument, the one it will actually use and not simply return. 
We thus obtain $\sem{\exp}:\RR^{n+1}\to\RR^{n+2}$. Now the jacobians matrices $\J\sem{\cos} \in Mat_{n,n+1}$, $\J\sem{\exp} \in Mat_{n+1,n+2}$ compose nicely.
The same can be done for binary operators and let bindings.
This transforms a first-order program to a function $f:\RR^n\to\RR^{n+m}$ of the form $f_m\circ\ldots\circ f_1$. 
If the original program was of type $\RR$, then the return value of the original program is the last component of $f$.
Following the mathematical presentation of reverse-mode above, the gradient of the original program is then obtained as 
\begin{center}
    \begin{tabular}{r c l}
        $\nabla f$ &=& $\J^Tf(0,\ldots,0,1)= \J^Tf_1(J^Tf_{2}(\ldots(\J^Tf_m(0,\ldots,0,1))\ldots)$
    \end{tabular}
\end{center}

To actually reverse the order of computation needed for this transpose of Jacobians, we use a simple continuation;
$f_i$ is turned into $\Dsynrevsymbol{f_i}:=<f_i, \lambda Y. Y\circ J^Tf_i>$ where $Y:\RR^{n+i-1}\to \RR^n$. 
We recover compositionality by noting that $<f_{i+1}(f_i), (\lambda Y. Y\circ J^Tf_{i+1})(\lambda Y. Y\circ J^Tf_i)>$ reduces to
$<f_{i+1}\circ f_i, \lambda Y. Y\circ J^Tf_i \circ J^Tf_{i+1}>$, and thus by induction we can obtain $<f, \lambda Y. YJ^Tf>$.
By applying the identity continuation $\RR^n\to\RR^n$ on the second component and then the result to $(0,\ldots,0,1)$, 
we have obtained a purely functional way to compute $\nabla f$. 

This purely functional implementation has the following issues in terms of efficiency: 

\smartpara{Issue 1}
If we see the term as a directed graph, reverse mode back propagates from the end of the graph to the starting nodes via every path.
However, it is hard to keep track of all this information in parallel in a functional way.
Mutation is usually key for these cases; 
the imperative version of reverse mode for a binary operator $op(x,y)$ adds $x'\pluseq \partial_1op(x,y);y'\pluseq\partial_2op(x,y)$, 
where $\partial_iop$ are the partial derivatives of $op$.

\smartpara{Issue 2} 
Each $J^Tf_{i+1}$ is a potentially huge matrix if $n$ or $m$ is big.

\smartpara{Issue 3}
We have to carry a continuation and $\beta$-reduce a lot of higher-order functions.

\subsection{Insights for efficient purely functional reverse-mode AD}
\label{subsec:insights}
Overall, we use the following three insights to solve the inefficiency associated with the purely functional implementations of reverse-mode AD.

\smartpara{Insight 1}
One of the key simple ideas that we used was to transform every operator into a unary one. 
This essentially trivializes the computation flow to a line. 
Even if the starting program was a straight-line program, 
having non-unary operators was a source of inefficiency and justified the use of mutation in the first place.
By returning every variable every time, the problem of using a variable several times does not need to be dealt with via mutation. 
This simple idea of transforming a program into essentially a straight line is what our new intermediate representation UNF allows.

\smartpara{Insight 2} 
If we look at $J^Tf_{i+1}$, we notice that this function is almost the identity, except at the last row. 
Even on the last row, if the original term was a unary or binary operator like $cos, exp, +, *$, 
the row is zero except for at most two indices (one for unary operators).
This means we can use a more compact representation $J^T\sem{op(x_i,x_j)}:=\lambda (y_1,\ldots,y_{n+i}).(y_1,\ldots,y_{n+i})+[i]\partial_1op(x_i,x_j)+[j]\partial_2op(x_i,x_j)$, 
where $[k]$ means that the element is added at the $k$-th index of the tuple.

\smartpara{Insight 3}
We know in advance that the Jacobian functions are going to be applied one to another, and we can use partial evaluation to $\beta$-reduce all of these $\lambda$s.
Because each function is almost the identity, we obtain a lot of substitutions of the form $[x/y]$ where both $x$ and $y$ are variables. 
This allows us to drastically reduce the size of $\J^Tf$. In fact, for simple programs, this is basically enough to get an efficient purely functional reverse derivative transformation.
We develop this idea further for a richer language.
 
\section{Simple pure reverse-mode differentiation}
\label{sec:simplediff}

\subsection{Source Language}
\label{sub:sourcelang}

We consider a standard language, and give it a standard call-by-value operational semantics. 
It consists of a first-order functional language with arrays and a few typical second-order array operations. 
The types !T!, terms !e!, and typing rules are given in Figure~\ref{fig:source_grammar}.
We have included a minimal set of array operations for the sake of illustration,  it is not hard to add more.
See Section~\ref{sec:generalization}.

\begin{figure*}[t]
\setlength{\tabcolsep}{0.3em}
\centering
\begin{tabular}{|l c l|l|}
\hline
\multicolumn{3}{|c|}{\textbf{Core Grammar}} & \multicolumn{1}{c|}{\textbf{Description}}\\\hline
!T! & \mbox{::=} & $\reals$ & \grammarcomment{Real Type} \\
& $\mid$ & !T! $\times$ !T! & \grammarcomment{Product Type}\\
& $\mid$ & $\Array{\reals}{n}$ & \grammarcomment{Real Array Type of size $n$}\\
\hline
!e! & \mbox{::=} & !x! $\mid$ !c! & \grammarcomment{Variable \& Real constant}\\
& $\mid$ & !let x = e in e! & \grammarcomment{Variable Binding}\\
& $\mid$ & !< e, e >! $\mid$ $\pi_1$(!e) $\mid$ $\pi_2$(!e) & \grammarcomment{Pair Constructor/Destructor}\\
& $\mid$ & !e op2 e! $\mid$ !op1 e! & \grammarcomment{Binary/Unary operations}\\
& $\mid$ & !map2 (x,y.e) e e! $\mid$ !reduce (x,y.e) e e! & \grammarcomment{Array map2 \& reduce }\\
\hline
\end{tabular} \\ \vspace{0.2cm}
\begin{tabular}{|c|}
    \hline
    \begin{tabular}{c} 
    \\\hline
    $\Gamma \vdash$ !x!: !T!
    \end{tabular}~(!x!: !T!$\in\Gamma$)
    \hspace{0.5cm}
    \begin{tabular}{c}
    $\Gamma \vdash$ !e$_1$!: !T$_1$! $\quad$ $\Gamma \vdash$ !e$_2$!: !T$_2$! \\\hline  
    $\Gamma \vdash$ !<e$_1$,e$_2$>!: !T$_1$! $\times$ !T$_2$!
    \end{tabular}
    \hspace{0.5cm}
    \begin{tabular}{c}
        $\Gamma \vdash$ !e!: !T$_1$! $\times$ !T$_2$! \\\hline  
        $\Gamma \vdash$ $\pi_i$!e!: !T$_i$!
    \end{tabular}~($i\in\{1,2\}$)
\\
    \begin{tabular}{c}
    $\Gamma \vdash$ !e$_1$!: !T$_1$! $\quad$ $\Gamma$, !x!: !T$_1$! $\vdash$ !e$_2$!: !T$_2$! \\\hline
    !let x = e$_1$ in e$_2$!: !T$_2$!
    \end{tabular}
    \hspace{0.5cm}
    \begin{tabular}{c}
        $\Gamma \vdash$ !e!: $\reals$ \\\hline  
        $\Gamma \vdash$ !op1 e! : $\reals$
    \end{tabular}
    \hspace{0.5cm}
    \begin{tabular}{c}
        $\Gamma \vdash$ !e$_1$!: $\reals$ $\quad$ $\Gamma \vdash$ !e$_2$!: $\reals$ \\\hline  
        $\Gamma \vdash$ !e$_1$ op2 e$_2$! : $\reals$
        \end{tabular}
\\
    \begin{tabular}{c}
        $\Gamma$, !x!: $\reals$, !y!: $\reals$ $\vdash$ !e$_1$!: $\reals$ 
        $\quad$ $\Gamma$ $\vdash$ !e$_2$!: $\Array{\reals}{n}$
        $\quad$ $\Gamma$ $\vdash$ !e$_3$!: $\Array{\reals}{n}$
        \\\hline  
        $\Gamma \vdash$ !map2 (x,y.e$_1$) e$_2$ e$_3$!: $\Array{\reals}{n}$
    \end{tabular}
\\
    \begin{tabular}{c} 
        \\\hline
        $\Gamma \vdash$ !c!: $\reals$
    \end{tabular}
    \hspace{0.5cm}
    \begin{tabular}{c}
        !x!: $\reals$, !y!: $\reals$ $\vdash$ !e$_1$!: $\reals$ 
        $\quad$ $\Gamma$ $\vdash$ !e$_2$!: $\reals$
        $\quad$ $\Gamma$ $\vdash$ !e$_3$!: $\Array{\reals}{n}$
        \\\hline  
        $\Gamma \vdash$ !reduce (x,y.e$_1$) e$_2$ e$_3$!: $\reals$
    \end{tabular} \\ \hline
\end{tabular}
\vspace{-0.4cm}
\caption{Grammar and type system of the source language.}
\vspace{-0.4cm}
\label{fig:source_grammar}
\end{figure*}

For scalar operations, we assume a given set of operations, including $+$ and $*$. 
!op1! and !op2! denote respectively a unary and a binary operation on reals. 
These operations represent smooth total functions, 
but again, this can be easily generalized (\S\ref{sec:generalization}).  
Typical examples include !cos, exp, +, *!. 
We use infix notation for binary operators.

The !reduce! operator is a fold-left operator for which the function is assumed to be associative, 
and the provided initial value should be a unit of the binary operation.
It is a well-known parallel-friendly construct. 
For the sake of simplicity in the presentation, the bound function in !reduce! 
is restricted to having no free variables. Furthermore, as our main focus is on AD and not array processing,
we currently restrict to arrays of reals.
We show how to lift these restrictions and how to differentiate some other array operators in Section~\ref{sec:generalization} 
and in the supplementary material.

\subsection{Target Language}

The target language of our source-code transformation is an extension to the source language.
It is a higher-order language, as our purely functional reverse-mode introduces a continuation.
The set of scalar operations should also be closed under partial differentiation. 
In more detail, for every unary scalar operation !op1!, 
we assumed a given operator $\partial$!op1! whose semantics should be the derivative of !op1!, e.g. $\partial$!sin!=!cos!.
Similarly, for every binary operator !op2!, we assume given operators $\partial_1$!op2!, $\partial_2$!op2!, 
respectively representing the first and second partial derivative of !op2!.

Similarly, the target language contains more array primitives, which are used to define the reverse derivatives of array operations. 
Scan left !scanl! is similar to fold left, but also stores all the intermediate results in an array, which it returns.
In the same vein, scan right !scanr! performs a fold left by reading the array from right to left and stores 
the intermediate results in an array from right to left.

Finally, we add two new shift operators !shift1L! and !shift1R!. 
They take an array of size $n$, and respectively forget the first and the last element of the array.
These somewhat ad-hoc operators naturally show up when differentiating fold-like operators.

The grammar for types and terms along with the type system of the target language are presented in Figure~\ref{fig:target_grammar}.
Our lambda abstractions take $n$ arguments, as we are not concerned with partial applications in this work. 
In fact, the lambda abstractions introduced by reverse-mode will be removed during partial evaluation, 
and the notation with lambda abstractions having $n$ bound variables makes reading slightly easier.
We note that we don't actually need the full power of higher-order because we only use lambda abstractions over variables of ground types
and let expressions binding such lambda abstractions. We only need the target language to be second-order.

\begin{figure*}[t]
    \setlength{\tabcolsep}{0.3em}
    \centering
    \begin{tabular}{|l c l|l|}
    \hline
    \multicolumn{3}{|c|}{\textbf{Core Grammar}} & \multicolumn{1}{c|}{\textbf{Description}}\\\hline
    !T! & \mbox{::=} & $\ldots$ & \grammarcomment{Same as Source} \\
    & $\mid$ & !T!$\times \ldots \times$!T! & \grammarcomment{$n$-tuples}\\ 
    & $\mid$ & !T->T! & \grammarcomment{Function Type}\\ 
    \hline
    !e! & \mbox{::=} & $\ldots$ & \grammarcomment{Same as Source}\\
    & $\mid$ & !fun (x$_1$,$\ldots$,x$_n$) -> e! & \grammarcomment{Lambda Abstraction}\\
    & $\mid$ & !e(e$_1\ldots$e$_n$)! & \grammarcomment{Function Application}\\
    & $\mid$ & !<e,$\ldots$,e>! & \grammarcomment{Tuples}\\
    & $\mid$ & !scanl (x,y.e) e e! $\mid$ !scanr (x,y.e) e e! & \grammarcomment{Array scan left and right}\\
    & $\mid$ & !shift1L e! $\mid$ !shift1R e! & \grammarcomment{Array left/right shifting}\\
    \hline
    \end{tabular}\\ \vspace{0.2cm}
    \begin{tabular}{|c|}
    \hline
    \begin{tabular}{c}
        $\Gamma$, !x!: $\reals$, !y!: $\reals$ $\vdash$ !e$_1$!: $\reals$ 
        $\quad$ $\Gamma$ $\vdash$ !e$_2$!: $\reals$
        $\quad$ $\Gamma$ $\vdash$ !e$_3$!: $\Array{\reals}{n}$
        \\\hline  
        $\Gamma \vdash$ !scanl (x,y.e$_1$) e$_2$ e$_3$!: $\Array{\reals}{n+1}$
    \end{tabular}
\\
    \begin{tabular}{c}
        $\Gamma$, !x!: $\reals$, !y!: $\reals$ $\vdash$ !e$_1$!: $\reals$ 
        $\quad$ $\Gamma$ $\vdash$ !e$_2$!: $\reals$
        $\quad$ $\Gamma$ $\vdash$ !e$_3$!: $\Array{\reals}{n}$
        \\\hline  
        $\Gamma \vdash$ !scanr (x,y.e$_1$) e$_2$ e$_3$!: $\Array{\reals}{n+1}$
    \end{tabular}
\\
    \begin{tabular}{c}
        $\Gamma$, !x$_1$!: !G$_1$!, $\ldots$, !xn!: !G$_n$! $\vdash$ !e!: !T! 
        \\\hline  
        $\Gamma \vdash$ !fun (x$_1$,$\ldots$,x$_n$) -> e!: !G$_1\times\ldots\times$G$_n$->T!
    \end{tabular}
\\
    \begin{tabular}{c}
        $\Gamma$ $\vdash$ !e!: !G$_1\times\ldots\times$G$_n$ -> T!
        $\quad$ $\Gamma$ $\vdash$ !e$_i$!: !Gi! for all $1\leq i\leq n$
        \\\hline  
        $\Gamma \vdash$ !e(e$_1\ldots$e$_n$)!: !T!
    \end{tabular}
\\
    \begin{tabular}{c}
        $\Gamma$ $\vdash$ !e!: $\Array{\reals}{n+1}$
        \\\hline  
        $\Gamma \vdash$ !shift1L e!: $\Array{\reals}{n}$
    \end{tabular}
    \hspace{0.5cm}
    \begin{tabular}{c}
        $\Gamma$ $\vdash$ !e!: $\Array{\reals}{0}$
        \\\hline  
        $\Gamma \vdash$ !shift1L e!: $\Array{\reals}{0}$
    \end{tabular}
\\
    \begin{tabular}{c}
        $\Gamma$ $\vdash$ !e!: $\Array{\reals}{n+1}$
        \\\hline  
        $\Gamma \vdash$ !shift1R e!: $\Array{\reals}{n}$
    \end{tabular}
    \hspace{0.5cm}
    \begin{tabular}{c}
        $\Gamma$ $\vdash$ !e!: $\Array{\reals}{0}$
        \\\hline  
        $\Gamma \vdash$ !shift1R e!: $\Array{\reals}{0}$
    \end{tabular} 
    \\ 
    \begin{tabular}{c}
        for all $i$, $\Gamma$ $\vdash$ !e$_i$!: !T$_i$!
        \\\hline  
        $\Gamma \vdash$ !<e$_1$,$\ldots$,e$_n$>!: !T$_1\times\ldots\times$T$_n$!
    \end{tabular} \\
    \hline
    \end{tabular}
    \vspace{-0.4cm}
    \caption{Grammar and type system of the target language.}
    \label{fig:target_grammar}
    \vspace{-0.4cm}
    \end{figure*}

\subsection{Macro for pure reverse mode transformation} 
\label{sub:Macro for pure reverse mode transformation}

In Figure~\ref{fig:direct_diff_macro} we present our direct transformation from the source language to the target language for pure reverse mode differentiation.
Given a term $\Gamma\vdash e : \reals$, we can compute its gradient $\grad_\Gamma e$ from a particular instance of 
$\directD{\rho}{\Gamma}{Y}(e)$. First, $\rho, Y$ specifies if we want to compute the whole gradient regarding the variables from $\Gamma$ or a subset of it.
For a subset $\rho\subset \Gamma$, one chooses $Y$ to be the projection function sending a variable 
$x_i:G$ of $\Gamma$ to $x_i$ if it belongs to $\rho$ and to $0_G$ otherwise.
In particular, we take $Y=Id_\Gamma$ to compute the whole gradient.
Next, the gradient will be given by the second part of the pair $\directD{\rho}{\Gamma}{Y}(e)$, 
and we need to initialize the tangent variables. All of them are set to $0$, except the one corresponding to the output value of !e!, 
which we initialize at $1$ to run the backpropagation. 
All in all, we compute the gradient via $\pi_2\directD{\rho}{\Gamma}{Id_\Gamma}(e)(0_\Gamma,1)$.

\begin{figure}
\begin{tabular}{|l c l|}
    \hline
    !let x$_{1}$=e$_{1}$,$\ldots$,x$_n$=e$_n$!  & \multirow{2}{*}{=} & !let x$_{1}$=e$_{1}$ in let x$_{2}$=e$_{2}$ in $\ldots$! \\
    !in e! && !let x$_n$ = e$_n$ in e!\\ \hline
    $Id_\Gamma$ \quad\quad\quad ($\Gamma \, = \, x_1:A_1,\ldots,x_n:A_n)$ & = & !fun! $(y_1:A_1,\ldots,y_n:A_n)$! -> !$(y_1,\ldots,y_n)$ \\ \hline
    $\grad_\Gamma (e)$ \quad\quad\hspace{0.6em}($\Gamma\vdash e:\RR$) & = & $\pi_2\directD{\Gamma}{\Gamma}{Id_\Gamma}(e)(0_\Gamma,\underline{1})$ \\ \hline
    !pos(x)! \quad(!x!$\in\Gamma=x_:A_1,\ldots,x_n:A_n)$ & = & position $i$ of !x! in $\Gamma$ \\ \hline
    ![i]e! \quad\quad(!e! of ground type $G_i$) & \multirow{2}{*}{=} &  $(0_{G_1},\ldots,0_{G_{i-1}},e,0_{G_{i+1}},\ldots,0_{G_n})$ \\
    && ($G_j$ are ground types) \\ \hline
    $\grad_{\Gamma_1}$(!e!) \quad\quad($\Gamma=\Gamma_1,\Gamma_2$, $|\Gamma_1|=k$)& = & $(e_1,\ldots,e_k)$ \\
    $\grad_{\Gamma_2}$(!e!) & = & $(e_{k+1},\ldots,e_n)$ \\
    && when $\grad_{\Gamma}($!e!$) = (e_1,\ldots,e_n)$ \\ \hline
    !map (x. e) A! & = & !map2 (x y. e) A A! \\ \hline
    !ZerosLike(A)! & = & !map (x.0) A! \\ \hline
    !OnesLike(A)! & = & !map (x.1) A! \\ \hline
    !ZerosLike(n)! & = & !map (x.0) (\_:$\Array{\reals}{n}$)! \\ \hline
    !OnesLike(n)! & = & !map (x.1) (\_:$\Array{\reals}{n}$)! \\ \hline
\end{tabular}
\vspace{-0.4cm}
\caption{Notations used for reverse-mode AD transformation. $\_$ represents a dummy variable added to the context.
}
\vspace{-0.4cm}
\label{tbl:notation:one}
\end{figure}

\begin{notation}
We now introduce several notations which are useful when defining the transformation for reverse-mode in Figure~\ref{tbl:notation:one}.
Ground types are defined inductively by
\begin{center}
    $G::= \RR \mid G\times \ldots \times G \mid \Array{\RR}{n}$
\end{center} 

$(\RR,+,\underline{0})$ forms a monoid, and this monoid structure extends canonically 
to a monoid structure $(G,\widehat{+},0_G)$ for every ground type $G$. 
It is defined inductively on $G$ as follows

\begin{tabular}{l c l}
    $0_\RR$  & $\defeq$ & $\underline{0}$ \\
    $0_{G_1\times \ldots \times G_n}$ & $\defeq$ &  $< O_{G_1},\ldots, 0_{G_n} >$ \\
    $0_{\Array{\RR}{n}}$& $\defeq$ & !ZerosLike(n)!
\end{tabular}
\begin{tabular}{l c l}
    $a\widehat{+}_\RR b$ & $\defeq$ & $ a+b$ \\
    $(a_1,\ldots,a_n)\widehat{+}_{G_1\times\ldots\times G_n}$ & $\defeq$ & $(a_1\widehat{+}_{G_1}b_1,\ldots,a_n\widehat{+}_{G_n}b_n)$ \\
    $(b_1,\ldots,b_n)$ && \\
    $A\widehat{+}_{\Array{\RR}{n}}B $ & $\defeq$ & !map2! + $A$ $B$ 
\end{tabular}

A ground context is a context only containing variables of ground type.
The previous monoid structure again extends canonically to ground contexts $\Gamma$ by defining
$0_{x1:G1,\ldots,x_n:G_n}\defeq 0_{G_1},\ldots,0_{G_n}$ and 
$a_1,\ldots,a_n\widehat{+}_{x1:G_1,\ldots, x_n:G_n}b_1,\ldots,b_n\defeq a_1\widehat{+}_{G_1}b_1,\ldots,a_n\widehat{+}_{G_n}b_n$.

\end{notation}

\begin{example}
    The reverse-mode transformation of the terms from the introduction are given by

    \begin{tabular}{c l}
        &$\directD{\rho}{\Gamma}{Y}$(!let w$_1$ = x$_1$ * x$_2$ in let w$_2$ = w$_1$ * x$_1$ in w$_2$!) \\
        =& !let w$_1$,Y$_1$=!\\
        & \quad\quad !let y$_{11}$,Y$_{11}$= <x$_1$, fun (y$_1$,y$_2$,y$_3$,z) -> Y(y$_1$+z,y$_2$,y$_3$)> in! \\
        & \quad\quad !let y$_{12}$,Y$_{12}$= <x$_{2}$, fun (y$_{1}$,y$_{2}$,y$_{3}$,y$_{4}$,z) -> Y$_{11}$(y$_{1}$,y$_{2}$+z,y$_{3}$,y$_{4}$)> in! \\
        & \quad\quad !<y$_{11}$ * y$_{12}$, fun (y$_{1}$,y$_{2}$,y$_{3}$,z) -> Y$_{12}$(y$_{1}$,y$_{2}$,y$_{3}$,y$_{12}$*z,y$_{11}$*z) > in! \\
        & !let w$_{2}$,Y$_{2}$=!\\
        & \quad\quad !let y$_{21}$,Y$_{21}$= <w$_{1}$, fun (y$_{1}$,y$_{2}$,y$_{3}$,y$_{4}$,z) -> Y$_1$(y$_{1}$,y$_{2}$,y$_{3}$,y$_{4}$+z)> in! \\
        & \quad\quad !let y$_{22}$,Y$_{22}$= <x$_{1}$, fun (y$_{1}$,y$_{2}$,y$_{3}$,y$_{4}$,y$_{5}$,z) -> Y$_{21}$(y$_{1}$+z,y$_{2}$,y$_{3}$,y$_{4}$,y$_{5}$)> in! \\
        & \quad\quad !<y$_{21}$ * y$_{22}$, fun (y$_{1}$,y$_{2}$,y$_{3}$,y$_{4}$,z) -> Y$_{22}$(y$_{1}$,y$_{2}$,y$_{3}$,y$_{4}$,y$_{22}$*z,y$_{21}$*z) > in! \\
        & !let y,Y$_{3}$= <w, fun (y$_{1}$,y$_{2}$,y$_{3}$,y$_{4}$,z) -> Y$_{2}$(y$_{1}$,y$_{2}$,y$_{3}$,y$_{4}$+z)> in! \\
        & !<y, fun (y$_{1}$,y$_{2}$,y$_{3}$,z) -> Y$_{3}$(y$_{1}$,y$_{2}$,y$_{3}$,0,z) >!
    \end{tabular}
    \medskip

    \begin{tabular}{c l}
        &$\directD{\rho}{\Gamma}{Y}$(!prod(A)!) \\
        =& !let y,Y$_{1}$= <1, fun (X,z) -> Y(X)> in! \\
        & !let B,Y$_{2}$= <A, fun (X,x,Z) -> Y$_{1}$(X+Z,x)> in!\\
        & !let A$_{0}$= shift1R (scanl * y B) in!\\
        & !let A$_{1}$= shift1L (map2 (a,b.b) A$_{0}$ B) in!\\
        & !let A$_{2}$= map2 (a,b.a) A$_{0}$ B in!\\
        & !let A$_{3}$= scanr * 1 A$_{1}$ in!\\
        & !<prod(B), fun (X,z) -> Y$_{2}$(X,0,map2 (a,b. a*z*b) A$_{2}$ A$_{3}$)>! 
    \end{tabular}
\end{example}

The idea is that $\rho$ represents the return type of the derivative part, which should be !A$_{1} \times \ldots \times$ A$_n$! 
if we want the whole gradient of a term !e! in context !$\Gamma = \;$x$_{1}$:A$_{1}$,$\ldots$,x$_n$:A$_n$!.
The subscript $\Gamma$ denotes the current context, 
which is locally augmented, for instance when differentiating a !let! rule.
For non-unary operations, we differentiate the arguments from the left to the right and add their derivatives to the current stack, 
which is modeled by the continuation $Y$. 
Importantly for performance, each continuation variable $Y$ is only used once.

We have the following typing lemma for $\directD{\rho}{\Gamma}{Y}$, routinely proved by induction on derivation of $\Gamma\vdash$!e:A!.
\begin{lemma}[Typing $\directD{\rho}{\Gamma}{Y}$]
    If $\Gamma \vdash$ !e: A!, then $\Gamma$!,Y:!$\Gamma\to \rho \vdash \directD{\rho}{\Gamma}{Y}$!(e): !$\directD{\rho}{\Gamma}{Y}$!(A)!.
\end{lemma}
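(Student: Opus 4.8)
The plan is to prove the typing lemma by structural induction on the derivation of $\Gamma \vdash$ !e: A!, following exactly the case structure of the definition of $\directD{\rho}{\Gamma}{Y}$ in Figure~\ref{fig:direct_diff_macro}. The statement to establish is that whenever $\Gamma \vdash$ !e: A! and we are given a fresh variable $Y : \Gamma \to \rho$ (i.e., $Y$ has the function type $A_1 \times \cdots \times A_n \to \rho$ where $\Gamma = x_1{:}A_1,\ldots,x_n{:}A_n$), then $\Gamma, Y{:}\Gamma\to\rho \vdash \directD{\rho}{\Gamma}{Y}$!(e)! has type $\directD{\rho}{\Gamma}{Y}$!(A)!, where the latter is the pair type $A \times (\directD{\rho}{\Gamma}{Y}$-type of the continuation$)$ — namely something of the form $A \times (\Gamma \to \rho)$ at ground type, refined appropriately in the let-case where the context grows.

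First I would set up the induction with a clear statement of what $\directD{\rho}{\Gamma}{Y}$!(A)! is as a type: for a term of type $A$ in context $\Gamma$, the transformed term is a pair whose first component has type $A$ and whose second component is a continuation that, roughly, takes a tuple of tangents for the current stack plus a tangent $z{:}A$ for the output and returns an element of $\rho$. The base cases (variable !x!, constant !c!) are immediate: the first component is !x! (resp. !c!), and the second component is a lambda that reindexes its arguments and feeds them to $Y$, which typechecks by the given hypothesis $Y:\Gamma\to\rho$ together with the typing rules for !fun! and application in Figure~\ref{fig:target_grammar}. For the pair and projection cases, and for the binary/unary operation cases, I would observe that the transformation threads intermediate continuations $Y_{i1}, Y_{i2}, \ldots$ through nested !let! bindings; each such continuation is introduced at a type dictated by the inductive hypothesis applied to a subterm, and is consumed exactly once, so the context extension is well-formed and the final pair typechecks using the derivatives $\partial$!op1!, $\partial_1$!op2!, $\partial_2$!op2! which are assumed to have the appropriate real types. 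The !let!-case is where the context genuinely grows: differentiating !let x = e$_1$ in e$_2$! one applies the IH to $e_1$ in context $\Gamma$ and to $e_2$ in context $\Gamma, x{:}A_1$, and one must check that the continuation types line up across the boundary — the continuation produced for $e_2$ expects tangents for $\Gamma, x$, and it is composed with the continuation for $e_1$ which supplies the extra slot.

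The cases requiring the most care are !map2! and !reduce!, since these invoke the target-language array primitives !scanl!, !scanr!, !shift1L!, !shift1R! whose typing rules (in Figure~\ref{fig:target_grammar}) change array sizes by one. Here I would carefully track array-length indices: for instance in the !prod! example, !scanl * y B! has type $\Array{\reals}{n+1}$, so !shift1R! of it has type $\Array{\reals}{n}$, matching the length of !B!, which is what makes the subsequent !map2! well-typed. I expect the main obstacle to be purely bookkeeping: making sure the auxiliary notations from Figure~\ref{tbl:notation:one} — $[i]e$, $0_G$, !ZerosLike!, !OnesLike!, and the monoid operation $\widehat{+}$ — are each assigned their stated ground types, and that the many nested !let!-bound lambdas in the operation and array cases each receive a context that already contains the continuation variables they reference. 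None of this is conceptually hard; the care is entirely in not dropping an index or a context entry. I would therefore present the variable, constant, and pair cases in full and then state that the operation, !let!, !map2!, and !reduce! cases follow the same pattern, exhibiting the array-length arithmetic explicitly for !map2!/!reduce! since that is the only place a reader might worry the sizes fail to match.
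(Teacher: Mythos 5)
Your proposal is correct and takes essentially the same approach as the paper, which simply states that the lemma is ``routinely proved by induction on derivation of $\Gamma\vdash e:A$''; your case-by-case elaboration (base cases, the context extension in the \texttt{let} case, and the array-length bookkeeping for \texttt{map2}/\texttt{reduce}) is a faithful unfolding of that induction. No gaps.
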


\begin{figure*}[t]
    \setlength\tabcolsep{1.5pt}
    \small
    \begin{tabular}{|r c l|}
    \hline
        $\ad{\Gamma}{Y}$(!A!) &=& !A $\times$ ($\Gamma$ $\times$ A->$\rho$)!\\ & & \\
        $\ad{\Gamma}{Y}$(!c!) &=& 
            !<c, fun ($\boldsymbol{x}$,z) ->! \\
            && !Y($\boldsymbol{x}$) >!\\
        $\ad{\Gamma}{Y}$(!x!) &=& 
            !<x, fun ($\boldsymbol{x}$,z) ->! \\
            && !Y($\boldsymbol{x}$!$\widehat{+}$![pos(x)]z) >!\\
        $\ad{\Gamma}{Y}$(!let x:A = e$_{1}$ in e$_{2}$!) &=& 
            !let x,Y$_{1}$ = !$\ad{\Gamma}{Y}$!(e$_{1}$) in! \\
            &&!let y,Y$_{2}$ = !$\ad{\Gamma,x:A}{Y_1}$!(e$_{2}$) in!\\ 
            &&!<y, fun ($\boldsymbol{x}$,z) -> Y$_{2}$($\boldsymbol{x}$,!$0_{A}$!,z)>!\\
        $\ad{\Gamma}{Y}$(!< e$_{1}$, e$_{2}$ >!) &=&
            !let y$_{1}$,Y$_{1}$ = !$\ad{\Gamma}{Y}$!(e$_{1}$) in! \\
            &&!let y$_{2}$,Y$_{2}$ = !$\ad{\Gamma,x_1}{Y_1}$!(e$_{2}$) in!\\
            &&!< <y$_{1}$,y$_{2}$>, fun ($\boldsymbol{x}$,z) -> !\\
            &&!Y($\boldsymbol{x}$,!$\pi_1$!(z),!$\pi_2$!(z)) >!\\ 
        $\ad{\Gamma}{Y}$($\pi_1$(!e!:A$\times$B)) &=&
            !let x,Y$_{1}$ = !$\ad{\Gamma}{Y}$!(e) in! \\
            && !<!$\pi_1$!x, fun ($\boldsymbol{x}$,z) -> Y($\boldsymbol{x}$,(z,!$0_B$!))>! \\
        $\ad{\Gamma}{Y}$($\pi_2$(!e!:A$\times$B)) &=&
            !let x,Y$_{1}$ = !$\ad{\Gamma}{Y}$!(e) in! \\
            && !<!$\pi_2$!x, fun ($\boldsymbol{x}$,z) -> Y($\boldsymbol{x}$,(!$0_A$,z!))>! \\
        $\ad{\Gamma}{Y}$(!op1 e!) &=&  
            !let x,Y$_{1}$ = !$\ad{\Gamma}{Y}$!(e) in! \\
            && !<op1 x, fun ($\boldsymbol{x}$,z) -> ! \\
            && !Y($\boldsymbol{x}$,!$\partial$!op1(x)*z) >! \\
        $\ad{\Gamma}{Y}$(!e$_{1}$ op2 e$_{2}$!) &=& 
            !let x$_{1}$,Y$_{1}$ = !$\ad{\Gamma}{Y}$!(e$_{1}$) in! \\
            && !let x$_{2}$,Y$_{2}$ = !$\ad{\Gamma,x_1}{Y_1}$!(e$_{2}$) in! \\
            && !<x$_{1}$ op2 x$_{2}$, fun ($\boldsymbol{x}$,z) ->! \\
            && !Y$_{2}$($\boldsymbol{x}$,!$\partial_1$!op2(x$_{1}$,x$_{2}$)*z,!$\partial_2$!op2(x$_{1}$,x$_{2}$)*z>! \\
        $\ad{\Gamma}{Y}$(!map2 (x,y.e$_{1}$) e$_{2}$ e$_{3}$!) &=&  
            !let A,Y$_{1}$ = !$\ad{\Gamma}{Y}$!(e$_{2}$) in! \\
            && !let B,Y$_{2}$ = !$\ad{\Gamma,A}{Y_1}$!(e$_{3}$) in! \\
            && !<map2 (x,y.e$_{1}$) A B, fun ($\boldsymbol{x}$,Z) -> !\\
            && !let G = (map2 * Z!\\
            && !  (map2 (a,b.(!$\grad_\Gamma$!e$_{1}$)[a/x.b/y]) A B)) in! \\
            && !Y$_{2}$( $\boldsymbol{x}$ $\widehat{+}$(reduce $\widehat{+}$ $\widehat{0}$ G),!\\
            && !map2 * (map2 (a,b.(!$\grad_{\{x\}}$!e$_{1}$)[a/x,b/y]) A B) Z,!\\
            && !map2 * (map2 (a,b.(!$\grad_{\{y\}}$!e$_{1}$)[a/x,b/y]) A B) Z)>!\\
        $\ad{\Gamma}{Y}$(!reduce (x,y.e$_{1}$) e$_{2}$ e$_{3}$!) &=&
            !let y$_{1}$,Y$_{1}$ = !$\ad{\Gamma}{Y}$!(e$_{2}$) in! \\
            && !let A,Y$_{2}$ = !$\ad{\Gamma,y_1}{Y_1}$!(e$_{3}$) in! \\
            && !let A$_{0}$=shift1R (scanl (x,y.e$_{1}$) y$_{1}$ A) in! \\
            && !let A$_{1}$=shift1L (map2! \\
            && !   (a,b.(!$\grad_{\{x\}}$!e$_{1}$)[a/x,b/y]) A$_{0}$ A) in! \\
            && !let A$_{2}$=map2 (a,b.(!$\grad_{\{y\}}$!)e$_{1}$[a/x,b/y]) A$_{0}$ A in! \\
            && !let A$_{3}$=scanr * 1 A$_{1}$ in! \\
            && !<reduce (x,y.e$_{1}$) y$_{1}$ A, fun ($\boldsymbol{x}$,z) ->! \\
            && !Y$_{2}$($\boldsymbol{x}$, map2 (x,y. x*y*z) A$_{2}$ A$_{3}$)>! \\ \hline
        \end{tabular}
    \vspace{-0.4cm}
    \caption{Reverse-mode transformation from source to target language. We write 
    $\ad{\Gamma}{Y}$ instead of $\directD{\rho}{\Gamma}{Y}$, 
 and $\boldsymbol{x}$ instead of \texttt{x$_{1}$,$\ldots$,x$_n$}
    to alleviate notation.}
    \label{fig:direct_diff_macro}    
 
\end{figure*}

Admittedly, the transformation presented in this section may be hard to read, and it is not straightforward to show its correctness directly. 
Next, we decompose this transformation into three simpler steps via a novel intermediate representation.
\section{Unary Normal Form}
\label{sec:unf}

Following the intuition highlighted in Section~\ref{subsec:insights}, 
we present a new language, which we call Unary Normal Form (UNF). 
There is a Source UNF for the Source language and a Target UNF for the Target language.
Just as Target is an extension of Source, Target UNF is an extension of Source UNF.
They serve as intermediate representations in the reverse-mode compilation pipeline (see Figure~\ref{fig:outline}).

One takeaway from this paper is that efficient pure reverse-mode is complicated because of the several things it does.
It goes through the term and keeps a store for the gradients to be updated. 
It keeps track of the new bound variables found while going through the term.
The updates to the gradient are done via pre-composition and not post-composition.
Due to the linearity requirement in the usage of the continuation variable, 
it needs to process like a call-by-value evaluation and evaluate the arguments of operators in a certain order before 
dealing with the operator itself.

By introducing UNF, we are decoupling some of the problems.
UNF introduces the call-by-value evaluation of the arguments before evaluating the arguments and a good management of the environment.
Differentiation on UNF deals with the purely functional store and gradients update via pre-composition.
Finally, going back from UNF to Target uses the stored information in UNF to make sure the Jacobians are computed efficiently. 

\subsection{Source UNF} 
\label{sub:Source UNF}

Intuitively, a term consists of a composition of unary operators. 
To compile our source language to this intermediate representation, 
we need to remember some information about the context of the initial term. 
The grammar of Source UNF is given in Figure~\ref{fig:unf_source_grammar}. 

\begin{figure*}[t]
    \setlength{\tabcolsep}{0.3em}
    \centering
    \begin{tabular}{|l c l|l|}
    \hline
    \multicolumn{3}{|c|}{\textbf{Core Grammar}} & \multicolumn{1}{c|}{\textbf{Description}}\\\hline
    !T! & \mbox{::=} & ![A1,$\ldots$,An]! & \grammarcomment{Lists of types from source} \\
    \hline
    !e! & \mbox{::=} & !var!$_{T;i}$ & \grammarcomment{Variable}\\
    & $\mid$ & !op!$_{T;n}$ & \grammarcomment{Operations, for $0\leq n\leq 2$}\\
    & $\mid$ & !pair!$_{T;A\times B}$ & \grammarcomment{Pairing a pair of variables}\\
    & $\mid$ & !proj!$_{T_1;T_2;T_3}$  & \grammarcomment{Projection}\\
    & $\mid$ & !e!$\comp$!e! & \grammarcomment{Sequential composition}\\
    & $\mid$ & !map2!$_{T;x,y.e}$ & \grammarcomment{Map2}\\
    & $\mid$ & !reduce!$_{T;x,y.e;e}$ & \grammarcomment{Reduce}\\
    \hline
    \end{tabular}\\ \vspace{0.2cm}
    \begin{tabular}{|c|}
    \hline
    \begin{tabular}{c} 
    \\\hline
    !T! $\vdash$ !var!$_{T;i}$: !T,A$_i$!
    \end{tabular}~(!T=A$_{1}$,$\ldots$,A$_n$!)
    \hspace{0.5cm}
    \begin{tabular}{c}
        \\\hline
        !T,!$\reals^{\times(n)}$ $\vdash$ !op!$_{T;n}$ : !T,!$\reals^{\times(n+1)}$
    \end{tabular}~(!T=A$_{1}$,$\ldots$,A$_n$!)
\\
    \begin{tabular}{c}
    !T! $\vdash$ !e$_{1}$!: !T,A! $\quad$ !T,A! $\vdash$ !e$_{2}$!: !T,A,B! \\\hline
    !T! $\vdash$ !e$_{1}$!$\comp$!e$_{2}$!: !T,A,B!
    \end{tabular}~(!T=A$_{1}$,$\ldots$,A$_n$!)
\\
    \begin{tabular}{c}
        \\\hline
        !T,A,B! $\vdash$ !pair!$_{T;A\times B}$ : !T,A!$\times$!B!
    \end{tabular}~(!T=A$_{1}$,$\ldots$,A$_n$!)
\\
    \begin{tabular}{c}
        \\\hline
        !T$_{1}$,T$_{2}$,T$_{3}$! $\vdash$ !proj!$_{T_1;T_2;T_3}$ : !T$_{1}$,T$_{3}$!
    \end{tabular}
\\
    \begin{tabular}{c}
        !x$_{1}$:A$_{1}$,$\ldots$,x$_n$:A$_n$,x:$\reals$,y:$\reals$! $\vdash$ !e:$\reals$!\quad in Source Language
        \\\hline  
        !T,$\Array{\reals}{n},\Array{\reals}{n}$! $\vdash$ !map2$_{T; x,y.e}$: T,$\Array{\reals}{n}$,$\Array{\reals}{n}$,$\Array{\reals}{n}$!
    \end{tabular}~(!T=A$_{1}$,$\ldots$,A$_n$!)
\\
    \begin{tabular}{c}
        !x:$\reals$,y:$\reals$! $\vdash$ !e: $\reals$! \quad !x$_{1}$:A$_{1}$,$\ldots$,x$_n$:A$_n$! $\vdash$ !e$_{2}$!:$\reals$ \quad in Source Language
        \\\hline  
        !T,$\Array{\reals}{n}$! $\vdash$ !reduce$_{T; x,y.e_1; e_2}$: T,$\Array{\reals}{n}$,$\reals$!
    \end{tabular}~(!T=A$_{1}$,$\ldots$,A$_n$!) \\ \hline
\end{tabular}
    \vspace{-0.4cm}
    \caption{Grammar and type system of the source UNF.}
    \label{fig:unf_source_grammar}
    \vspace{-0.4cm}
    \end{figure*}

There are a few notable things in this syntax. 
Every primitive is indexed by a list of types from the source language, 
which corresponds to a context being carried. We will often elide the brackets ![!,!]!.
Every constant, unary, or binary operator from Source has a corresponding $n$-ary operator !op$_{T,n}$! in UNF.
Sequential composition is denoted by !;! and !e$_{1}$;e$_{2}$! means that !e$_{1}$! should be performed, and then !e$_{2}$!.
The array operators !map2! and !reduce! have extra indices that represent well-formed terms in Source.
The language also contains a projection !proj! which forgets some of the elements, 
and a pairing operator !pair! which takes the last pair of elements and returns the pairing of these elements.

A judgment in Source UNF is a triple !(T$_1$,e,T$_2$)! where !T$_1$! and !T$_2$! are lists of types of Source, and !e! is an expression of the Source UNF.
The typing rules are detailed in Figure~\ref{fig:unf_source_grammar}.
$\reals^{\times n}$ is a notation for $\reals,\ldots,\reals$ with $n$ factors $\reals$. 
When !T$_1$! and !T$_2$! are lists of types and !A! a type, the operator !,! denotes snoc in !T$_1$,A! and denotes append in !T$_1$,T$_2$!.

\begin{example}
    \label{exm:unf}
    The term in Source UNF !cos$_{\reals,\reals}$ $\comp$ pair$_{\reals,\reals; \reals\times\reals}$! 
    intuitively represents a term \\
    !x$_1$:$\reals$,x$_2$:$\reals$,x$_3$:$\reals$ $\vdash$ let (x$_1$,x$_2$,x$_3$,x$_4$)=(x$_1$,x$_2$,x$_3$,cos(x$_3$)) in (x$_1$,x$_2$,<x$_3$,x$_4$>)!.\\
    The operator $\comp$ can more generally be understood as a let binding of a tuple, and primitives only act on the last parts of the tuple, always also returning their whole context.
\end{example}

\subsection{Target UNF} 
\label{sub:Target UNF}

Target UNF is an extension of Source UNF. 
Its types are lists of types of Target. 
Higher-order types are needed for the continuation introduced by reverse-mode. 
In addition, it contains several new primitives which represent the transpose Jacobian of the primitives of Source UNF.
Given a type !T$\defeq$ [A$_1$,$\ldots$,A$_n$]!, we write \underline{T} for the type !A$_1$ $\times\ldots\times$ A$_n$! of Target. 
Target UNF has an internal composition $\icomp$. 
!e$_{1}$ $\icomp$ e$_{2}$! represents precomposition of a curried function !e$_{1}$! by a term !e$_{2}$!.
It is used by the reverse-mode transformation to pre-compose the continuation term by a transpose Jacobian.
The grammar and type system of target UNF is given in Figure~\ref{fig:unf_target_grammar}.

\begin{figure*}[t]
    \setlength{\tabcolsep}{0.3em}
    \centering
    \begin{tabular}{|l c l|l|}
    \hline
    \multicolumn{3}{|c|}{\textbf{Core Grammar}} & \multicolumn{1}{c|}{\textbf{Description}}\\\hline
    !T! & \mbox{::=} & ![A$_{1}$,$\ldots$,A$_n$]! & \grammarcomment{Lists of types from target} \\
    \hline
    !e! & \mbox{::=} & $\ldots$ & \grammarcomment{Same as source UNF}\\
    & $\mid$ & !J!$^T$!var!$_{T;i}$ & \grammarcomment{Jacobian for variable}\\
    & $\mid$ & !J!$^T$!op!$_{T;n}$ & \grammarcomment{Jacobian for operation, $0\leq n\leq 2$}\\
    & $\mid$ & !J!$^T$!pair$_{T;A\times B}$! & \grammarcomment{Jacobian for pairing}\\
    & $\mid$ & !J!$^T$!proj$_{T_1;T_2;T_3}$! & \grammarcomment{Jacobian for projection}\\
    & $\mid$ & !J!$^T$!map2!$_{T;x,y.e}$ & \grammarcomment{Jacobian for map2}\\
    & $\mid$ & !J!$^T$!reduce!$_{T;x,y.e;e}$ & \grammarcomment{Jacobian for reduce}\\
    & $\mid$ & !<e, e>! & \grammarcomment{Term pairing}\\
    & $\mid$ & !e! $\icomp$ !e! & \grammarcomment{Internal function composition}\\
    \hline
    \end{tabular} \\ \vspace{0.2cm}
    \begin{tabular}{|c|}
    \hline
    \begin{tabular}{c} 
        \\\hline
        !T,A$_i$! $\vdash$ !J!$^T$!var!$_{T;i}$: !T!
        \end{tabular}~(!T=A$_1$,$\ldots$,A$_n$!)
        \hspace{0.5cm}
        \begin{tabular}{c}
            \\\hline
            !T,!$\reals^{\times(n+1)} \vdash$ !J!$^T$!op!$_{T;n}$ : !T,!$\reals^{\times(n)}$
        \end{tabular}~(!T=A$_1$,$\ldots$,A$_n$!)
\\
        \begin{tabular}{c}
            \\\hline
            !T,A$\times$B! $\vdash$ !J!$^T$!pair!$_{T;A\times B}$ : !T,A,B!
        \end{tabular}~(!T=A$_1$,$\ldots$,A$_n$!)
\\
        \begin{tabular}{c}
            \\\hline
            !T$_1$,T$_3$! $\vdash$ !J!$^T$!proj!$_{T_1;T_2;T_3}$ : !T$_1$,T$_2$,T$_3$!
        \end{tabular}
\\  
        \begin{tabular}{c}
            !x$_1$:A$_1$,$\ldots$,x$_n$:A$_n$,x:!$\reals$!,y:!$\reals$ $\vdash$ !e!: $\reals$ \quad in Source Language
            \\\hline  
            !T,!$\Array{\reals}{n},\Array{\reals}{n},\Array{\reals}{n} \vdash$ !J!$^T$!map2!$_{T; x,y.e}$: !T,!$\Array{\reals}{n}$,$\Array{\reals}{n}$
        \end{tabular}~(!T=A$_1$,$\ldots$,A$_n$!)
\\  
        \begin{tabular}{c}
            !x:!$\reals$!,y:!$\reals$ $\vdash$ !e$_1$!: $\reals$ \quad !x$_1$:A$_1$,$\ldots$,x$_n$:A$_n$! $\vdash$ !e$_2$!:$\reals$ \quad in Source Language
            \\\hline  
            !T,!$\Array{\reals}{n},\reals \vdash$ !J!$^T$!reduce!$_{T; x,y.e_1; e_2}$: !T,!$\Array{\reals}{n}$
        \end{tabular}~(!T=A$_1$,$\ldots$,A$_n$!)
\\
        \begin{tabular}{c}
            !T! $\vdash$ !e$_1$: T$_1$!  \quad !T! $\vdash$ !e$_1$: T$_2$!
            \\ \hline
            !T! $\vdash$ !<e$_1$, e$_2$>: T$_1$,T$_2$!
        \end{tabular}
        \hspace{0.5cm}
        \begin{tabular}{c}
            !T1! $\vdash$ e$_1$: [\underline{T$_3$} $\to$ B]  \quad !T$_2$ $\vdash$ e$_2$: T$_3$! 
            \\ \hline
            !T1! $\vdash$ !e$_1$! $\icomp$ !e$_2$!:[\underline{T$_2$}! $\to$  B!]
        \end{tabular}\\\hline
    \end{tabular}
    \vspace{-0.4cm}
    \caption{Grammar and type system of Target UNF.}
    \label{fig:unf_target_grammar}
    \vspace{-0.4cm}
\end{figure*}

\subsection{Simple reverse mode transformation} 
\label{sub:Simple reverse mode transformation}

We are now able to present a simpler transformation for purely functional reverse-mode from Source UNF to Target UNF.
The differentiation transformation is given in Figure~\ref{fig:diff_macro}.
To a list of types, the transformation adds a higher-order type. 
It is to be understood as the continuation for the pure storage of the gradient.
On primitive constants of Source UNF, it returns a pair. 
The first part is the initial term (pre-composed by a projection because it is not using the continuation variable).
The second part returns the continuation term pre-composed by the transpose Jacobian of the primitive. 
There is a sharp contrast between the simplicity of $\Dsynrevsymbol_{\rho}$ compared to $\directD{\rho}{\Gamma}{Y}$ from Figure~\ref{fig:direct_diff_macro}.

\begin{example}
    Continuing with our previous example !cos$_{\reals,\reals}$ $\comp$ pair$_{\reals,\reals; \reals\times\reals}$!, we have

    \begin{tabular}{l}
        !$\reals$,$\reals$,$\reals$,$\reals\times\reals\times\reals$->$\rho \vdash \Dsynrevsymbol_{\rho}$(cos$_{\reals,\reals}\comp$ pair$_{\reals,\reals; \reals\times\reals}$): $\reals$,$\reals$,$\reals \times \reals$,$(\reals\times\reals\times(\reals\times\reals))$->$\rho$! \\
        = !<proj$_{\reals^3;\reals^3->\rho;[]} \comp$ cos$_{\reals,\reals}$, proj$_{[];\reals^3;\reals^3->\rho} \icomp$ (proj$_{\reals^3;\reals^3->\rho;[]}$ $\comp$ J$^T$cos$_{\reals,\reals}$)> $\comp$! \\
        !<proj$_{\reals^4;\reals^4->\rho;[]} \comp$ pair$_{\reals,\reals; \reals\times\reals}$, proj$_{[];\reals^4;\reals^4->\rho} \icomp$ (proj$_{\reals^4;\reals^4->\rho;[]}$ $\comp$ J$^T$pair$_{\reals,\reals; \reals\times\reals}$)>!
    \end{tabular}
\end{example}

\begin{lemma}[Well typedness of $\Dsynrevsymbol_{\rho}$]
    Let !A$_{1}$,$\ldots$,A$_{n}$ $\vdash$ e: B$_{1}$,$\ldots$,B$_{m}$! be a term in Source UNF. \\
    Then  !A$_{1}$,$\ldots$,A$_{n}$,A$_{1}\times\ldots\times$ A$_{n}$->$\rho$ $\vdash$ $\Dsynrevsymbol_{\rho}$(e): B$_{1}$,$\ldots$,B$_{m}$,B$_{1}\times\ldots\times$ B$_{m}$->$\rho$!.
\end{lemma}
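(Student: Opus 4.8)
The plan is to prove the well-typedness lemma by structural induction on the derivation of $\Ginf[']{e}{B_1,\ldots,B_m}$ in Source UNF, using the typing rules from Figure~\ref{fig:unf_source_grammar} together with the corresponding Target-UNF rules from Figure~\ref{fig:unf_target_grammar}. The transformation $\Dsynrevsymbol_{\rho}$ on a list of types $T = [A_1,\ldots,A_n]$ appends the single higher-order type $\underline{T}\to\rho$, so the statement to maintain is exactly: if $T_1\vdash e : T_2$ then $T_1,(\underline{T_1}\to\rho)\vdash\Dsynrevsymbol_{\rho}(e): T_2,(\underline{T_2}\to\rho)$. I would first fix notation for what $\Dsynrevsymbol_{\rho}$ does on each syntactic form (reading it off Figure~\ref{fig:diff_macro}): on each primitive $p$ with $T_1\vdash p : T_2$ it produces a pair $\langle (\text{proj})\comp p,\ (\text{proj})\icomp((\text{proj})\comp \J^T p)\rangle$, and on a composite $e_1\comp e_2$ it recurses and composes the results.

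Next I would handle the base cases — the primitives \texttt{var}, \texttt{op}, \texttt{pair}, \texttt{proj}, \texttt{map2}, \texttt{reduce}. For each, the argument is the same shape: the first component of the output pair is the original primitive $p : T_1\to T_2$ precomposed with a projection $\text{proj}_{T_1;\underline{T_1}\to\rho;[]}$, which by the \texttt{proj} rule of Target UNF has type $T_1,(\underline{T_1}\to\rho)\vdash \cdot : T_1$; composing with $p$ gives type $T_2$. The second component precomposes the carried continuation with the transpose Jacobian $\J^T p$, whose Target-UNF typing rule (Figure~\ref{fig:unf_target_grammar}) is set up precisely so that $\J^T p : T_2^{\mathrm{src}} \to T_1^{\mathrm{src}}$ at the level of the ``payload'' types; chasing through the $\icomp$ typing rule (which, given $e_1 : [\underline{T_3}\to B]$ and $e_2 : T_3$, yields $e_1\icomp e_2 : [\underline{T_2}\to B]$) shows the result has type $\underline{T_2}\to\rho$. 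The key bookkeeping point in each base case is matching the subscript indices on the \texttt{proj} and $\J^T p$ primitives against the actual context lists, e.g.\ verifying that $\reals^{\times(n+1)}$ appears where expected in the \texttt{op} and \texttt{pair} cases and that the $\Array{\reals}{n}$ factors line up in \texttt{map2} and \texttt{reduce}. These are routine but need to be stated carefully once; I would do \texttt{var} in full detail and remark that the others are analogous.

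For the inductive step $e = e_1\comp e_2$, the Source-UNF rule gives $T\vdash e_1 : T,A$ and $T,A\vdash e_2 : T,A,B$, so $e : T\vdash \cdot : T,A,B$. By the induction hypothesis, $T,(\underline{T}\to\rho)\vdash\Dsynrevsymbol_{\rho}(e_1) : T,A,(\underline{T,A}\to\rho)$ and $T,A,(\underline{T,A}\to\rho)\vdash\Dsynrevsymbol_{\rho}(e_2) : T,A,B,(\underline{T,A,B}\to\rho)$. I would then check that $\Dsynrevsymbol_{\rho}(e_1\comp e_2) = \Dsynrevsymbol_{\rho}(e_1)\comp\Dsynrevsymbol_{\rho}(e_2)$ (or whatever the exact recursive clause in Figure~\ref{fig:diff_macro} is) typechecks: the output type of the first matches the input type of the second because appending $\underline{T,A}\to\rho$ is exactly what the first produces and what the second consumes, and sequential composition $\comp$ in UNF (whose rule requires the second term's input context to be the first's output) then delivers $T,(\underline{T}\to\rho)\vdash\cdot : T,A,B,(\underline{T,A,B}\to\rho)$, as required.

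The main obstacle I anticipate is purely notational rather than conceptual: keeping the three-way subscripts on \texttt{proj} and on the $\J^T$ primitives consistent with the running context lists, and being careful about the distinction between a UNF list of types $T$ and its ``flattened'' product $\underline{T}$ wherever a higher-order type $\underline{T}\to\rho$ is formed or consumed — in particular the $\icomp$ rule is stated in terms of $\underline{T_2}$, $\underline{T_3}$, so one must make sure the list $T$ at each recursive call is the one whose flattening appears. There is no real mathematical difficulty; the lemma is, as the text says, routine by induction, and the only care needed is that the indices decorating the new Jacobian primitives are chosen to be exactly the ones that make the \texttt{proj}/$\comp$/$\icomp$ rules fire. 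I would therefore present the \texttt{var} and composition cases in detail, sketch \texttt{op} and \texttt{pair}, and state that \texttt{proj}, \texttt{map2}, and \texttt{reduce} follow by the same index-chasing argument.
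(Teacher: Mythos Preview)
Your proposal is correct and takes exactly the same approach as the paper: the paper's proof is the single line ``By induction on derivation of $A_{1},\ldots,A_{n} \vdash e: B_{1},\ldots,B_{m}$,'' and you have simply spelled out that induction in detail. Your anticipated obstacle (index bookkeeping on \texttt{proj} and the $\icomp$ rule) is accurate but, as you note, purely notational rather than conceptual.
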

\begin{proof}
    By induction on derivation of !A$_{1}$,$\ldots$,A$_{n}$ $\vdash$ e: B$_{1}$,$\ldots$,B$_{m}$!.
\end{proof}

\begin{figure*}[t]
\begin{tabular}{|c|}
\hline
    \begin{tabular}{r c l}
    $\Dsynrevsymbol_{\rho}$(!A$_1$,$\ldots$,A$_n$!) &=& !A$_1$,$\ldots$,A$_n$,(A$_1\times\ldots\times$A$_n$)->$\rho$!\\ \\
    $\Dsynrevsymbol_{\rho}$(!var!$_{T;i}$) &=& $F$(!var$_{T;i}$, J$^T$var$_{T;i}$!) \\
    $\Dsynrevsymbol_{\rho}$(!op!$_{\Gamma;n}$) &=& $F$(!op$_{T;n}$,J$^T$op$_{T;n}$!) \\ 
    $\Dsynrevsymbol_{\rho}$(!pair!$_{T;A\times B}$) &=& $F$(!pair$_{T;A\times B}$, J$^T$pair$_{T;A\times B}$!) \\
    $\Dsynrevsymbol_{\rho}$(!proj!$_{T_1;T_2;T_3}$) &=& $F$(!proj$_{T_1;T_2;T_3}$,J$^T$proj$_{T1;T2;T3}$!) \\
    $\Dsynrevsymbol_{\rho}$(!e$_1\comp$e$_2$!) &=& $\Dsynrevsymbol_{\rho}$(!e$_1$!)$\comp$ $\Dsynrevsymbol_{\rho}$(!e$_2$!)\\ 
    $\Dsynrevsymbol_{\rho}$(!map2!$_{T;x,y.e}$) &=& $F$(!map2$_{T;x,y.e}$, J$^T$map2$_{T;x,y.e}$!) \\
    $\Dsynrevsymbol_{\rho}$(!reduce!$_{T;x,y.e_1;e_2}$) &=& $F$(!reduce$_{T;x,y.e_1;e_2}$, J$^T$reduce$_{T;x,y.e_1;e_2}$!) \\
    \end{tabular}\\
    where !$F$(A,B)$\defeq$ <proj$_{T;\underline{T}->\rho;[]} \comp$ A, proj$_{[];T;\underline{T}->\rho} \icomp$ (proj$_{T;\underline{T}->\rho;[]}$ $\comp$ B)>!\\\hline
    \end{tabular}
    \vspace{-0.4cm}
    \caption{Reverse-mode differentiation from Source UNF to Target UNF}
    \label{fig:diff_macro}    
    \vspace{-0.4cm}
\end{figure*}

\subsection{Transformations to and from UNF} 
\label{sub:transformations to and from UNF}

We first give a translation from our Source to Source UNF in Figure~\ref{fig:source_to_unf}, which we call $\UNFSymbol$.
As discussed above, $\UNFSymbol$ sends a constant or an operator to a primitive in Source UNF. 
This term in Source UNF carries the context of the original term in Source.
$\UNFSymbol$ sequentializes a term, mimicking a left-to-right call-by-value evaluation.
Because of this, when we are trying to return a pair, this needs to be witnessed, and this is the role of !pair!.
When we need to pass the result of type !A! of !e$_{1}$! through !e$_{2}$! which does not take it as an input, we can transform !e$_{2}$! to accept and pass it.
It's the equivalent in UNF of a weakening.
This weakening is required when we transform a non-unary operator, such as $+$ or !pair!, as we transform the arguments in order, and we need to keep all of their results.
We write this new term $(\widetilde{e_{2}})_{A}$. It is defined by induction on !e$_{2}$! as follows. 
We assume given a type !A! that needs to be passed, and drop the index !A!.

\begin{tabular}{r c l}
    $\widetilde{var_{T;i}}$  &=& $var_{T,A;i}$ \\
    $\widetilde{op_{T;n}}$  &=& $op_{T,A;n}$ \\
    $\widetilde{pair_{T;B\times C}}$  &=& $pair_{T,A;B\times C}$ \\
    $\widetilde{proj_{T_1;T_2;T_3}}$  &=& $proj_{T_1;T_2;T_3,A}$ 
\end{tabular}
\begin{tabular}{r c l}
    $\widetilde{e_1 \comp e_2}$  &=& $\widetilde{e_1}\comp\widetilde{e_2}$ \\
    $\widetilde{map2_{T;x,y.e}}$  &=& $map2_{T,A;x,y.e}$ \\
    $\widetilde{reduce_{T;x,y.e_1;e_2}}$ &=& $reduce_{T,A;x,y.e_1;e_2}$
\end{tabular}

We want to preserve the invariant that $\UNFSymbol$(!e!) represents the term !e! which is also returning its context. 
Because !x! is not free in $\Gamma \vdash$ !let x=e$_1$in e$_2$!, we need to hide !x! after $\UNFSymbol$(!e$_2$!).
This explains the projection !proj! in $\UNFSymbol$ of a !let!. 
More generally, in !e!$_{1} \comp$ !e!$_{2}$ the return value of !e!$_{1}$ acts as a new variable for !e!$_{2}$, 
is not in the context of !e!$_{1}$, but will be returned by !e!$_{2}$.
This means we need to hide all these intermediate values to preserve the invariant,
and this explains the projection !proj! in the $\UNFSymbol$ of other terms. 

\begin{lemma}[Well typedness of $\UNFSymbol$]
    Let !$\Gamma$=x$_1$:A$_{1}$,$\ldots$,x$_n$:A$_{n}$ $\vdash$ e:B! be a term in Source. Then
    \begin{center}
        !A$_{1}$,$\ldots$,A$_{n}$ $\vdash$ $\UNFSymbol$(e): A$_{1}$,$\ldots$,A$_{n}$,B!.
    \end{center}
\end{lemma}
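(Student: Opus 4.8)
The plan is to proceed by structural induction on the derivation of $\Gamma \vdash$ !e:B! in the Source language, following exactly the clauses in the definition of $\UNFSymbol$ (Figure~\ref{fig:source_to_unf}). The induction hypothesis is precisely the statement: for every subterm $\Gamma' \vdash$ !e':B'! appearing in the derivation, $\UNFSymbol$(!e'!) is well typed with judgment $\underline{\Gamma'} \vdash \UNFSymbol(\text{!e'!}) : \underline{\Gamma'}, \text{!B'!}$, where $\underline{\Gamma'}$ denotes the list of types of $\Gamma'$. Since the typing rules of Figure~\ref{fig:source_grammar} are syntax-directed, each case of the induction is pinned down by the shape of !e!.

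First I would dispatch the base cases: a variable !x! with !x:A$_i \in \Gamma$! translates to !var$_{\underline\Gamma;i}$!, which by the !var! rule of Figure~\ref{fig:unf_source_grammar} has type $\underline\Gamma \vdash \underline\Gamma, \text{!A$_i$!}$, as required; similarly for a constant !c!, using the $0$-ary !op! rule. For the !let! case, !let x=e$_1$ in e$_2$! with $\Gamma\vdash$!e$_1$:A! and $\Gamma,x:A\vdash$!e$_2$:B!, the IH gives $\underline\Gamma \vdash \UNFSymbol(\text{!e$_1$!}):\underline\Gamma,\text{!A!}$ and $\underline\Gamma,\text{!A!}\vdash \UNFSymbol(\text{!e$_2$!}):\underline\Gamma,\text{!A!},\text{!B!}$; composing these with $\comp$ and then post-composing with the appropriate !proj! that forgets the !A!-component yields $\underline\Gamma\vdash \UNFSymbol(\text{!e$_1$!})\comp\UNFSymbol(\text{!e$_2$!})\comp \mathsf{proj}_{\underline\Gamma;\text{!A!};\text{!B!}}:\underline\Gamma,\text{!B!}$. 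For unary operators, pairing/projection of pairs, and the !map2!/!reduce! cases, the same recipe applies: use IH on the arguments in left-to-right order, insert a weakened translation $\widetilde{(\cdot)}$ for each argument after the first to carry the earlier results past it, compose with $\comp$, apply the corresponding UNF primitive (!op!, !pair!, !map2!, !reduce!), and finish with a !proj! to restore the invariant that only $\underline\Gamma$ plus the result type is exposed.

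The one genuinely load-bearing step is the binary-operator / pairing case, where the weakening operation $\widetilde{(\cdot)}_A$ must itself be shown type-correct: I would prove, as a separate auxiliary lemma (again by induction on !e$_2$!, matching the seven clauses defining $\widetilde{(\cdot)}$), that if $\text{!T$_1$!}\vdash \text{!e$_2$!}:\text{!T$_1$,T$_2$!}$ then $\text{!T$_1$,A!}\vdash \widetilde{\text{!e$_2$!}}_A:\text{!T$_1$,A,T$_2$!}$ — i.e.\ weakening inserts the extra type !A! both in the input context and, transported through, in the output. Each clause of $\widetilde{(\cdot)}$ just re-indexes a primitive and the check is mechanical, but getting the placement of !A! right in the composite and !proj! cases is where care is needed; this auxiliary lemma is what makes the main binary-operator case go through, since there we must pass the result of the first argument through the translation of the second. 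Everything else is bookkeeping with the snoc/append conventions for the type lists.
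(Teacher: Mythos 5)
Your proposal matches the paper's argument: the paper's entire proof is ``By induction on derivation of $\Gamma\vdash$ \texttt{e:B}'', and your case analysis simply spells out the details of that same induction. The one thing you add beyond the paper is the explicit auxiliary typing lemma for the weakening $\widetilde{(\cdot)}_A$, which the paper leaves implicit but which is indeed needed for the binary-operator, pairing, and \texttt{map2} cases to go through.
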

\begin{proof}
    By induction on derivation of $\Gamma\vdash$!e:B!.
\end{proof}

\begin{figure*}[t]
    \begin{tabular}{|r c l|}
        \hline
    $\UNFSymbol$($\Gamma\vdash $ !c!) &=& !c!$_{\Gamma,0}$ constant seen as a 0-ary operator\\
    $\UNFSymbol$($\Gamma\vdash $ !x!) &=& !var!$_{\Gamma,i}$ where !x! is the $i$-th variable in $\Gamma$ \\
    $\UNFSymbol$($\Gamma\vdash $ !let x:A = e$_1$ in e$_2$:B!) &=& $\UNFSymbol$(!e$_1$!) $\comp$ $\UNFSymbol$(!e$_2$!) $\comp$ !proj!$_{\Gamma;A;B}$  \\ 
    $\UNFSymbol$($\Gamma\vdash $ !< e$_1$, e$_2$ >:AxB!) &=& $\UNFSymbol$(!e$_1$!) $\comp$ $\widetilde{\UNFSymbol(e_2)}$ $\comp$ !pair!$_{\Gamma;A\times B}$ \\ 
    $\UNFSymbol$($\Gamma\vdash \pi_i$(!e!)) &=& $\UNFSymbol$(!e!)$\comp $ $\pi_i$ $\comp$ !proj!$_{\Gamma;A_1\times A_2;A_i}$ \\
    && $\pi_i$ seen as a unary operator\\
    $\UNFSymbol$($\Gamma\vdash $ !e$_1$ op2 e$_2$!) &=& $\UNFSymbol$(!e$_1$!) $\comp$ $\widetilde{\UNFSymbol(e_2)}$ $\comp$ !op!$_{\Gamma;2}$ $\comp$ !proj!$_{\Gamma;\reals,\reals;\reals}$ \\
    $\UNFSymbol$($\Gamma\vdash $ !op1 e!) &=& $\UNFSymbol$(!e!) $\comp$ !op!$_{\Gamma;1}$ $\comp$ !proj!$_{\Gamma;\reals;\reals}$ \\
    $\UNFSymbol$($\Gamma\vdash $ !map2 (x,y.e$_1$) e$_2$ e$_3$!) &=& $\UNFSymbol$(!e$_2$!) $\comp$ $\widetilde{\UNFSymbol(e_3)}$ $\comp$ !map2!$_{\Gamma; x,y.e_1}$ $\comp$ \\
    && !proj!$_{\Gamma;\Array{\reals}{n},\Array{\reals}{n};\Array{\reals}{n}}$ \\ 
    $\UNFSymbol$($\Gamma\vdash $ !reduce (x,y.e$_1$) e$_2$ e$_3$!) &=& $\UNFSymbol$(!e$_3$!)$\comp$ !reduce!$_{\Gamma; x,y.e_1; e_2}$ $\comp$ !proj!$_{\Gamma;\Array{\reals}{n};\Array{\reals}{n}}$ \\
    \hline 
    \end{tabular}
    \vspace{-0.4cm}
    \caption{Transformation from Source to Source UNF}
    \label{fig:source_to_unf}
    \vspace{-0.4cm}
    \end{figure*}

Next, the transformation from Target UNF to Target is presented in Figure~\ref{fig:unf_to_target}. 
We call this transformation $\UNFSymbol^{-1}$, but it is not a strict inverse of $\UNFSymbol$.
Doing $\UNFSymbol$ followed by $\UNFSymbol^{-1}$ performs some version of the ANF transformation \cite{sabry1993reasoning}.

Target UNF does not have variables. 
A context type !T=[A$_1$,$\ldots$,A$_n$]! is transformed to a context \\
!$\Gamma$=x$_1$:A$_1$,$\ldots$,x$_n$:A$_n$!.
We use this convention when definiting $\UNFSymbol^{-1}$. 
For operators like !op$_{\Gamma;m}$!, we extend the notation above by saying that the variables from the context are \\
!x$_1$:A$_1$,$\ldots$,x$_n$:A$_n$,x$_{n+1}$:$\reals$,$\ldots$,x$_{n+m}$:$\reals$!.



$\UNFSymbol^{-1}$ does not have as a simple typing property as $\UNFSymbol$, 
because it treats primitives with and without a $J$ differently and should be performed after $\Dsynrevsymbol_{\rho}$.

\begin{lemma}[Well typedness of $\UNFSymbol^{-1}$]
    Let !A$_{1}$,$\ldots$,A$_{n}$ $\vdash$ e: B$_{1}$,$\ldots$,B$_{m}$! be a term in Source UNF. Then
    \begin{center}
        !x$_1$:A$_{1}$,$\ldots$,x$_n$:A$_{n}$,x$_{n+1}$:A$_{1}\times\ldots\times$ A$_{n}$->$\rho$ $\vdash$ $\UNFSymbol^{-1}$($\Dsynrevsymbol_{\rho}$(e)): B$_m$ $\times$ (B$_{1}\times\ldots\times$ B$_{m}$->$\rho$) !.
    \end{center}
\end{lemma}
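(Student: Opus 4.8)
The plan is to induct on the derivation of the Source UNF judgment $A_1,\ldots,A_n \vdash e : B_1,\ldots,B_m$, composing the recursion that defines $\Dsynrevsymbol_{\rho}$ (Figure~\ref{fig:diff_macro}) with the clause-by-clause definition of $\UNFSymbol^{-1}$ (Figure~\ref{fig:unf_to_target}). What makes this feasible, despite $\UNFSymbol^{-1}$ having no uniform typing property on arbitrary Target UNF terms, is that here $\UNFSymbol^{-1}$ is only ever applied to terms of the rigid shape output by $\Dsynrevsymbol_{\rho}$: by Figure~\ref{fig:diff_macro}, $\Dsynrevsymbol_{\rho}(e)$ is a $\comp$-composition of the gadget $F(\mathrm{p},J^{T}\mathrm{p})$, one per primitive occurrence $\mathrm{p}$ in $e$, and $F(A,B)$ is itself a fixed assembly of a term pairing, an $\icomp$, two projections, and $A$ and $B$. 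I would take the already-established well typedness of $\Dsynrevsymbol_{\rho}$ for granted, so that every intermediate Target UNF judgment is automatically well formed and the only work left is reading off, clause by clause, the \emph{Target} type that $\UNFSymbol^{-1}$ assigns.

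For the base cases $e \in \{\mathrm{var},\mathrm{op},\mathrm{pair},\mathrm{proj}\}$ we have $\Dsynrevsymbol_{\rho}(e) = F(e,J^{T}e)$, so $\UNFSymbol^{-1}(\Dsynrevsymbol_{\rho}(e))$ is $\UNFSymbol^{-1}$ of a term pairing. Its first component is the $J$-free term $\mathrm{proj}\comp e$, whose image under $\UNFSymbol^{-1}$ — via the environment-reconstruction clauses, which do the same bookkeeping as in the well typedness of $\UNFSymbol$ — is a Target term of type $B_m$ (the last component of the codomain) in context $x_1:A_1,\ldots,x_n:A_n$. Its second component is $\mathrm{proj}\icomp(\mathrm{proj}\comp J^{T}e)$, whose image under $\UNFSymbol^{-1}$, by the clauses for $J$-indexed primitives (which emit the transpose-Jacobian map as a $\lambda$) and for $\icomp$ (which realises internal precomposition as precomposing the incoming continuation $x_{n+1}:A_1\times\ldots\times A_n\to\rho$ by that map), is a function term of type $B_1\times\ldots\times B_m\to\rho$. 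Pairing the two yields exactly $B_m \times (B_1\times\ldots\times B_m\to\rho)$. The cases $e = \mathrm{map2}_{T;x,y.e'}$ and $e = \mathrm{reduce}_{T;x,y.e_1';e_2'}$ follow the same pattern, with the extra ingredient that $\UNFSymbol^{-1}$ of $J^{T}\mathrm{map2}$ and $J^{T}\mathrm{reduce}$ embeds the gradients $\grad_{\Gamma}e'$, $\grad_{\{x\}}e'$, $\grad_{\{y\}}e'$ of the source subterms along with scan-left, scan-right and the shift operators; well typedness of these pieces follows from the hypotheses of the corresponding Source UNF typing rule (asserting that $e'$, resp.\ $e_1'$ and $e_2'$, are well typed Source terms) together with the earlier typing lemma for $\directD{\rho}{\Gamma}{Y}$, out of which $\grad$ is built.

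For the inductive step $e = e_1\comp e_2$ we have $\Dsynrevsymbol_{\rho}(e_1\comp e_2) = \Dsynrevsymbol_{\rho}(e_1)\comp\Dsynrevsymbol_{\rho}(e_2)$, and $\UNFSymbol^{-1}$ turns an internal $\comp$ into a $\mathbf{let}$ binding of a fresh tuple of variables typed by the intermediate codomain — the ANF-style reassociation noted before Figure~\ref{fig:unf_to_target}. Composability of $\Dsynrevsymbol_{\rho}(e_1)$ and $\Dsynrevsymbol_{\rho}(e_2)$, again from well typedness of $\Dsynrevsymbol_{\rho}$, guarantees that the tuple bound by the $\mathbf{let}$ has exactly the type expected as input by the $e_2$-piece, so the two translations compose and the resulting type is the one produced by the $e_2$-piece, namely the claimed $B_m \times (B_1\times\ldots\times B_m\to\rho)$ by the induction hypothesis on $e_2$.

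The step I expect to be the real obstacle is this last one, and specifically getting the induction hypothesis into usable form: the lemma as stated records only the \emph{final} Target type of $\UNFSymbol^{-1}(\Dsynrevsymbol_{\rho}(e_i))$, whereas in $e_1\comp e_2$ the translation of $\Dsynrevsymbol_{\rho}(e_1)$ must also hand the \emph{whole} reconstructed intermediate environment — not just its last component — over to the translation of $\Dsynrevsymbol_{\rho}(e_2)$. The clean fix is to strengthen the inductive statement so that it describes $\UNFSymbol^{-1}(\Dsynrevsymbol_{\rho}(e_i))$ relative to an arbitrary incoming environment context and continuation, or equivalently to carve out the sub-grammar of Target UNF spanned by the image of $\UNFSymbol$ followed by $\Dsynrevsymbol_{\rho}$ and prove $\UNFSymbol^{-1}$ well typed with codomain of the stated shape on that whole sub-grammar; the lemma is then the instance in which the incoming environment is $x_1:A_1,\ldots,x_n:A_n$ and the incoming continuation is $x_{n+1}$. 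A secondary, purely mechanical difficulty is checking that the $\icomp$ and $J^{T}$-primitive clauses of $\UNFSymbol^{-1}$ really produce a continuation whose argument type is the product \emph{in the right order} $B_1\times\ldots\times B_m$ rather than a reassociated variant, and that the appeals to the $\directD{\rho}{\Gamma}{Y}$ typing lemma in the array cases use the correct sub-contexts for the partial gradients.
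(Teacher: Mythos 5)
Your plan follows exactly the route the paper takes: the paper's entire proof is the one-line ``By induction on derivation of $A_1,\ldots,A_n \vdash e : B_1,\ldots,B_m$'', and your clause-by-clause elaboration (including the observation that the stated conclusion is too weak an induction hypothesis for the $e_1\comp e_2$ case and must be strengthened to track the whole reconstructed environment) is a faithful and indeed more careful filling-in of that same induction. No discrepancy with the paper's approach.
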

\begin{proof}
    By induction on derivation of !A$_{1}$,$\ldots$,A$_{n}$ $\vdash$ e: B$_{1}$,$\ldots$,B$_{m}$!.
\end{proof}

\begin{figure*}[t]
    \begin{tabular}{|r c l|}
    \hline
    $\invUNFSymbol$(!var!$_{\Gamma;i}$) &=& !x$_i$: Ti! \\
    $\invUNFSymbol$(!op!$_{\Gamma;m}$) &=& !op$_n$(x$_n$,$\ldots$,x$_{n+m}$)! \\
    $\invUNFSymbol$(!e$_1$ $\comp$ e$_2$!) &=& !let x$_{n+1}$=$\invUNFSymbol$(e$_1$) in $\invUNFSymbol$(e$_2$)! \\
    $\invUNFSymbol$(!map2!$_{\Gamma;x,y.e}$) &=& !map2 (x,y.e) x$_{n+1}$ x$_{n+2}$! \\
    $\invUNFSymbol$(!reduce!$_{\Gamma;x,y.e_1;e_2}$) &=& !reduce (x,y.e$_1$) $e_2$ x$_{n+1}$! \\
    $\invUNFSymbol$(!< e$_1$, e$_2$>!) &=& !<$\invUNFSymbol$(e$_1$), $\invUNFSymbol$(e$_2$)>! \\
    $\invUNFSymbol$(!proj!$_{T_1;T_2;T_3}$) &=& !(x$_{k}$,$\ldots$,y$_{p}$)! where the !x!$_{i}$ are the variables of $T_3$ \\
    $\invUNFSymbol$(!pair!$_{T;A\times B}$) &=& !<x$_{n+1}$, x$_{n+2}$>! \\
    $\invUNFSymbol$(!e1!$\icomp$!e2!) &=& !fun (y$_1$,$\ldots$,y$_m$) -> $\invUNFSymbol$(e$_1$)($\invUNFSymbol$(e$_2$)[$\forall i$,y$_i$/x$_i$])! \\
    $\invUNFSymbol$(!J!$^T$!var!$_{\Gamma;i}$) &=& !(x$_1$,$\ldots$,x$_{i-i}$,x$_i$+x$_{n+1}$,x$_{i+1}$,$\ldots$,x$_{n}$)! \\
    $\invUNFSymbol$(!J!$^T$!op!$_{\Gamma;m}$) &=& !(x$_1$,$\ldots$,x$_{n}$,x$_{n+1}$+$\partial_1$op$_n$*x$_{n+m+1}$,$\ldots$,x$_{n+m}$+$\partial_m$op$_n$*x$_{n+m+1}$)! \\
    $\invUNFSymbol$(!J!$^T$!proj!$_{T_1;T_2;T_3}$) &=& !(x$_1$,$\ldots$,x$_k$,0,$\ldots$,0,x$_{k+p}$,$\ldots$,x$_n$)! \\
    $\invUNFSymbol$(!J!$^T$!pair!$_{T;A\times B}$) &=& !(x$_1$,$\ldots$,x$_{n-1}$,$\pi_1$x$_n$,$\pi_2$x$_n$)! \\
    $\invUNFSymbol$(!J!$^T$!map2!$_{\Gamma;x,y.e}$) &=& 
    !let G = (map2 * x$_{n+3}$ !\\
    && \quad\quad!(map2 (a,b.(!$\grad_\Gamma$!e$_{1}$)[a/x.b/y])  x$_{n+1}$  x$_{n+2}$)) in! \\
    && !((x$_{1}$,$\ldots$,x$_n$)$\widehat{+}$(reduce $\widehat{+}$ $\widehat{0}$ G),!\\
    && ! map2 (a,b.(!$\grad_{\{x\}}$!e$_{1}$)[a/x]*b) x$_{n+1}$ x$_{n+3}$,!\\
    && ! map2 (a,b.(!$\grad_{\{y\}}$!e$_{1}$)[a/x]*b) x$_{n+2}$ x$_{n+3}$ )!\\
    $\invUNFSymbol$(!J!$^T$!reduce!$_{\Gamma;x,y.e1;e2}$) &=& !let A$_{0}$ = shift1R (scanl (x,y.e$_{1}$) y$_{1}$ x$_{n+1}$) in! \\
    && !let A$_{1}$ = shift1L (map2 !\\
    && !     (a,b.(!$\grad_{\{x\}}$!e$_{1}$)[a/x,b/y]) A$_{0}$ x$_{n+1}$) in! \\
    && !let A$_{2}$ = map2 (a,b.(!$\grad_{\{y\}}$!)e$_{1}$[a/x,b/y]) A$_{0}$ x$_{n+1}$ in! \\
    && !let A$_{3}$ = scanr * 1 A$_{1}$ in! \\
    && !(x$_{1}$,$\ldots$,x$_n$, map2 (x,y. x*y*x$_{n+2}$) A$_{2}$ A$_{3}$)! \\
    \hline
    \end{tabular}
    \vspace{-0.4cm}
    \caption{Transformation from Target UNF to Target}
    \label{fig:unf_to_target}
    \vspace{-0.4cm}
    \end{figure*}

\section{Complexity analysis}
\label{sec:complexity}

In this section, we introduce a simple cost model for our language and show 
that, after partial evaluation, our reverse-mode transformation satisfies a version of the cheap gradient principle (Theorem \ref{thm:complexity}).

\subsection{Partial evaluation and optimization} 
\label{sub:Partial evaluation and optimization}

As can be seen from the examples, $\directD{\rho}{\Gamma}{Y}$ introduces a lot of functions of several arguments.
Following the insight from Section~\ref{subsec:insights}, we do not want to keep all these costly lambda abstractions.
The transformation is designed in such a way that all the lambda abstractions are given arguments,  
and we can use partial evaluation to beta-reduce all these lambda abstractions. 
By inspecting each rule in Figure~\ref{fig:direct_diff_macro}, this allows us to prove by induction on the judgment of !e!:

\begin{lemma}
    Let $\Gamma \vdash$!e: A! be a term in Source. 
    Every variable !Y! in $\directD{\rho}{\Gamma}{Y}$(!e!) which is not of ground type has exactly one occurrence in the term.
\end{lemma}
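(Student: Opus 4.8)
The plan is to proceed by structural induction on the typing derivation of $\Gamma \vdash e : A$, inspecting each clause of Figure~\ref{fig:direct_diff_macro} in turn. Before starting I would slightly strengthen the statement so that the induction carries enough information: I would prove that \emph{for every $\Gamma \vdash e : A$, the continuation parameter $Y$ occurs exactly once in $\directD{\rho}{\Gamma}{Y}(e)$, and every other variable of non-ground type that occurs anywhere in $\directD{\rho}{\Gamma}{Y}(e)$ also occurs exactly once}. The only non-ground types the transformation ever introduces are the continuation types $\Delta \to \rho$, so the non-ground variables of the output are exactly the incoming $Y$, the fresh binders $Y_1, Y_2, \dots$ named in whichever clause applies, and (in the array clauses) the continuation binders buried inside the embedded gradient subterms; the strengthened claim handles all of them simultaneously.

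The base cases ($e$ a variable or a constant) are immediate from the first two clauses: the body of the returned pair is $Y(\boldsymbol{x})$, resp.\ $Y(\boldsymbol{x}\,\widehat{+}\,[\mathrm{pos}(x)]z)$, where $Y$ occurs once and every remaining variable ($z$ and the components of $\boldsymbol{x}$) has ground type. For $\mathbf{let}$, pairing, $\pi_i$, $\mathsf{op1}$ and $\mathsf{op2}$, the crucial structural feature --- visible by eye in Figure~\ref{fig:direct_diff_macro} --- is that the continuation is threaded \emph{linearly}: the incoming $Y$ is handed to the recursive call on the first argument, producing the $\mathbf{let}$-bound name $Y_1$; $Y_1$ is handed to the recursive call on the second argument, producing $Y_2$; and the last of these names is used exactly once, inside the $\mathbf{fun}\,(\boldsymbol{x},z)\to\cdots$ of the returned pair. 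Invoking the induction hypothesis on each proper subterm, $Y$ occurs only inside $\directD{\rho}{\Gamma}{Y}(e_1)$ (hence once), each $Y_i$ has its single free occurrence inside the recursive call it is passed to, or in the final $\mathbf{fun}$, and --- renaming the $Y_i$ apart, as usual --- the non-ground variables of the assembled term are the disjoint union of those of the subterms together with the new $Y_i$, with occurrence counts adding up. Each of these cases closes in the same way.

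The step I expect to be the real obstacle --- or at least the only case that needs genuine care --- is $\mathsf{map2}$ and $\mathsf{reduce}$. Besides the linear threading of $Y$ through $Y_1$ and $Y_2$ along $e_2$ and $e_3$, the bodies of these clauses contain the subterms $\grad_\Gamma e_1$, $\grad_{\{x\}} e_1$ and $\grad_{\{y\}} e_1$, where by Figure~\ref{tbl:notation:one} we have $\grad_\Delta e_1 = \pi_2\,\directD{\Gamma'}{\Gamma'}{Id_{\Gamma'}}(e_1)(0_{\Gamma'}, \underline{1})$ for the appropriate context $\Gamma'$. Here the continuation slot is filled by the \emph{term} $Id_{\Gamma'}$ rather than by a variable, so the induction hypothesis does not apply literally. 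I would bridge this by observing that $Id_{\Gamma'}$ is closed and binds only ground-typed variables, and that substituting a term for the continuation parameter commutes with the transformation, so $\directD{\Gamma'}{\Gamma'}{Id_{\Gamma'}}(e_1) = \big(\directD{\Gamma'}{\Gamma'}{Y}(e_1)\big)[Id_{\Gamma'}/Y]$. Since $e_1$ is a proper subterm, the induction hypothesis applies to it; replacing the unique occurrence of $Y$ by the variable-free $Id_{\Gamma'}$ preserves the property that every non-ground variable occurs exactly once. Thus the embedded gradient subterms only contribute fresh, linearly-used continuation variables, and assembling them with $Y_1, Y_2$ from $e_2, e_3$ discharges the strengthened statement for the array clauses. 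The upshot is that no individual case is hard: the work is entirely in choosing the induction hypothesis (and the $Id_{\Gamma'}$-for-$Y$ substitution lemma) so that the single-occurrence bookkeeping goes through uniformly.
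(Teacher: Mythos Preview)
Your proposal is correct and follows the same route as the paper --- structural induction on the typing derivation, inspecting each clause of Figure~\ref{fig:direct_diff_macro} --- but the paper's own proof is a single sentence (``By inspecting each rule in Figure~\ref{fig:direct_diff_macro}, this allows us to prove by induction on the judgment of $e$''), whereas you spell out the strengthened induction hypothesis and, in particular, the handling of the embedded $\grad_{\Gamma'} e_1$ subterms in the \textsf{map2}/\textsf{reduce} clauses via the substitution observation $\directD{\Gamma'}{\Gamma'}{Id_{\Gamma'}}(e_1) = \big(\directD{\Gamma'}{\Gamma'}{Y}(e_1)\big)[Id_{\Gamma'}/Y]$. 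That is exactly the detail the paper's one-liner sweeps under the rug, and your treatment of it is sound.
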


From the previous lemma, we see that there is a linear usage of each continuation variable $Y$. 
This is one key property ensured by $\UNFSymbol$.
We also note that, apart from the !map2! case, this continuation variable is always applied to almost the identity. 
More precisely, we have applications of the form !fun (x$_1$,$\ldots$,x$_n$) -> Y(e$_1$,$\ldots$,e$_n$)! 
where the !e$_i$! are !x$_i$! except for at most $k$ (independent of $n$) terms.
In fact, we have $k=2$, except for the !map2! and !reduce!.

We can first use inlining, the first optimization rule given in Figure~\ref{fig:optim}.
Using the invariant above, most of the !e!$_{i}$ below are variables. 
Without loss of generality, assume the only terms !e!$_{i}$ which are not variables are !e!$_{n-1}$ and !e!$_{n}$.
We can then use forward substitution, the third optimization rule in Figure~\ref{fig:optim}, on the other !e!$_{i}$.
In brief:

\begin{tabular}{l}
!fun (x$_{1}$,$\ldots$,x$_n$) -> (fun (y$_{1}$,$\ldots$,y$_n$) -> (f$_{1}$,$\ldots$,f$_n$))(e$_{1}$,$\ldots$,e$_n$)! 
$\transto$ \\
!fun (x$_{1}$,$\ldots$,x$_n$) -> let y$_{1}$,$\ldots$,y$_n$ = e$_1$,$\ldots$,e$_n$ in (f$_{1}$,$\ldots$,f$_n$)! $\transto$ \\
!fun (x$_{1}$,$\ldots$,x$_n$) -> let y$_{n-1}$,y$_n$ = e$_{n-1},$e$_1$ in!\\
\hspace{3cm}!(f$_{1}$[x$_{1}$/y$_{1}$],$\ldots$,f$_{n-2}[$x$_{n-2}$/y$_{n-2}]$,f$_{n-1}$,f$_{n}$)!
\end{tabular}

This rewriting does not change the evaluation cost of the !f!$_{i}$ for $1\leq i \leq n-2$.
The new evaluation cost is reduced to the sum of the cost of evaluating 
the !f!$_{i}$ in addition to the cost of evaluating !e!$_{n-1}$ and !e!$_{n}$, gaining O($n$) movement of variables.

\begin{example}
    After forward-substitution and inlining the inner !Y!$_{i}$, the gradient of the terms from the introduction reduces to

    \begin{tabular}{c l}
        & $\grad_\Gamma$(!let w$_{1}$ = x$_{1}$ * x$_{2}$ in let w$_{2}$ = w$_{1}$ * x$_{1}$ in w$_{2}$!) \\
        =& !let w$_1$,Y$_1$= <x$_1$ * x$_{2}$, fun (y$_{1}$,y$_{2}$,y$_{3}$,z) -> Y(y$_{1}$+x$_2$*z,y$_{2}$+x$_1$*z,y$_{3}$) > in! \\
        & !let w$_{2}$,Y$_{2}$= <w$_{1}$ * x$_{1}$, fun (y$_{1}$,y$_{2}$,y$_{3}$,y$_{4}$,z) -> Y$_{1}$(y$_{1}$+w$_1$*z,y$_{2}$,y$_{3}$,y$_{4}$+x$_{1}$*z) > in! \\
        & !let y,Y$_{3}$= <w$_{2}$, fun (y$_{1}$,y$_{2}$,y$_{3}$,y$_{4}$,z) -> Y$_{2}$(y$_{1}$,y$_{2}$,y$_{3}$,y$_{4}$+z)> in! \\
        & !<y, fun (y$_{1}$,y$_{2}$,y$_{3}$,z) -> Y$_{3}$(y$_{1}$,y$_{2}$,y$_{3}$,0,z) >!
    \end{tabular}

After another simplification step, we obtain 

\begin{tabular}{c l}
    & $\grad_\Gamma$(!let w$_{1}$ = x$_{1}$ * x$_{2}$ in let w$_{2}$ = w$_{1}$ * x$_{1}$ in w$_{2}$!) \\
    =& !let w$_1$= x$_1$ * x$_{2}$ in let w$_{2}$= w$_{1}$ * x$_{1}$ in! \\
        & !<w$_{2}$, fun (y$_{1}$,y$_{2}$,y$_{3}$,z) -> let y'$_{1}$=y$_{1}$+w$_1$*z in !\\
        & \quad\quad\quad!let z'=y$_4$+x$_1$*z in Y(y'+x$_2$*z',y$_{2}$+x$_1$*z',y$_{3}$) >!
\end{tabular}

Similarly, for the gradient of !prod(A)! we obtain

\begin{tabular}{c l}
        & $\grad_\Gamma$(!prod(A)!) \\
        & !let A$_{0}$= shift1R (scanl * 1 A) in!\\
        & !let A$_{1}$= shift1L (map2 (a,b.b) A$_{0}$ A) in!\\
        & !let A$_{2}$= map2 (a,b.a) A$_{0}$ A in!\\
        & !let A$_{3}$= scanr * 1 A$_{1}$ in!\\
        & !<prod(A), fun (X,z) -> Y(X+map2 (a,b. a*z*b) A$_{2}$ A$_{3}$)>! 
    \end{tabular}
\end{example}

We call this optimization step partial evaluation in the rest of the complexity section. 
We have the following result.

\begin{lemma}
    \label{lem:noHO}
    Let $\Gamma \vdash$ !e:A! be a term in Source. 
    After the partial evaluation step, $\directD{\rho}{\Gamma}{Y}$(!e!) does not have lambda abstractions.
    Its only variable (bound and free) which does not have a ground type is !Y!.
\end{lemma}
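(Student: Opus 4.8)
The plan is to prove a strengthening of the statement by induction on the typing derivation of $\Gamma\vdash e:A$, from which the lemma is then immediate. The invariant I would carry is that, after the partial-evaluation rules of Figure~\ref{fig:optim} have been applied to exhaustion, $\directD{\rho}{\Gamma}{Y}(e)$ has the shape
\[
  \mathcal{L}\big[\,\langle\, t,\; \lambda\boldsymbol{x}\,z.\;\mathcal{M}[\,Y(e_1,\dots,e_{n})\,]\,\rangle\,\big]\qquad (n=|\Gamma|),
\]
where $\mathcal{L}$ and $\mathcal{M}$ are (possibly empty) nestings of !let!-bindings whose bound expressions have ground type and contain no $\lambda$-abstraction, $t$ and the $e_j$ are ground-type $\lambda$-free terms, and the free variable $Y\colon\Gamma\to\rho$ occurs exactly once, at the displayed position. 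Granting this invariant, every variable other than $Y$ — all bound variables in particular — has ground type, and the only $\lambda$-abstraction left is the single continuation $\lambda\boldsymbol{x}\,z.(\cdots)$ that the output pair is forced to carry by its type; this one disappears as soon as the continuation is applied, as it always is in $\grad_\Gamma(e)=\pi_2\directD{\Gamma}{\Gamma}{Id_\Gamma}(e)(0_\Gamma,\underline{1})$ via $Id_\Gamma(e_1,\dots,e_n)=(e_1,\dots,e_n)$. Along the way I would also record the slightly stronger fact that $\grad_{\Gamma'}(e')$ is, after partial evaluation, a ground-type $\lambda$-free term for every subterm $e'$ and ground context $\Gamma'$; this is what the !map2! and !reduce! cases consume.

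First I would dispatch the base cases $e=c$ and $e=x$: the rules of Figure~\ref{fig:direct_diff_macro} already give the displayed shape, with $\mathcal{L},\mathcal{M}$ empty and $t=c$ (resp.\ $t=x$, plus a single ground-type $\widehat{+}$ inside $\mathcal{M}$), and there is nothing to reduce. For the routine inductive cases (!let!, pairing, $\pi_1$, $\pi_2$, !op1!, !op2!) the transformation is a nested sequence of bindings $\mathbf{let}\ v_i, Y_i = \directD{\rho}{\Gamma_i}{Y_{i-1}}(e_i')\ \mathbf{in}\ \cdots$ ending in a pair $\langle t, \lambda\boldsymbol{x}\,z.\,Y_k(\cdots)\rangle$, where every $t$ and every $\cdots$ introduced by the rule uses only ground-type, $\lambda$-free operations (projections, pairing, $\widehat{+}$, the $0_G$, and the derivatives $\partial$!op1!, $\partial_j$!op2!). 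Applying the induction hypothesis to each recursive call and destructuring the pairs it returns binds each $v_i$ to a ground-type term and each $Y_i$ to a term $\lambda(\cdots).\mathcal{M}_i[Y_{i-1}(\cdots)]$. By the preceding lemma every non-ground variable has exactly one occurrence, so each $Y_i$ is used once; inlining it (Figure~\ref{fig:optim}) produces a single redex $(\lambda(\cdots).\mathcal{M}_i[Y_{i-1}(\cdots)])(\cdots)$ that forward substitution and $\beta$-reduction rewrite into a ground-type !let!-nesting around $\mathcal{M}_i[Y_{i-1}(\cdots)]$, while the $\pi_j\langle\cdot,\cdot\rangle$ reductions clear the administrative pairs. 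Iterating over $Y_k,\dots,Y_1$ erases all auxiliary continuations, leaves the original $Y$ with a unique occurrence, and keeps exactly one outermost $\lambda$; a rule-by-rule check confirms that no new $\lambda$ and no new non-ground variable is ever created. The ``almost-identity'' forward substitutions discussed just before the lemma are exactly the instances of this clean-up that matter for the complexity bound.

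The cases demanding real care are !map2! and !reduce!. Besides the $\mathbf{let}\ v_i, Y_i = \cdots$ nesting, which is handled as above, their outputs mention $\grad_\Gamma e_1$, $\grad_{\{x\}} e_1$ and $\grad_{\{y\}} e_1$ for the bound function $e_1$. Since $e_1$ is a strict subterm, typed in the ground contexts $\Gamma,x\colon\reals,y\colon\reals$ resp.\ $x\colon\reals,y\colon\reals$, the induction hypothesis applies to it; then instantiating $Y$ with the identity, applying it to $(0_{\Gamma'},\underline{1})$ and $\beta$-reducing — all of which partial evaluation performs — collapses the outermost $\lambda$ through $Id(\boldsymbol{e})=\boldsymbol{e}$, so each of those gradients is a ground-type $\lambda$-free term after partial evaluation. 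These terms occur only under the array-comprehension binders $(a,b.\,\cdots)$ of !map2!, !scanl!, !scanr! and under !shift1L!/!shift1R! — binders that are not $\lambda$-abstractions in the sense of the lemma — so no $\lambda$ is created, and the auxiliary continuations $Y_1, Y_2$ are then removed exactly as before, re-establishing the invariant and closing the induction.

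I expect the main obstacle to be this !map2!/!reduce! bookkeeping: one must (i) invoke the induction hypothesis on the strictly smaller bound function $e_1$ in a modified context and pass from ``$\directD{\rho}{\Gamma'}{Y}(e_1)$ has the invariant shape'' to ``$\grad_{\Gamma'}(e_1)$ is ground and $\lambda$-free'', and (ii) argue that the successive inlinings of $Y_k,\dots,Y_1$ terminate and leave no residual $\lambda$ — which is precisely where the linearity of non-ground variables from the preceding lemma is indispensable, since otherwise an inlining could duplicate a $\lambda$ or strand one behind a no-longer-used binder. Everything else is a routine inspection of the rules of Figure~\ref{fig:direct_diff_macro} against the invariant.
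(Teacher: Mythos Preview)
Your proposal is correct and follows the same approach as the paper: induction on the typing derivation of $e$, inspecting each rule of Figure~\ref{fig:direct_diff_macro} and checking that the partial-evaluation rules suffice to close the inductive step. The paper's own proof is a two-sentence sketch; you have supplied the explicit strengthened invariant, the use of the preceding linearity lemma, and the careful handling of the \texttt{map2}/\texttt{reduce} cases (where one must recurse into the bound function $e_1$) that the paper leaves implicit.
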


\begin{proof}
    By induction on the judgment of !e!. 
    We inspect each rule in Figure~\ref{fig:direct_diff_macro} 
    and note that the partial evaluation step precisely allows us to conclude the inductive step.
\end{proof}

\subsection{Cost model}
\label{sub:costModel}

We follow a simple model similar to the one in \cite{griewank2008evaluating}.
We assume the cost is divided into $4$ elementary measures, being the number of MOVES, ADDS, MULTS, and NLOPS.
MOVES assumes a flat memory and represents moving fixed-size information (e.g. 64 bits). 
ADDS represents the number of additions, 
MULTS the number of multiplications, 
and NLOPS the number of elementary non-linear operations like !cos! or !exp!.

The gives a complexity function $\cost$ valued in $\RR^4$. 
For primitive operations, we have for instance

\begin{tabular}{ll}
    $\cost$(!*!)=$(3,0,1,0)$ & $\cost($!+!$)=(3,1,0,0)$\\
    $\cost$(!c!)=$(1,0,0,0)$ & $\cost($!sin!$)=(2,0,0,1)$
\end{tabular}

More generally $\cost($!op1!$)=(2,0,0,1)$ for the other unary operations.
For simplicity, we will not address any parallelism in our cost model. 
So the cost of !map (x.e)! on an array of size $n$ will be $n*\cost$(!e!).
Following Lemma \ref{lem:noHO}, we do not need our cost model to deal with higher-order variables and lambda abstractions.
We can thus restrict our attention to the subset of the Target language which does not contain lambdas, applications or variables which are not of ground type.
The cost function extends compositionally to the restricted Target language. 
It is given in Figure~\ref{fig:costmodel}.

\begin{figure*}[t]
    \begin{tabular}{|l c l|}
        \hline
      !$\cost$(c)! &=& !(1,0,0,0)! \\ 
      !$\cost$(x)! &=& !(1,0,0,0)! \\ 
      !$\cost$(op1 e)! &=& !$\cost$(op1)+$\cost$(e)! \\ 
      !$\cost$(e$_1$ op2 e$_2$)! &=& !$\cost$(op2)+$\cost$(e$_1$)+$\cost$(e$_2$)! \\ 
      !$\cost$(<e$_1$,e$_2$>)! &=& !$\cost$(e$_1$)+$\cost$(e$_2$)! \\ 
      !$\cost$($\pi_i$(e))! &=& !(1,0,0,0)+$\cost$(e)! \\ 
      !$\cost$(let x=e$_1$ in e$_2$)! &=& !(1,0,0,0)+$\cost$(e$_1$)+$\cost$(e$_2$)! \\ 
      !$\cost$(map2 (x,y.e$_1$) e$_2$ e$_3$)! &=& !n*($\cost$(e$_1$)+(2,0,0,0))+$\cost$(e$_2$)+$\cost$(e$_3$)! \\ 
      !$\cost$(reduce (x,y.e$_1$) e$_2$ e$_3$)! &=& !n*($\cost$(e$_1$)+(2,0,0,0))+$\cost$(e$_2$)+$\cost$(e$_3$)! \\ 
      !$\cost$(scanl (x,y.e$_1$) e$_2$ e$_3$)! &=& !n*($\cost$(e$_1$)+(3,0,0,0))+$\cost$(e$_2$)+$\cost$(e$_3$)! \\ 
      !$\cost$(scanr (x,y.e$_1$) e$_2$ e$_3$)! &=& !n*($\cost$(e$_1$)+(3,0,0,0))+$\cost$(e$_2$)+$\cost$(e$_3$)! \\ 
      !$\cost$(shift1L e)! &=& !$\cost$(e)+(n,0,0,0)! \\ 
      !$\cost$(shift1R e)! &=& !$\cost$(e)+(n,0,0,0)! \\ 
      \hline
    \end{tabular}
    \vspace{-0.4cm}
    \caption{Cost model for the restricted target language}
    \label{fig:costmodel}
    \vspace{-0.4cm}
\end{figure*}

\subsection{Cheap gradient principle}

We define the Nesting of Array Operations $\NestA$ of a term !e! of Source by induction on !e! as follows.

\begin{center}
\begin{tabular}{r c l}
    $\NestA$(!c!), $\NestA$(!x!) &=& 0 \\
    $\NestA$($\pi_i$(!e!)), $\NestA$(!op1 e!) &=& $\NestA$(!e!) \\
    $\NestA$(!let x:A = e$_1$ in e$_2$:B!) &=& max($\NestA$(!e$_1$!), $\NestA$(!e$_2$!))  \\ 
    $\NestA$(!< e$_1$, e$_2$ >:AxB!) &=& max($\NestA$(!e$_1$!), $\NestA$(!e$_2$!)) \\ 
    $\NestA$(!e$_1$ op2 e$_2$!) &=& max($\NestA$(!e$_1$!), $\NestA$(!e$_2$!))\\
    $\NestA$(!map2 (x,y.e$_1$) e$_2$ e$_3$!) &=& max(1+$\NestA$(!e$_1$!), $\NestA$(!e$_2$!), $\NestA$(!e$_3$!)) \\
    $\NestA$(!reduce (x,y.e$_1$) e$_2$ e$_3$!) &=& max(1+$\NestA$(!e$_1$!), $\NestA$(!e$_2$!), $\NestA$(!e$_3$!))
\end{tabular}
\end{center}

We can now phrase our main complexity theorem.

\begin{theorem}
    \label{thm:complexity}
    Given a term !$\Gamma \vdash$ e: $\reals$! such that $\NestA$(!e!)$\leq$ $p$.
    Denote by $G$ the term $\grad_\Gamma$!e! after the partial evaluation step from Section~\ref{sub:Partial evaluation and optimization}.
    Then !$\cost$(G) $\leq$ $4*3^{p}*\cost$(e)!.
\end{theorem}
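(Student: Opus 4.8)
The plan is to prove the bound by induction on the structure of the source term $!e!$, tracking simultaneously two quantities: an upper bound on $\cost$ of the (partially evaluated) reverse-mode output, and the claimed factor $4\cdot 3^{p}$ where $p = \NestA(!e!)$. By Lemma~\ref{lem:noHO}, after the partial evaluation step the output $G$ lives in the restricted target language with only ground-type variables, so the compositional cost function of Figure~\ref{fig:costmodel} applies. The induction hypothesis I would carry is: for every source term $!\Gamma \vdash e:A!$ with $\NestA(!e!)\le p$, the partially evaluated $\directD{\rho}{\Gamma}{Y}(!e!)$ has $\cost \le 4\cdot 3^{p}\cdot\cost(!e!)$, componentwise in $\RR^4$ (possibly with a small additive slack term that I would need to thread through — one subtlety is whether the clean multiplicative statement is exactly inductive or whether an auxiliary form with an additive constant is needed; I would set up the stronger auxiliary invariant first and derive the stated theorem as a corollary).

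The base cases — constants and variables — are immediate: after partial evaluation their reverse-mode images are essentially a pair consisting of the value and a near-identity update, whose cost is $O(1)$, comfortably under $4\cdot\cost(!e!)$ since $\cost(!c!)=\cost(!x!)=(1,0,0,0)$ and $3^{0}=1$. For the first-order composite cases — $!op1\ e!$, $!e_1\ op2\ e_2!$, pairing, projection, and $!let!$ — I would use that $\NestA$ of these is the max of the $\NestA$ of the subterms (no increase), invoke the IH on each subterm, and then bound the extra cost contributed by the reverse pass. Here is where Insight~2 and the partial-evaluation analysis from Section~\ref{sub:Partial evaluation and optimization} do the real work: after inlining and forward substitution, the continuation is applied to a tuple that differs from the identity in at most $k=2$ positions, so the reverse pass for a scalar operator adds only $O(1)$ work (a couple of $\partial\mathsf{op}$ evaluations and multiplications), not $O(n)$. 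I would total these $O(1)$ overheads against the $\cost(!e_i!)$ of the subterms; since $\cost(!e!)$ already counts each subterm plus the primitive, and the reverse overhead of a scalar primitive is a fixed multiple of $\cost$ of that primitive, the factor $4$ (rather than a larger constant) should absorb it. Because $p$ does not change in these cases, no power of $3$ is consumed.

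The array cases — $!map2!$ and $!reduce!$ — are where the factor $3^{p}$ is genuinely needed and where I expect the main obstacle to lie. For $!map2\ (x,y.e_1)\ e_2\ e_3!$ with $\NestA = \max(1+\NestA(!e_1!),\NestA(!e_2!),\NestA(!e_3!)) \le p$, the reverse transformation in Figure~\ref{fig:direct_diff_macro} builds $\grad_\Gamma e_1$, $\grad_{\{x\}}e_1$, $\grad_{\{y\}}e_1$, runs three extra $!map2!$s over arrays of length $n$, and a $!reduce!$. The cost from Figure~\ref{fig:costmodel} is dominated by $n\cdot(\cost(\text{body})+O(1))$. The crux is bounding $\cost(\grad_{\{x\}}e_1)$ and $\cost(\grad_{\Gamma}e_1)$: here $e_1$ has $\NestA(!e_1!)\le p-1$, so by IH $\cost(\grad(!e_1!)) \le 4\cdot 3^{p-1}\cdot\cost(!e_1!)$. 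Multiplying by the constant number ($\le 3$, hence the $3$) of forward-and-backward passes over the array of length $n$ and comparing against $\cost(!map2\ \ldots!) = n\cdot(\cost(!e_1!)+(2,0,0,0))+\cost(!e_2!)+\cost(!e_3!)$, the recurrence $C(p)\le 3\cdot C(p-1)$ closes to $C(p)\le 4\cdot 3^{p}$, using the IH on $e_2$, $e_3$ for the non-array-traversing parts. The delicate bookkeeping is: (i) making sure the constant multiplying $C(p-1)$ is genuinely $\le 3$ after partial evaluation (counting the $!scanl!$/$!scanr!$/$!shift1L!$/$!shift1R!$ overhead in $!reduce!$ carefully, each of which is only $O(n)$ and independent of $e_1$), and (ii) that the "$+4$ from the base, not $\times 4$ at each level" slack stays consistent — I would phrase the auxiliary invariant as $\cost(G)\le 3^{p}\cdot(3\cdot\cost(!e!) + \text{const})$ or similar so the algebra of the recurrence is clean, then weaken to the stated $4\cdot 3^{p}\cdot\cost(!e!)$ at the end. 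This array-case recurrence, and in particular pinning down the exact constant so that it is $\le 3$ rather than something larger, is the step I expect to require the most care.
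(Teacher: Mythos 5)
Your proposal follows essentially the same route as the paper's own proof sketch: induction on the term structure, relying on the partial-evaluation step (inlining and forward substitution) to eliminate the $O(n)$ MOVES overhead of the continuation via the invariant that at most two entries of the applied tuple are non-variables, and deriving the $3^{p}$ factor from the recurrence caused by the three calls to (sub)gradients of the body in the \texttt{map2}/\texttt{reduce} cases, whose nesting depth drops by one. The paper's argument is itself only a sketch at the same level of detail, so your write-up is, if anything, slightly more explicit about how the recurrence closes.
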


The cheap gradient principle (see e.g. \cite{griewank2008evaluating}) for reverse-mode
asserts that evaluating the gradient of a function $e:\RR^n\to\RR$ 
should be the same order of cost as evaluating $e$. 
More precisely, there should be a constant $K$ such that for each program !$\Gamma \vdash$ e: $\reals$! in the context $\Gamma=\{x_1:\reals,\ldots,x_n:\reals\}$,
!$\cost$($\grad$e) $\leq$ $K$*$\cost$(e)!.

\begin{proof}[Proof Sketch]
    As $\directD{\rho}{\Gamma}{Y}$ is defined by induction on programs, it suffices to show locally 
that $\directD{\rho}{\Gamma}{Y}$ verifies the cheap gradient principle. 

There is a first restriction that prevents our transformation from satisfying the cheap gradient principle.
If we try to show that the cheap gradient principle holds by induction on terms, it fails for !map2!.
When differentiating !map2!, there are three series of calls to (sub)gradients of !e$_1$!. 
The induction hypothesis is then too weak to conclude. 
The problem comes from the fact that !e$_1$! could itself use !map2!. 
So the constant $K$ can be independent of $n$ and of 
the size of the term if we allow it to be dependent on the level of nesting of !map2!.
A similar phenomenon happens with !reduce!.

Another problem is that the continuation part of $\directD{\rho}{\Gamma}{Y}$ adds $O(n)$ MOVES at each step.
This overhead is precisely what that our inlining and forward substitution removes, 
as was exemplified in Section~\ref{sub:Partial evaluation and optimization}. 

After that, the proof is by routine induction on !e!. 
First, one computes the cost of $\directD{\rho}{\Gamma}{Y}$(!e!).
It has $n$ too many MOVES which are removed by inlining and forward substitution using the invariant that at each step, 
at most 2 !x!$_i$ in the continuation are not variables.
\end{proof}

\begin{figure*}[t]
    \begin{tabular}{|l c l|}
        \hline
        \textit{Inlining and forward substitution}  & &\\ \hline
        !(fun (x$_1$,$\ldots$,x$_n$) -> e)(e$_1$,$\ldots$,e$_n$)! & \multirow{2}{*}{\transto} & !let x$_1$=e$_1$ in $\ldots$! \\
        && !let x$_n$=e$_n$ in e! \\ \hline
        !let x$_1$=e$_1$ in e$_2$! \quad(!x$_1$!$\not\in$!FV(e$_2$))! & \transto & !e$_2$!  \\ \hline
        !let x$_1$=x$_2$ in e! & \transto & !e[x$_2$/x$_1$]! \\ \hline
        !let x$_1$= c in e!  \quad(!c= 0,1!) & \transto & !e[c/x$_1$]! \\
        \hline \hline
        \textit{Algebraic simplifications}  & & \\ \hline
        !0*e! & \transto & 0 \\ \hline
        !0+e, 1*e! & \transto & !e! \\
        \hline \hline
        \textit{Array algebraic simplifications}  & & \\ \hline
        !map2 * A OnesLike(B)!  & \multirow{3}{*}{\transto} & \\
        !map2 + A ZerosLike(B)! && !A!\\
        !map (x.x) A! && \\ \hline
        !map2 * A ZerosLike(B)! & \transto & !ZerosLike(B)! \\ \hline
        !reduce * 1 OnesLike(A)! & \transto & !1! \\ \hline
        !reduce + 0 ZerosLike(A)! & \transto & !0! \\ \hline
        !shift1L OnesLike(n+1)! & \multirow{2}{*}{\transto} & !OnesLike(n)! \\ 
        !shift1R OnesLike(n+1)! && \\ \hline
       
        \hline \hline
        \textit{Classic array simplification}  & & \\ \hline
        !map (x.e$_1$) (map2 (y$_1$,y$_2$.e$_2$) A B)! & \transto & !map2 (y$_1$,y$_2$.let x=e$_2$ in e$_1$) A B! \\ \hline
        \textit{Tuple partial evaluation}  & & \\ \hline
        $\pi_i$!<e$_1$,$\ldots$,e$_n$>! & \transto & !e$_i$! \\
        \hline \hline
        \textit{Let normalisation}  & & \\ \hline
        !let x=(let y=e$_1$ in e$_2$) in e$_3$! & \transto & !let y=e$_1$ in let x=e$_2$ in e$_3$! \\ \hline
        !f(let x=e$_1$ in e$_2$)! & \transto & !let x=e$_1$ in f(e$_2$)! \\
        \hline \hline
        \textit{Conditionals} & & \\ \hline
        !if e$_1$ then e$_2$ else e$_2$! & \transto & e$_2$ \\ \hline
        !if true then e$_2$ else e$_3$! & \transto & e$_2$ \\ \hline
        !if false then e$_2$ else e$_3$! & \transto & e$_3$ \\ \hline
        !f(if e$_1$ then e$_2$ else e$_3$)! & \transto & !if e$_1$ then f(e$_2$) else f(e$_3$)! \\ \hline
        \end{tabular}
    \vspace{-0.4cm}
    \caption{Optimizations for target language.}
    \label{fig:optim} 
    \vspace{-0.4cm}
\end{figure*}

\subsection{Optimizations} 
\label{sub:Optimizations}

In Figure~\ref{fig:optim} we present a list of optimizations. 
These optimizations are all obviously valid according to the semantic model. 
As our language is purely functional, we can use these extra optimizations aggressively. 
A lot of simplifications come from the ring structure of the reals, lifted to tuples and arrays.
After these optimizations, the output program is essentially a sequence of let-bindings 
and resembles SSA-code \cite{cytron1989efficient}.

Even though our reverse-mode transformation has the right complexity after the partial evaluation step, 
the constant factor is of huge importance in practice. 
The purity of our transformation allows us to make the most out of generic optimizations.  
In addition, hand-crafted efficient derivatives and more optimizations can easily be added to our language.

\begin{example}
    As shown in the supplementary material,  
    the optimizations from Figure~\ref{fig:optim} 
    are sufficient to show that the gradients of the terms of the introduction reduce to the following.
    
    \begin{tabular}{{r c l}}
        $\nabla_A$!prod(A)! &=& !map2 * (scanr * 1 (shift1L A)) (shift1R (scanl * 1 A))!\\
        $\nabla_A$!sum(A)! &=& !map (x -> 1) A!\\
        $\nabla_A$!dot(A,B)! &=& !B! 
    \end{tabular}
\end{example}
\section{Correctness}
\label{sec:correctness}

We now explain that the correctness of the three steps of the $\directD{\rho}{\Gamma}{Y}$ transformation from Section~\ref{sub:Macro for pure reverse mode transformation} can be understood in terms of translating between different standard categorical languages, 
and verified by working in a well-known category for differentiability: diffeological spaces. 

In general, categories with products (and more generally monoidal categories) can be equivalently presented in the following three ways. 
Although the three styles of presentation are categorically equivalent, the choice of presentation affects the internal language syntax.
\begin{itemize}
    \item categories, where each arrow has one source and one target. These are most familiar, but do not directly match with a programming syntax (except the categorical abstract machine). 
    \item multicategories, where each arrow has a list of source objects !A$_1,\ldots,$A$_n$! and one target !B!, thought of as a map !A$_1$ $\times\ldots\times$A$_n\to$B!. 
    These are very close to the syntax usually used in typed programming languages \cite{lambek1968deductive,staton2013universal}, 
    thinking of an arrow as a typed term !x$_1$:A$_1,\ldots,$x$_n$:A$_n$ $\vdash$ B!.
    \item concategories, aka coloured props \cite{bonchi2015full,fong2019backprop}, where each arrow has a list of source objects !A$_1\ldots$A$_n$! and a list of target objects !B$_1\ldots$B$_m$!, thought of as a map !A$_1\times\ldots\times$A$_m \to$ B$_1\times\ldots\times$B$_m$!. These match the syntax of the UNF language.
\end{itemize}

Informally, notice that it makes sense to talk about the opposite of a category or a concategory, but not the opposite of a multicategory. 
For roughly this reason, concategories are a more natural place to consider reverse derivatives than multicategories, 
even though they are syntactically less familiar. 
The $\directD{\rho}{\Gamma}{Y}$ translation passes from the multicategory for the source syntax to the concategory for UNF, where reverse-mode differentiation is easier, and then back to the multicategory for the target syntax. 
In this way, the translations between Source/Target and Source UNF/Target UNF are merely changing perspective from multicategories to concategories.
We work with a specific multicategory and concategory built from diffeological spaces and smooth maps, so that we can verify the construction of reverse derivatives. 
Since diffeological spaces support products, they can be presented as a category, a multicategory and a concategory. 
Mathematically, the difference between these presentations is almost trivial. 
But in terms of programming syntax, the difference between presentations has a big effect, as can be seen from the difference between the complex translation in Section~\ref{sec:simplediff} and the simple translations in Section~\ref{sec:unf}, which are much more readily verified.

\subsection{Denotational semantics} 
\label{sub:Denotational semantics source and Target}

A multicategory generalizes a category by allowing multimorphisms, that is, morphisms from a list of objects to an object.
Most categorical structures from category theory can be phrased similarly in multicategories.
It is standard to give a denotational semantics of a first-order language in a Cartesian category, and alternatively in a multicategory.

A term $x_1:A_1,\ldots,x_n:A_n\vdash e:A$ is interpreted in a multicategory as a morphism $\sem{e}:[\sem{A_1},\ldots,\sem{A_n}]\to \sem{A}$.
Substitution is interpreted as composition. We first consider a syntactic model for a language, which consists of a 
free multicategory on some base types and primitives. Our source and target languages induce syntactic multicategories as follows.

\begin{definition}[Syntactic multicategory for Source]
    Let $\SynSource$ be the multicategory whose objects are types of Source, and where a morphism 
    $[A_1,\ldots,A_n]\to A$ is a term $x_1:A_1,\ldots,x_n:A_n\vdash e:A$ of Source modulo the $\eta\beta$-laws.
    Composition is by substitution.
\end{definition}

We similarly define $\SynTarget$, the syntactic multicategory on the target language.

$\SynSource$ satisfies the following universal property: 
for every Cartesian multicategory $\catC$,
and every object $F(\reals)\in\catC$, morphisms $F(\underline{c})\in\catC(1;F(\reals))$, 
$F(op1)\in\catC(F(\reals);F(\reals))$, $F(op2)\in\catC(F(\reals),F(\reals);F(\reals))$, 
there is a unique multifunctor $F:\SynSource\to\catC$ respecting the interpretation and preserving all the categorical structure. 

This allows us to give a simple semantics of Source 
in the multicategory of Cartesian spaces and smooth maps between them. 

\begin{definition}[$\CartSp$]
Let $\CartSp$ be the Cartesian multicategory whose objects are Euclidean spaces
and whose morphisms $[A_1,\ldots,A_n]\to B$ are smooth functions $A_1\times\ldots A_n \to B$.

We interpret the source language in $\CartSp$ as follows. A context $\Gamma=\{x_1:A_1,\ldots,x_n:A_n\}$ is interpreted as the product $\prod_{1\leq i \leq n}\sem{A_i}$.
Well typed terms $\Gamma\vdash$!e: !$A$ are interpreted as functions $\sem{\Gamma}\to\sem{A}$.

    \begin{tabular}{r c l}
    $\seml \reals \semr$ & $\defeq$& $\RR$ \\
    $\seml$!T1xT2!$\semr$ & $\defeq$& $\seml$!T1!$\semr \times\seml$!T2!$\semr$ \\
    $\seml \Array{\reals}{n}\semr$ & $\defeq$ & $\prod_{1\leq i \leq n} \RR$ \\
    \end{tabular}
    \begin{tabular}{r c l} 
        !$\seml$op1$\semr$! & $\defeq$ & op1: $\reals\to\reals$ \\
        !$\seml$op2$\semr$! & $\defeq$ & op2: $\reals\times \reals\to\reals$ \\
        !$\seml$c$\semr$! & $\defeq$ & $c\in\RR$ \\
    \end{tabular}

Variables are interpreted as projections $\pi_i$, let binding as composition in the multicategory. 
Pairs are interpreted using the Cartesian structure of the multicategory.
This interpretation extends to !map2 (x,y.e$_1$) e$_2$ e$_3$!. It is given by
!<$\Delta_n,$,$\seml$ e$_2\semr$,$\seml$ e$_2\semr$>;swap;$(\seml$ e$_1\semr)^{\times n}$!
where $\Delta_n$ is the $n$-copy map !<id,$\ldots$,id>!, and !swap! a permutation.
Similarly, the semantics for !reduce (x,y.e$_1$) e$_2$ e$_3$! is then given by first !<$\Delta_n,$,$\seml$ e$_2\semr$,$\seml$ e$_2\semr$>!, followed by a permutation $\Gamma^n \times \RR \times \RR^n \to \reals \times (\reals \times \Gamma)^n$. 
Finally, we apply $\seml$ e$_1\semr \times id$ $n$ times, where the identity is of the obvious type at each stage.
\end{definition}

We can similarly interpret the target language in a multicategory of smooth-like spaces and functions. 
However, Target is higher-order and $\CartSp$ is not Cartesian Closed. 
Instead, we can interpret Target in the category of Diffeological spaces, 
as in \cite{huot2020correctness}. 
Diffeological spaces (\cite{iglesias2013diffeology}) are a conservative extension of $\CartSp$. 
The key idea will be that a higher-order function is called smooth if it sends smooth functions to smooth functions, meaning that we can never use it
to build first-order functions that are not smooth.

\begin{definition}
	A \emph{diffeological space} $(X,\plots{X})$ consists of a set $X$ together with, for each $n$ and each open subset $U$ of $\RR^n$,  a set $\plots{X}^U\subseteq [U\to X]$ of functions, called \emph{plots}, such that
	\begin{itemize}
	 	\item all constant functions are plots;
	 	\item if $f:V\to U$ is a smooth function and $p\in\plots{X}^U$, then $f;p\in\plots{X}^V$;
     \item if $\seq[i\in I]{p_i\in\plots{X}^{U_i}}$ is a compatible family of plots $(x\in U_i\cap U_j\Rightarrow p_i(x)=p_j(x))$
     and $\seq[i\in I]{U_i}$ covers $U$,
     then the gluing $p:U\to X:x\in U_i\mapsto p_i(x)$ is a plot.
	 \end{itemize} 
\end{definition}
We call a function $f:X\to Y$ between diffeological spaces \emph{smooth} if, for all plots
$p\in\plots{X}^U$, we have that $p;f\in \plots{Y}^U$. We write $\Diff(X,Y)$ for the set of smooth maps from $X$ to $Y$. 
Smooth functions compose, and so we have a category $\Diff$ of diffeological spaces and smooth functions.

A diffeological space is thus a set equipped with structure.
Many constructions of sets carry over straightforwardly to diffeological spaces.
For instance, given a family $\seq[i\in I]{X_i}$ of diffeological spaces,
we can equip the product $\prod_{i\in I}X_i$ of sets with the
\emph{product diffeology} in which $U$-plots are precisely the functions
of the form $\seq[i\in I]{p_i}$ for $p_i\in\plots{X_i}^U$.  
Cartesian spaces $\RR^n$ can be given the structure of a diffeological space by taking all the
smooth functions $U\to \RR^n$ as $\plots{\RR^n}^V$. We can equip the set $\Diff(X,Y)$ of smooth functions between diffeological spaces with the \emph{functional diffeology}
in which $U$-plots consist of functions $f:U\to \Diff(X,Y)$ such that 
$(u,x)\mapsto f(u)(x)$ is an element of $\Diff(U\times X, Y)$.
We can thus interpret function types !$\seml$A -> B$\semr$! = $\Diff$($\seml$!A!$\semr$,$\seml$!B!$\semr$).

\subsection{Semantics for UNF languages with concategories} 
\label{sub:Semantics for UNF using concategories}

One main reason for introducing UNF is to have a better handle over
the computation flow of the term, in the same vein as the ANF or CPS transformations.
A convenient categorical setting for this is to use string diagrams.
To better fit the standard denotational semantics of languages, 
we use concategories instead. Similar to the settings of categories and multicategories, 
most categorical constructions used for the semantics of functional languages 
have equivalent in concategories.

In particular, one can form a syntactic concategory on some base types and primitives, 
which satisfies a similar universal property as syntactic multicategories.
Two particular examples are of interest to us, as they will allow us to interpret Source UNF and Target UNF
but they will also help us explain the $\UNFSymbol$ and $UNF^{-1}$ transformations.

\begin{definition}[$\ConcatS$]
Let $\ConcatS$ be the syntactic Cartesian concategory whose types are those of Source and 
with primitives given by $op1: \reals \to \reals$, $op2: \reals,\reals \to \reals$, 
$map2_{x,y.e}: \Array{\reals}{n},\Array{\reals}{n}\to \Array{\reals}{n}$ and $reduce_{x,y.e1;e2}:\Array{\reals}{n}\to \reals$.
\end{definition}

One may notice that the syntax of $\ConcatS$ is somewhere in between the syntax of Source and of Source UNF.
Given two morphisms !e$_1$:T$_1$->T$_2$! and !e$_2$:T$_3$->T$_4$!, we denote by !e$_1 \concatcomp$ e$_2$:T$_1$,T$_3$->T$_2$,T$_4$! 
their parallel composition.
We denote by $u_T$ the unique morphism from !T! to the terminal object $[]$.
$pair_{T;A\times B}$ is the canonical isomorphism pairing !A,B! into !A$\times$B! in the context !T!.
We can interpret Source UNF in $\ConcatS$ as follows.

\begin{tabular}{l c l}
   $\seml$!var!$_{T;i} \semr$ &=& $<id_\Gamma,\pi_i>$ \\
   $\seml$!op!$_{T;n} \semr$ &=& $id_\Gamma\concatcomp op_n$\\
   $\seml$!pair!$_{T;A\times B} \semr$ &=& $pair_{T;A\times B}$ \\
   $\seml$!proj!$_{T1;T2;T3} \semr$ &=& $id_{T1}\concatcomp u_{T2}\concatcomp id_{T3}$
    \end{tabular}
   \begin{tabular}{l c l}
   $\seml$!e1!$\comp$!e2!$\semr$  &=& $\seml$!e1!$\semr \comp \seml$!e2!$\semr$ \\
   $\seml$!e1!$\pcomp$!e2!$\semr$ &=& $\seml$!e1!$\semr \comp (\seml$!e2!$\semr\concatcomp id )\comp swap$ \\
   $\seml$!map2!$_{T;x,y.e}\semr$  &=& $id_\Gamma\concatcomp map2_{x,y.e}$ \\
   $\seml$!reduce!$_{T;x,y.e;e}\semr$ &=& $id_\Gamma\concatcomp reduce_{x,y.e1;e2}$
\end{tabular}

Similarly, we introduce a second concategory for the Target part of our transformations.

\begin{definition}[$\ConcatT$]
Let $\ConcatT$ be the syntactic Cartesian concategory whose types are those of Target and 
whose primitives are given by 
$op1: \reals \to \reals$, 
$op2: \reals,\reals \to \reals$, 
$map2_{x,y.e}: \Array{\reals}{n},\Array{\reals}{n}\to \Array{\reals}{n}$, 
$reduce_{x,y.e1;e2}: \Array{\reals}{n} \to \Array{\reals}{n}$, 
$J^Top1: \reals \to \reals$, 
$J^Top2: \reals \to \reals, \reals$, 
$J^Tmap2_{x,y.e}: \Array{\reals}{n} \to \Array{\reals}{n}, \Array{\reals}{n}$, 
$J^Treduce_{x,y.e1;e2}: \Array{\reals}{n} \to \Array{\reals}{n}$ 
\end{definition}

We can interpret Target UNF in $\ConcatT$ as follows. 
The part common with Source UNF is interpreted in the same way as for the case of Source UNF.

\begin{tabular}{l c l}
    $\seml$!J!$^T$!var!$_{T;i} \semr$ &=& $\pi_T \widehat{+}(0,\pi_A,0)$ \\
    $\seml$!J!$^T$!map2!$_{T;x,y.e}\semr$  &=& $id_T\concatcomp map2_{T;x,y.e}$ \\
    $\seml$!e$_1$!$\icomp$!e$_2$!$\semr$ &=& $\Lambda$($id_\Gamma\concatcomp\seml$!e$_2$!$\semr$; $\Lambda^{-1}$($\seml$!e$_1$!$\semr$)) \\
    $\seml$!J!$^T$!reduce!$_{T;x,y.e_1;e_2}\semr$  &=& $id_T\concatcomp reduce_{T;x,y.e_1;e_2}$ \\
    $\seml$!J!$^T$!proj!$_{T1;T2;T3} \semr$ &=& $id_{T_1}\concatcomp\widehat{0}_{T_2}\concatcomp id_{T_3}$ 
    \end{tabular}
    \begin{tabular}{l c l}
    $\seml$!J!$^T$!op!$_{T;n} \semr$ &=& $id_T\concatcomp J^Top_n$ \\
    $\seml$!<e$_1$, e$_2$>!$\semr$  &=& !<!$\seml$!e$_1$!$\semr$, $\seml$!e$_2$!$\semr$!>! \\
    $\seml$!J!$^T$!pair!$_{T;A\times B} \semr$ &=& $id_T\concatcomp$!<$\pi_1,\pi_2$>! \\
 \end{tabular}

 \noindent
 where !T,A! $\vdash$ (0,$\pi_A$,0): !T! and the projection $\pi_A$ lands in the $i$-th element of the list !T!.

\subsection{Semantics of $\UNFSymbol$ transformations} 
\label{sub:Semantics for UNF transformations}

We interpret the source language in a new multicategory, 
whose morphisms are particular morphisms of $\ConcatS$.
As Source UNF is itself interpreted in $\ConcatS$, 
this gives us a way to compare terms in Source with terms in Source UNF.
This comparison of morphisms of $\ConcatS$ gives the $\UNFSymbol$ transformation.

\begin{definition}[Mutlicategory from concat]
    We define $\CSource$ to be the multicategory with the same objects as Source and\\ 
    $\CSource([A_1,\ldots,A_n],B)=\{f\in\ConcatS([A_1,\ldots,A_n],[A_1,\ldots,A_n,B], \forall i.f;\pi_{A_i}=id_{A_i})$\\
    The composition of $f_i:\underline{A}_i\to B_i$ with $g:\underline{B}\to C$ is given 
    $f_1\concatcomp\ldots\concatcomp f_n;(u_{\underline{B}_1}\concatcomp A_1 \concatcomp \ldots \concatcomp u_{\underline{B}_n}\concatcomp A_n);g$.
    In other words each term $f_i$ forgets about its output $\underline{B}_i$, and then we use the composition in the concategory.
\end{definition}

We can interpret Source in $\CSource$ as follows. The functor is an identity on types. 
On morphisms, we interpret them as morphisms in the concategory as follows.
The terminal map is interpreted as the identity. The operators !op1,op2! by themselves.
Crucially, the semantics of a !let! is simple composition in the concategory. 
The semantics of a variable is the pairing of the identity with a projection.
The operation !map2 (x,y.e)! is interpreted as the pairing of the identity and itself, and !reduce (x,y.e$_1$) $e_2$! as itself.

Interpreting Source in $\CSource$ allows us to see terms of Source as morphisms in the concategory $\ConcatS$, 
and to compare them to terms of Source UNF which are already interpreted in $\CSource$.
The following proposition can be shown by induction on the structure of the terms.

\begin{proposition}[construction above gives UNF]
    Let !$\Gamma \vdash$ e:A! be a term in Source. 
    Seen as morphisms in $\CSource$, !$\seml$e$\semr$=$\seml\UNFSymbol$(e)$\semr$!.
\end{proposition}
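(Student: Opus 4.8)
The plan is to prove the proposition by structural induction on the derivation of $\Gamma\vdash e:A$, comparing, for each term former, the morphism of $\ConcatS$ obtained from the $\CSource$-interpretation of $e$ with the morphism obtained by interpreting the Source UNF term $\UNFSymbol(e)$ directly in $\ConcatS$. Both of these live in $\ConcatS([A_1,\ldots,A_n],[A_1,\ldots,A_n,A])$ and satisfy the context-preservation invariant $f\comp\pi_{A_i}=id_{A_i}$, so the equality to be checked takes place in $\ConcatS$. The base cases are immediate from the definitions: for a variable $x$ that is the $i$-th entry of $\Gamma$, $\UNFSymbol(x)=var_{\Gamma;i}$ with $\seml var_{\Gamma;i}\semr=\langle id_\Gamma,\pi_i\rangle$, which is exactly the $\CSource$-interpretation of a variable (pairing of the identity with a projection); for a constant $c$, $\UNFSymbol(c)=c_{\Gamma;0}$ with $\seml c_{\Gamma;0}\semr=id_\Gamma\concatcomp c_0$, matching the $\CSource$-semantics that appends the constant to the incoming context.

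For the inductive cases one unfolds, on one side, the $\CSource$-interpretation — recalling that composition in $\CSource$ is composition in $\ConcatS$ precomposed by the ``forget the intermediate output'' coercions built from the maps $u_{(-)}$ — and, on the other side, the defining clause of $\UNFSymbol$ from Figure~\ref{fig:source_to_unf} together with the $\ConcatS$-semantics of the UNF primitives. Consider a let-binding: $\UNFSymbol(\letin{x{:}A}{e_1}{e_2})=\UNFSymbol(e_1)\comp\UNFSymbol(e_2)\comp proj_{\Gamma;A;B}$. By the induction hypotheses $\seml\UNFSymbol(e_1)\semr$ and $\seml\UNFSymbol(e_2)\semr$ are the $\CSource$-interpretations of $e_1$ and $e_2$, $\seml\comp\semr$ is composition in $\ConcatS$, and $\seml proj_{\Gamma;A;B}\semr=id_\Gamma\concatcomp u_A\concatcomp id_B$ is precisely the coercion that discards the slot carrying $x$ — which is exactly what the $\CSource$-composition builds into the semantics of a let. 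The cases of a projection $\pi_i(e)$, of a unary operation $op1\,e$, and of $reduce\,(x,y.e_1)\,e_2\,e_3$ (where only $e_3$ is translated recursively, since the bound function and initial value are carried as indices) are handled in the same way, using $proj$ to forget the consumed argument and the $\ConcatS$-interpretation of the unary operator, respectively of the $reduce$ primitive.

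The cases with more than one nontrivial subterm — pairing $\langle e_1,e_2\rangle$, binary operations $e_1\,op2\,e_2$, and $map2\,(x,y.e_1)\,e_2\,e_3$ — require an auxiliary lemma about the weakening operation $\widetilde{(\cdot)}_A$: namely that $\seml\widetilde{e}_A\semr$ is $\seml e\semr$ with an additional component of type $A$ threaded through unchanged in the appropriate position of the list context, i.e. $\seml\widetilde{e}_A\semr=\sigma\comp(\seml e\semr\concatcomp id_A)\comp\sigma'$ for the bookkeeping permutations $\sigma,\sigma'$ dictated by the typing rules. This is itself a short induction on $e$, checking it against each clause of the definition of $\widetilde{(\cdot)}$ — each UNF primitive merely has $A$ inserted into its type index. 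Granting this, the remaining cases assemble: for $\langle e_1,e_2\rangle$ one has $\UNFSymbol(\langle e_1,e_2\rangle)=\UNFSymbol(e_1)\comp\widetilde{\UNFSymbol(e_2)}\comp pair_{\Gamma;A\times B}$, and the induction hypothesis on $e_1$, the weakening lemma applied to $\UNFSymbol(e_2)$ combined with the induction hypothesis on $e_2$, and $\seml pair_{\Gamma;A\times B}\semr=pair_{T;A\times B}$ together reproduce the $\CSource$-semantics of a pair (evaluate $e_1$; evaluate $e_2$ while carrying the result of $e_1$ alongside; pair the two results). At each step one also verifies that the invariant $f\comp\pi_{A_i}=id_{A_i}$ is preserved, which is automatic since $proj$, $pair$, and the primitives are all designed to return the whole incoming context.

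I expect the main obstacle to be the weakening lemma together with the index and permutation bookkeeping it drags along: ``$\seml e\semr$ with an extra wire threaded through'' must be stated uniformly across all UNF formers, and the permutations repositioning the threaded-through type in the list context have to be tracked consistently so that they cancel correctly when composed with $pair$, $op$, or $map2$. No individual verification is deep, but making the indexing conventions coherent across the whole induction is where the real care is needed.
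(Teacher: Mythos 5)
Your proposal is correct and follows essentially the same route as the paper, which simply states that the proposition "can be shown by induction on the structure of the terms"; your case analysis, the identification of the weakening lemma for $\widetilde{(\cdot)}_A$ as the only nontrivial auxiliary step, and the check that the context-preservation invariant is maintained are exactly the details the paper leaves implicit.
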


Dually, we form a concategory from the syntactic multicategory for Target.
Then we use the universal property of $\ConcatT$ to construct a functor from $\ConcatT$ to this concategory.
This allows us to compare the terms of Target UNF and the terms of Target, and $\UNFSymbol^{-1}$ arises in this way.

\begin{definition}[Concat from multicat]
    A multicategory $\catC$ naturally defines a concategory with the same objects as $\catC$ and
    with morphisms $\underline{A}\to [B_1,\ldots,B_n]$ being $n$ morphisms $\underline{A}\to B_i$ of $\catC$. 
\end{definition}

We thus consider the concategory $\CTarget$ from the syntactic multicategory for Target.
We can interpret $\ConcatT$ in $\CTarget$. 
The functor is identity-on-objects, sends operations !op1,op2! to themselves. 
It sends Jacobian operations to terms of Target as given by $\UNFSymbol^{-1}$ in Figure~\ref{fig:unf_to_target}.

This interpretation is in essence $\UNFSymbol^{-1}$. 
The difference is that to preserve typing, the semantic $\UNFSymbol^{-1}$
sends a non-Jacobian primitive to a tuple, as in Example \ref{exm:unf}. 
This is highly inefficient, and the syntactic $\UNFSymbol^{-1}$ 
additionally projects to the last element, where the non-trivial information of the term is.

What remains to explain now is the reverse mode transformation between Source UNF and Target UNF.
We construct a functor $\Dsynrevsymbol_{\rho}:\ConcatS\to\ConcatT$. 
This functor computes reverse-mode derivatives. 
Because terms of Source UNF are interpreted in $\ConcatS$, we observe the effect of $\Dsynrevsymbol_{\rho}$ on them
and show that it matches the syntactic $\Dsynrevsymbol_{\rho}$ 
from Section~\ref{sub:Simple reverse mode transformation}.

\begin{definition}[$\Dsynrevsymbol_{\rho}$ as a lax functor]
    We define $\Dsynrevsymbol_{\rho}:\ConcatS\to\ConcatT$ as follows.\\
    !$\Dsynrevsymbol_{\rho}$([A$_1$,$\ldots$,A$_n$])=[A$_1$,$\ldots$,A$_n$,A$_1\times\ldots\times$A$_n\to \rho$]!.
    !$\Dsynrevsymbol_{\rho}$(op1)=<op1,$\pi_{last}\circ$ J$^T$op1>!,\\ 
    !$\Dsynrevsymbol_{\rho}$(op2)=<op2,$\pi_{last}\circ$ J$^T$op2>!,\\
    !$\Dsynrevsymbol_{\rho}$(map2 (x,y.e))=<map2 (x,y.e),$\pi_{last}\circ$ J$^T$map2 (x,y.e)>!,\\
    !$\Dsynrevsymbol_{\rho}$(reduce$_{x,y.e_1,e_2}$)=<reduce$_{x,y.e_1,e_2}$,$\pi_{last}\circ$ J$^T$reduce$_{x,y.e_1,e_2}$>!, 
    where $\pi_{last}$ is the projection to the last element (the continuation variable).
    It naturally extends to sequential composition. 
    It does not automatically extend to a multi-functor 
    \\!$\Dsynrevsymbol_{\rho}$([A$_1$,$\ldots$,A$_k$]$\concatcomp$[A$_{k+1}$,$\ldots$,A$_n$]) $\neq$ $\Dsynrevsymbol_{\rho}$([A$_1$,$\ldots$,A$_k$])$\concatcomp\Dsynrevsymbol_{\rho}$([A$_{k+1}$,$\ldots$,A$_n$])!.\\
    Still, there is a map !$\Dsynrevsymbol_{\rho}$([A$_1$,$\ldots$,A$_k$])$\concatcomp \Dsynrevsymbol_{\rho}$([A$_{k+1}$,$\ldots$,A$_n$])$\to \Dsynrevsymbol_{\rho}$([A$_1$,$\ldots$,A$_k$]$\concatcomp$[A$_{k+1}$,$\ldots$,A$_n$])!.
    Internally, as a lambda term, it is given by \\
    $(x_1,\ldots,x_k,Y_1,(x_{k+1},\ldots,x_n,Y_s)\mapsto (x_1,\ldots,x_n,\lambda(y_1,\ldots,y_n)\to Y_1(y_1,\ldots,y_k)\widehat{+}Y_2(y_{k+1},\ldots,y_n)$.
    Here, $\widehat{+}$ is the reverse-derivative of the copy map, which is known to be fanout 
    . 
    It is not surprising to see it appear as $\Dsynrevsymbol_{\rho}$ is a semantic functor, and it does not need to be efficient. 
\end{definition}

The design of the syntactic $\Dsynrevsymbol_{\rho}$ on the UNF language is inspired by the semantic one between concategories, 
and it's routine to check that they match. 

\begin{proposition}[semantic of syntactic D matches D lax functor]
    Given a well typed term !T$_1$ $\vdash$ e: T$_2$! in Source UNF, we have
    \begin{center}
        $\seml\Dsynrevsymbol_{\rho}$!e!$\semr$ = $\Dsynrevsymbol_{\rho}\seml$!e!$\semr$
    \end{center}
\end{proposition}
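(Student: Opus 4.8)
The plan is to prove the identity by structural induction on the derivation of the Source UNF judgment $T_1 \vdash e : T_2$, unfolding both sides against three sets of equations: the clauses defining the syntactic $\Dsynrevsymbol_{\rho}$ (Figure~\ref{fig:diff_macro}), the interpretation of Target UNF in $\ConcatT$, and the definition of the lax functor $\Dsynrevsymbol_{\rho}\colon\ConcatS\to\ConcatT$. One might hope instead to invoke the universal property of $\ConcatS$ and merely check agreement on generators, but $\Dsynrevsymbol_{\rho}$ is only \emph{lax} (with respect to the parallel composition $\concatcomp$), so the cleanest route is the direct induction, with the real content concentrated in the base cases.

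First I would dispatch the inductive step for sequential composition. Since $\Dsynrevsymbol_{\rho}(e_1\comp e_2)=\Dsynrevsymbol_{\rho}(e_1)\comp\Dsynrevsymbol_{\rho}(e_2)$ by Figure~\ref{fig:diff_macro}, since $\seml\cdot\semr$ carries $\comp$ in Source UNF to internal composition in $\ConcatS$, and since the lax functor $\Dsynrevsymbol_{\rho}$ strictly preserves internal composition $\comp$ (this is exactly what the ``naturally extends to sequential composition'' clause of its definition provides), this step reduces immediately to the induction hypotheses on $e_1$ and $e_2$.

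Then I would verify the base cases, one per primitive of Source UNF. For the scalar and array primitives ($\mathsf{op}_{T;n}$, $\mathsf{map2}$, $\mathsf{reduce}$) the left-hand side is the $\ConcatT$-denotation of $F(p,J^Tp)$, obtained by expanding the interpretations of $\mathsf{proj}$, of $p$ and $J^Tp$, of term pairing, of $\comp$, and of the internal precomposition $\icomp$; the right-hand side is $\Dsynrevsymbol_{\rho}$ applied to $\seml p\semr_{\ConcatS}=id_\Gamma\concatcomp p$, computed from the functor's stated action on the primitive (e.g.\ $\Dsynrevsymbol_{\rho}(op2)=\langle op2,\pi_{last}\circ J^Top2\rangle$), on $id_\Gamma$, and from its lax structure map for the splitting $\Gamma\concatcomp(\text{inputs of }p)$. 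Because the carried context $\Gamma$ is threaded through untouched, the lax structure map composed with identities on $\Gamma$ collapses, and the comparison comes down to the bookkeeping of zero-insertions and of how the continuation $\underline T\to\rho$ is pre-composed — which is precisely what $F$ and the $\pi_{last}$ projections encode — so the two sides coincide after a short calculation.

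I expect the main obstacle to be the structural primitives $\mathsf{var}$, $\mathsf{pair}$, $\mathsf{proj}$, whose $\ConcatS$-denotations are built from the Cartesian structure (copy, projections, the pairing isomorphism) rather than from a declared primitive. For these I would first pin down the action of the lax functor $\Dsynrevsymbol_{\rho}$ on these structural morphisms — the reverse derivative of copy being fanout $\widehat{+}$, of a weakening/projection being zero-insertion, of the pairing isomorphism essentially itself — and then check this against the explicit Jacobian primitives $J^T\mathsf{var}$, $J^T\mathsf{pair}$, $J^T\mathsf{proj}$ in $\ConcatT$ (equivalently, against the $\invUNFSymbol$-clauses for $J^T(\cdot)$ in Figure~\ref{fig:unf_to_target}). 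The delicate point is to confirm that the $\widehat{+}$ produced by laxity and by the reverse of copy lands in the same coordinate as in the $F$ construction; once the coordinates are aligned, equality is a direct computation, and the induction closes.
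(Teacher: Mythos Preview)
Your proposal is correct and matches the paper's approach: the paper does not give a detailed proof, stating only that ``the design of the syntactic $\Dsynrevsymbol_{\rho}$ on the UNF language is inspired by the semantic one between concategories, and it's routine to check that they match.'' Your structural induction on the Source UNF judgment, with the inductive step for $\comp$ and base cases for each primitive (including the more delicate structural ones), is precisely what that routine check amounts to.
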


In summary, we have the following picture (not a proper categorical diagram):

\[
\begin{tikzcd}
    SourceUNF \ar[r,"\sem{-}"] & Concat1 \ar[r,"\Dsynrevsymbol_{\rho}"] & Concat2 \ar[r,"\UNFSymbol^{-1}"] & C_{Target} \\
    Source \ar[r,"\UNFSymbol"] & \CSource \ar[u, hook] & TargetUNF \ar[u,"\sem{-}"] & Target \ar[u, hook] 
\end{tikzcd}
\]

\subsection{Correctness theorem} 
\label{sub:Correctness theorem}

First, we start from the correctness of the syntactic $\Dsynrevsymbol_{\rho}$: Source UNF $\to$ Target UNF, which is easy to establish, 
and then propagate this information to Source and Target via the $\UNFSymbol$ and $\UNFSymbol^{-1}$ transformations. 
The semantics brackets $\sem{-}$ in this section are in diffeological spaces.

\begin{proposition}[Correctness $\Dsynrevsymbol_{\rho}$]
    For every term !$\reals^{\times n} \vdash$ e: $\reals^{\times n+1}$! in Source UNF,
    \begin{center}
        $\pi_2 \seml \Dsynrevsymbol_{\rho}$!e!$\semr(x_1,\ldots x_n,Id_{\RR^n})$=$J^T_{(x_1,\ldots x_n)}\seml$!e!$\semr$
    \end{center} 
\end{proposition}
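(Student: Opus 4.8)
The plan is to prove, by induction on the derivation of the Source UNF judgment $T_1 \vdash e : T_2$, the following strengthened statement: for every $\rho$, every $\vec{x} \in \seml T_1\semr$ and every $Y \in \Diff(\seml T_1\semr, \rho)$,
\[
 \seml \Dsynrevsymbol_{\rho}(e)\semr(\vec{x}, Y) \;=\; ( \seml e\semr(\vec{x}),\ Y \circ (J_{\vec{x}}\seml e\semr)^{T} ),
\]
where $J_{\vec{x}}\seml e\semr$ is the Jacobian at $\vec{x}$ of the smooth map $\seml e\semr$ (all ground types of Source denote Euclidean spaces, so this always makes sense) and $(-)^{T}$ is transpose. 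Once this is established, the proposition follows by instantiating $\rho := \RR^{n}$, $Y := Id_{\RR^{n}}$ and projecting: $\pi_2 \seml \Dsynrevsymbol_{\rho}(e)\semr(x_1,\dots,x_n, Id_{\RR^{n}}) = Id_{\RR^{n}} \circ (J_{(x_1,\dots,x_n)}\seml e\semr)^{T} = J^{T}_{(x_1,\dots,x_n)}\seml e\semr$. By the earlier proposition that $\seml \Dsynrevsymbol_{\rho}(-)\semr = \Dsynrevsymbol_{\rho}\seml -\semr$, I may equivalently argue with the semantic lax functor $\Dsynrevsymbol_{\rho} : \ConcatS \to \ConcatT$ and the diffeological interpretations of $\ConcatS$ and $\ConcatT$, which keeps the bookkeeping of the carried context transparent.

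The only compound construct is sequential composition, so this is the inductive step. Here $\Dsynrevsymbol_{\rho}(e_1 \comp e_2) = \Dsynrevsymbol_{\rho}(e_1) \comp \Dsynrevsymbol_{\rho}(e_2)$ holds strictly (the lax comparison map is needed only for $\concatcomp$, which does not appear as a term former of Source UNF), and concategorical composition of these continuation‑carrying morphisms is just function composition of the denotations. Feeding $(\vec{x}, Y)$ through the induction hypothesis for $e_1$ and then that for $e_2$ leaves the second component equal to $Y \circ (J_{\vec{x}}\seml e_1\semr)^{T} \circ (J_{\seml e_1\semr(\vec{x})}\seml e_2\semr)^{T}$, which equals $Y \circ (J_{\vec{x}}\seml e_1 \comp e_2\semr)^{T}$ by the chain rule together with the contravariant functoriality of transpose on linear maps. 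So the invariant is preserved.

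The base cases are the primitives $\mathsf{var}_{T;i}$, $\mathsf{op}_{T;n}$, $\mathsf{pair}_{T;A\times B}$, $\mathsf{proj}_{T_1;T_2;T_3}$, $\mathsf{map2}_{T;x,y.e}$ and $\mathsf{reduce}_{T;x,y.e_1;e_2}$. For each, $\Dsynrevsymbol_{\rho}$ produces $F(\mathsf{prim}, J^{T}\mathsf{prim})$, and unfolding the definition of $F$ shows that its first component is $\seml\mathsf{prim}\semr$ (the leading $\mathsf{proj}$ just discards the continuation slot) and its second component precomposes the incoming $Y$ with $\seml J^{T}\mathsf{prim}\semr$. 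So all that remains per primitive is to check that $\seml J^{T}\mathsf{prim}\semr$ --- which, through the interpretation of $\ConcatT$, is the term $\invUNFSymbol(J^{T}\mathsf{prim})$ --- denotes the pointwise transpose Jacobian of $\seml\mathsf{prim}\semr$. For $\mathsf{var}$, $\mathsf{op}$, $\mathsf{pair}$, $\mathsf{proj}$ this is elementary linear algebra, using only that $\partial\mathsf{op1}$, $\partial_1\mathsf{op2}$, $\partial_2\mathsf{op2}$ denote the corresponding (partial) derivatives; e.g.\ the transpose Jacobian of $(\vec t, \vec u) \mapsto (\vec t, \vec u, \mathsf{op}(\vec u))$ sends $(\vec a, \vec b, c)$ to $(\vec a, \vec b + c\,\nabla\mathsf{op}(\vec u))$, which is exactly what $\invUNFSymbol(J^{T}\mathsf{op})$ computes.

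I expect the main obstacle to be the $\mathsf{map2}$ and $\mathsf{reduce}$ base cases. For $\mathsf{map2}_{T;x,y.e}$ I would verify that the formula in $\invUNFSymbol(J^{T}\mathsf{map2})$ --- which multiplies the incoming cotangent array pointwise by $\grad_{\{x\}} e$ and $\grad_{\{y\}} e$ and folds the context‑directed contributions back via $\mathsf{reduce}\ \widehat{+}\ \widehat{0}$ --- is the transpose of the Jacobian of $(\vec t, A, B) \mapsto (\vec t, A, B, [\seml e\semr(\vec t, A_i, B_i)]_i)$; this is a routine but somewhat lengthy computation, using correctness of $\grad$ on the smaller body $e$ and linearity of the differential in each array coordinate. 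For $\mathsf{reduce}_{T;x,y.e_1;e_2}$ one must in addition show that the $\mathsf{scanl}$/$\mathsf{scanr}$/$\mathsf{shift1L}$/$\mathsf{shift1R}$ scaffolding assembles the telescoping products of the partial derivatives of $e_1$ along the left fold that constitute the transpose Jacobian of the fold. I would isolate this last fact as a separate lemma about the reverse derivative of a fold‑left and prove it by induction on the array length; this is the only genuinely computational ingredient, everything else being bookkeeping.
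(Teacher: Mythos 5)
Your proposal is correct and follows essentially the same route as the paper: a routine induction on the first-order Source UNF terms, whose only real content is checking that $\sem{J^T A}=J^T\sem{A}$ for each primitive $A$, with \texttt{map2} and \texttt{reduce} isolated as the nontrivial cases and handled by separate lemmas (the paper unrolls the \texttt{reduce} telescoping argument on a size-3 array where you propose induction on the length, but the idea is identical). Your strengthened induction hypothesis carrying an arbitrary continuation $Y$ is exactly the form needed to make the composition case go through and is implicit in the paper's "routinely proved by induction" remark.
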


This is routinely proved by induction, as the language is first-order. 
This uses the fact that for every primitive constant $A$, $\sem{J^TA}=J^T\sem{A}$.

Recall that the intuition from Source UNF is that it consists of terms of Source that also return their context.
From there, the intuition for the Jacobian of a primitive in Target UNF is that it should be the Jacobian of
the corresponding term in Target. This is easily checked for scalar operations. 
For the non-trivial cases, we have 

\begin{proposition}
$\UNFSymbol^{-1}$ preserves the semantics of Jacobians of !map2! and !reduce!.
    \begin{center}
\begin{tabular}{r c l}
    !$\seml\UNFSymbol^{-1}(J^T$map2$_{x,y.e})\semr$! &=& !$J^T\seml\UNFSymbol^{-1}($map2$_{x,y.e})\semr$!\\
    !$\seml\UNFSymbol^{-1}(J^T$reduce$_{x,y.e_1;e_2})\semr$! &=& !$J^T\seml\UNFSymbol^{-1}($reduce$_{x,y.e_1;e_2})\semr$!
\end{tabular}
\end{center}
\end{proposition}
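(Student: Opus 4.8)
The plan is to prove each of the two equalities by unfolding $\UNFSymbol^{-1}$ (Figure~\ref{fig:unf_to_target}) on both sides and reducing it to a plain calculus identity in $\Diff$. On the right-hand side, $\UNFSymbol^{-1}$ of the bare primitive is just the combinator call itself: $\UNFSymbol^{-1}(\text{map2}_{x,y.e})$ denotes the pointwise function $g(\gamma,A,B)_i=\sem{e}(\gamma,A_i,B_i)$, and $\UNFSymbol^{-1}(\text{reduce}_{x,y.e_1;e_2})$ denotes the fold $(\gamma,A)\mapsto r_m$ with $r_0=\sem{e_2}(\gamma)$ and $r_k=\sem{e_1}(r_{k-1},A_k)$; so the right-hand sides are simply the transpose Jacobians of these two maps (read with the context-threading convention of UNF, i.e.\ these maps also return their inputs). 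On the left-hand side one takes the denotation of the explicit combinator expression produced by $\UNFSymbol^{-1}$. Two facts are used throughout: that $\nabla_S e_1$ denotes the partial gradient of $\sem{e_1}$ with respect to the variables $S$ (the first-order instance of the correctness of $\Dsynrevsymbol_{\rho}$ proved above, or a direct induction on the scalar term $e_1$), and that the primal inputs needed to evaluate those sub-derivatives are exactly the ones in scope, which is precisely what makes the expression well typed (Well typedness of $\UNFSymbol^{-1}$).

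For \emph{map2}, the transpose Jacobian of a pointwise function is block-diagonal in the array index and ``collapsing'' on the shared context: $J^T g$ at $(\gamma,A,B)$ sends a cotangent $w$ to $\bigl(\sum_i w_i\,\partial_\Gamma\sem{e}(\gamma,A_i,B_i),\ (w_i\,\partial_x\sem{e}(\gamma,A_i,B_i))_i,\ (w_i\,\partial_y\sem{e}(\gamma,A_i,B_i))_i\bigr)$, accumulated onto the incoming cotangents of the context and of $A,B$. I would then read off the three components of $\UNFSymbol^{-1}(J^T\text{map2}_{x,y.e})$: the auxiliary array has $G_i = w_i\cdot\nabla_\Gamma\sem{e}(\gamma,A_i,B_i)$, so $\text{reduce}\ \widehat{+}\ \widehat{0}\ G$ is exactly $\sum_i w_i\,\partial_\Gamma\sem{e}(\gamma,A_i,B_i)$, while the two trailing $\text{map2}$s give precisely the $A$- and $B$-components. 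Matching term by term — using only the semantics of $\text{map2}\ {*}$, of $\text{reduce}\ \widehat{+}\ \widehat{0}$ and of $\widehat{+}$ as pointwise product, summation and addition — closes the case; the degenerate $m=0$ case is immediate.

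For \emph{reduce}, the key lemma is the reverse-sweep form of the chain rule for a fold: with $r_k$ as above, $\partial_{A_j} r_m = \partial_y\sem{e_1}(r_{j-1},A_j)\cdot\prod_{l=j+1}^{m}\partial_x\sem{e_1}(r_{l-1},A_l)$, which I would prove by a short induction on $m-j$. It then remains to check that the pipeline in $\UNFSymbol^{-1}(J^T\text{reduce})$ computes exactly $(w\cdot\partial_{A_j} r_m)_j$: $\text{scanl}(x,y.e_1)\,r_0\,A$ returns $[r_0,\ldots,r_m]$; $\text{shift1R}$ trims it to $A_0=[r_0,\ldots,r_{m-1}]$, so $A_0[j]=r_{j-1}$; hence $A_2[j]=\partial_y\sem{e_1}(r_{j-1},A_j)$, while $\text{shift1L}$ realigns the $\partial_x$-array so that $A_1[j]=\partial_x\sem{e_1}(r_j,A_{j+1})$, and $\text{scanr}\ {*}\ 1$ turns $A_1$ into the suffix-product array $A_3[j]=\prod_{l=j+1}^{m}\partial_x\sem{e_1}(r_{l-1},A_l)$; the final $\text{map2}(x,y.\,x{*}y{*}x_{n+2})$ yields $w\cdot A_2[j]\cdot A_3[j]$, which is $w\cdot\partial_{A_j} r_m$ as required, and again the $m=0$ case is trivial.

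I expect the \emph{reduce} case to be the main obstacle: it requires both the inductive identity for differentiating a fold and then a careful index-alignment argument showing that the composition of $\text{scanl}$, $\text{shift1R}$, $\text{shift1L}$ and $\text{scanr}\ {*}\ 1$ reassembles exactly the reverse-sweep products — with all the off-by-one boundary conditions correct (which element each shift drops, whether each scan includes its seed, the empty-product convention) — together with the degenerate empty-array case. By contrast, the \emph{map2} case is a routine application of the chain rule to a pointwise map plus a single summation over the shared context.
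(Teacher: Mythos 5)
Your overall strategy is the same as the paper's: unfold $\UNFSymbol^{-1}$ on both sides, reduce the \emph{map2} case to a pointwise application of the chain rule plus one summation over the shared context, and reduce the \emph{reduce} case to the reverse-sweep product formula $\partial_{A_j} r_m = \partial_y\sem{e_1}(r_{j-1},A_j)\cdot\prod_{l=j+1}^{m}\partial_x\sem{e_1}(r_{l-1},A_l)$, followed by an index-alignment check that the $\mathrm{scanl}/\mathrm{shift1R}/\mathrm{shift1L}/\mathrm{scanr}$ pipeline assembles exactly these products. The paper carries out the \emph{map2} case by specializing the cotangent to one-hot vectors $Z_i$ (reducing to the scalar statement componentwise), where you compute the block-diagonal transpose Jacobian directly; and it carries out the \emph{reduce} case by inspection on an array of size $3$, where you propose a genuine induction on $m-j$. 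Both of your variants are fine and, if anything, slightly more rigorous than what the paper records.

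There is, however, one genuine gap in your \emph{reduce} case. The fold you write down is $r_0=\sem{e_2}(\gamma)$, $r_k=\sem{e_1}(r_{k-1},A_k)$, and $e_2$ is typed in the ambient context, so $r_m$ depends on $\gamma$ not only through nothing (the function argument $e_1$ is closed) but through the seed $r_0$. The true transpose Jacobian of $(\gamma,A)\mapsto r_m$ therefore has a $\gamma$-component $w\cdot\partial_\gamma r_0\cdot\prod_{l=1}^{m}\partial_x\sem{e_1}(r_{l-1},A_l)$ in addition to the array component you compute. Yet $\UNFSymbol^{-1}(J^T\mathrm{reduce}_{x,y.e_1;e_2})$ returns the context cotangents unchanged, i.e.\ it silently asserts that this seed contribution vanishes. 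You only verify the array component, so the claimed equality is not yet established. The paper closes this with a dedicated lemma exploiting the semantic restriction on $\mathrm{reduce}$, namely that $\sem{e_2}$ is a unit for the associative operation $\sem{e_1}$: from $v\mathbin{\mathrm{op2}}e = e$ one differentiates and concludes $\frac{\partial\, \mathrm{op2}(v,e)}{\partial y_1}\cdot\frac{\partial v}{\partial x_i}=0$ (and $\frac{\partial\, \mathrm{op2}(v,e)}{\partial y_2}=1$), so the derivative through the seed is zero. You need this step, or an equivalent appeal to the unit/associativity hypothesis, before the two sides can be matched; without it the argument would actually fail for a general fold with a non-unit seed (which is exactly why the paper's appendix treats $\mathrm{foldl}$ separately and there \emph{does} propagate a cotangent to the seed).
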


This is proved in the supplementary material.

From this, we now deduce that the composite transformation $\UNFSymbol$, $\Dsynrevsymbol_{\rho}$, $\UNFSymbol^{-1}$ is correct
in the sense that it produces a term that computes the gradient of the original term.

\begin{proposition}
    If !x$_1:\reals$,$\ldots$,x$_n:\reals$ $\vdash$ e: $\reals$! then \\
    $\seml \pi_2 \UNFSymbol^{-1}(\Dsynrevsymbol_{\rho}(\UNFSymbol($!e!$)))\semr(x_1,\ldots x_n,Id_{\RR^n})$=$J^T_{(x_1,\ldots x_n)}$!<!$Id_{\RR^n}$,$\seml$!e!$\semr$!>!.
\end{proposition}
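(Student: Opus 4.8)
The plan is to chase the two rows of the diagram above, combining the three correctness statements already established for the separate legs $\UNFSymbol$, $\Dsynrevsymbol_{\rho}$, and $\UNFSymbol^{-1}$. Throughout I take $\rho=\RR^{n}$ (the full-gradient case) and read $Id_{\RR^{n}}$ as the identity map $\RR^{n}\to\RR^{n}=\RR^{n}\to\rho$. First I would record, using the well-typedness lemmas for $\UNFSymbol$, $\Dsynrevsymbol_{\rho}$, and $\UNFSymbol^{-1}$, that $\UNFSymbol(e)$ is a Source UNF term $\reals^{\times n}\vdash\reals^{\times(n+1)}$, so that $\Dsynrevsymbol_{\rho}(\UNFSymbol(e))$ and then $\UNFSymbol^{-1}(\Dsynrevsymbol_{\rho}(\UNFSymbol(e)))$ are well-formed and all semantic brackets below make sense in diffeological spaces; in particular $\UNFSymbol^{-1}(\Dsynrevsymbol_{\rho}(\UNFSymbol(e)))$ has type $\reals\times(\reals^{\times(n+1)}\to\rho)$ in the context $x_{1}{:}\reals,\ldots,x_{n}{:}\reals,\,x_{n+1}{:}\reals^{\times n}\to\rho$, so that $\pi_{2}$ of it picks out the continuation part.

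Next I would identify $\sem{\UNFSymbol(e)}$. By the proposition ``construction above gives UNF'' we have $\sem{e}=\sem{\UNFSymbol(e)}$ as morphisms of $\CSource\hookrightarrow\ConcatS$; unfolding the interpretation of Source in $\CSource$ (terminal map to the identity, scalar operations to themselves, a \textbf{let} to composition in the concategory, a variable to $\langle id,\pi_{i}\rangle$, \texttt{map2} to $\langle id,\texttt{map2}\rangle$, and so on) and interpreting the resulting $\ConcatS$-morphism in diffeological spaces gives exactly the smooth map $\vec{x}\mapsto(\vec{x},\sem{e}(\vec{x}))$. Thus $\sem{\UNFSymbol(e)}=\langle Id_{\RR^{n}},\sem{e}\rangle:\RR^{n}\to\RR^{n+1}$, and hence $J^{T}_{(x_{1},\ldots,x_{n})}\sem{\UNFSymbol(e)}=J^{T}_{(x_{1},\ldots,x_{n})}\langle Id_{\RR^{n}},\sem{e}\rangle$.

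Then I would apply the proposition ``Correctness $\Dsynrevsymbol_{\rho}$'' to the Source UNF term $\UNFSymbol(e)$, obtaining $\pi_{2}\sem{\Dsynrevsymbol_{\rho}(\UNFSymbol(e))}(x_{1},\ldots,x_{n},Id_{\RR^{n}})=J^{T}_{(x_{1},\ldots,x_{n})}\sem{\UNFSymbol(e)}$, and finally transport this along $\UNFSymbol^{-1}$. For the last step I would argue that $\sem{\pi_{2}\,\UNFSymbol^{-1}(\Dsynrevsymbol_{\rho}(\UNFSymbol(e)))}=\pi_{2}\sem{\Dsynrevsymbol_{\rho}(\UNFSymbol(e))}$ as functions of $x_{1},\ldots,x_{n}$ and the continuation variable: the semantic $\UNFSymbol^{-1}:\ConcatT\to\CTarget$ is identity-on-objects and semantics-preserving (each scalar operation goes to itself, and each Jacobian primitive to a Target term denoting the transpose Jacobian of the corresponding term --- immediate for $J^{T}\texttt{op1}$, $J^{T}\texttt{op2}$, $J^{T}\texttt{var}$, $J^{T}\texttt{pair}$, $J^{T}\texttt{proj}$, and exactly the content of the proposition ``$\UNFSymbol^{-1}$ preserves the semantics of Jacobians of \texttt{map2} and \texttt{reduce}'' for the remaining two), while the only difference between this semantic $\UNFSymbol^{-1}$ and the syntactic one of Figure~\ref{fig:unf_to_target} is the extra projection of a non-Jacobian primitive's output onto its last component, which alters only the $\pi_{1}$ (``value'') component of $\UNFSymbol^{-1}(\Dsynrevsymbol_{\rho}(\UNFSymbol(e)))$ and so is invisible after $\pi_{2}$. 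Chaining the equalities and evaluating at $(x_{1},\ldots,x_{n},Id_{\RR^{n}})$ then yields $\sem{\pi_{2}\,\UNFSymbol^{-1}(\Dsynrevsymbol_{\rho}(\UNFSymbol(e)))}(x_{1},\ldots,x_{n},Id_{\RR^{n}})=J^{T}_{(x_{1},\ldots,x_{n})}\langle Id_{\RR^{n}},\sem{e}\rangle$, which is the claim.

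I expect the obstacles to be bookkeeping rather than conceptual. The first is unfolding the $\CSource$-interpretation carefully enough to justify $\sem{\UNFSymbol(e)}=\langle Id_{\RR^{n}},\sem{e}\rangle$ --- a short structural induction on $e$, but one must be precise about how the carried context is threaded through \textbf{let}s and non-unary operations. The second is tracking the $\pi_{\mathrm{last}}$ discrepancy between the syntactic and semantic $\UNFSymbol^{-1}$ and verifying that it truly only touches the component discarded by $\pi_{2}$; everything else is routine.
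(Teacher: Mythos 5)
Your proposal is correct and follows essentially the same route as the paper, which states this proposition as a direct consequence ("From this, we now deduce\ldots") of the three preceding results: $\seml e\semr=\seml\UNFSymbol(e)\semr$ in $\CSource$, the correctness of $\Dsynrevsymbol_{\rho}$ on Source UNF terms, and the semantics-preservation of $\UNFSymbol^{-1}$ on Jacobian primitives. Your write-up simply makes explicit the bookkeeping (typing, the identification $\seml\UNFSymbol(e)\semr=\langle Id_{\RR^n},\seml e\semr\rangle$, and the $\pi_{last}$ discrepancy being invisible under $\pi_2$) that the paper leaves implicit.
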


By inspecting what that composition of transformations does on the terms of Source, 
we show that this indeed computes the same as the transformation 
$\directD{\rho}{\Gamma}{Y}$ from Section~\ref{sub:Macro for pure reverse mode transformation}. 
This should not come as a surprise because the design of $\directD{\rho}{\Gamma}{Y}$ was in fact guided via 
this decomposition and intermediate representation.

\begin{proposition}
    $\seml\UNFSymbol^{-1}(\Dsynrevsymbol_{\rho}(\UNFSymbol($!e!$)))\semr$ = $\seml\directD{\rho}{\Gamma}{Y}($!e!$)\semr$. 
\end{proposition}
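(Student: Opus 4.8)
The plan is to prove the sharper statement that $\UNFSymbol^{-1}(\Dsynrevsymbol_{\rho}(\UNFSymbol(e)))$ and $\directD{\rho}{\Gamma}{Y}(e)$ coincide modulo the $\beta\eta$-laws together with the administrative \texttt{let}-conversions (\texttt{let}-floating, dead-\texttt{let} elimination, and variable-for-variable substitution), all of which are valid in $\Diff$, and then to obtain the displayed semantic equality by applying $\seml-\semr$ to both sides. The argument will be an induction on the derivation of $\Gamma\vdash$!e:A!.

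The structural fact that drives the induction is that all three transformations are compatible with the sequential composition $\comp$ of Source UNF. Inspecting Figure~\ref{fig:source_to_unf}, $\UNFSymbol$ produces the image of every compound term as a $\comp$-chain of the images of its immediate subterms, closed off by an administrative !proj! or !pair! and, for the non-first argument of an operator, using the weakening $\widetilde{\UNFSymbol(-)}$; $\Dsynrevsymbol_{\rho}$ distributes over $\comp$ by definition (Figure~\ref{fig:diff_macro}); and $\UNFSymbol^{-1}$ turns $\comp$ into a nested \texttt{let} (Figure~\ref{fig:unf_to_target}). Hence $\UNFSymbol^{-1}(\Dsynrevsymbol_{\rho}(\UNFSymbol(e)))$ is a chain of \texttt{let}-bindings whose right-hand sides are the $\UNFSymbol^{-1}$-images of $F(p,J^T p)$ for the Source UNF primitives $p$ occurring in $\UNFSymbol(e)$, together with the \texttt{let}-bindings introduced by the weakenings. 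A weakening only enlarges the carried list of types and otherwise leaves a term alone, so it corresponds exactly to threading one more bound variable through the continuation — precisely what the clauses of Figure~\ref{fig:direct_diff_macro} do when they pass from $\ad{\Gamma}{Y}$ on one argument of an operator to $\ad{\Gamma,x_1}{Y_1}$ on the next. It will therefore suffice to check, primitive by primitive, that $\UNFSymbol^{-1}$ of $F(p,J^T p)$ agrees, after \texttt{let}-floating, with the matching clause of $\directD{\rho}{\Gamma}{Y}$, and then to reassemble the chain; the reassembly is exactly where the intermediate continuation variables $Y_1, Y_2, \dots$ of Figure~\ref{fig:direct_diff_macro} are recreated by the nested \texttt{let}s of $\UNFSymbol^{-1}$.

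For the primitives this is a direct computation. First I would do the base cases: for !var! and a constant, $\UNFSymbol^{-1}$ of $J^T$!var! and of $J^T$!op! (with $n=0$) unfold, by Figure~\ref{fig:unf_to_target}, to the ``add $z$ at index $\mathrm{pos}(x)$'' map and to the identity-on-context map, which are precisely the continuation bodies of the corresponding clauses of Figure~\ref{fig:direct_diff_macro}. Then !let!, $\langle-,-\rangle$, $\pi_i$, !op1! and !op2!: here $\UNFSymbol^{-1}$ of $J^T$!proj!, $J^T$!pair! and $J^T$!op! unfold to the zero-padding, the re-pairing, and the $\partial_i(\op)\cdot z$ expressions that occur in the matching clauses of $\directD{\rho}{\Gamma}{Y}$, so the only work is to \texttt{let}-float the bindings coming off the $\comp$-chain into the shape used in Figure~\ref{fig:direct_diff_macro}, and to observe that the trailing !proj! in $\UNFSymbol$ of a !let! supplies exactly the $0_A$ padding in that clause's continuation.

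The step I expect to be the main obstacle is the !map2! and !reduce! cases, where $\UNFSymbol^{-1}$ of $J^T$!map2! and of $J^T$!reduce! are the compound expressions of Figure~\ref{fig:unf_to_target} built from !scanl!, !scanr!, !shift1L!, !shift1R!, the !map2! of the sub-gradients $\grad_{\{x\}}$ and $\grad_{\{y\}}$ of the bound function, and — for !map2! — the !reduce $\widehat{+}$ $\widehat{0}$ G! accumulation back into the ambient context. I would have to check term-by-term that, modulo $\beta\eta$ and \texttt{let}-floating, these match the !map2!- and !reduce!-clauses of Figure~\ref{fig:direct_diff_macro}; the delicate points are aligning the $\widehat{+}$-accumulation with the tuple that feeds the continuation, and verifying that the sub-gradient scopes ($\Gamma$ versus $\{x\}$ versus $\{y\}$) and the substitutions $[a/x,b/y]$ line up on both sides. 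Only the syntactic shape of these expressions is at stake here; their semantic correctness was established separately for $\UNFSymbol^{-1}$ of the two Jacobians in the preceding subsection. A last small point to dispatch: the discrepancy noted earlier between the syntactic $\UNFSymbol^{-1}$ and the ``semantic'' one — the latter sending a non-Jacobian primitive to a whole tuple rather than to its last component — is immaterial, since both $\UNFSymbol^{-1}(\Dsynrevsymbol_{\rho}(\UNFSymbol(e)))$ and $\directD{\rho}{\Gamma}{Y}(e)$ use only that last component.
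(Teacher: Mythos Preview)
Your proposal is correct and follows the same approach as the paper. The paper itself offers no detailed proof: it merely remarks that the equality holds ``by inspecting what that composition of transformations does on the terms of Source'', adding that this is unsurprising because $\directD{\rho}{\Gamma}{Y}$ was designed by reading off the composite $\UNFSymbol^{-1}\circ\Dsynrevsymbol_{\rho}\circ\UNFSymbol$. Your induction on derivations, with the primitive-by-primitive matching of $\UNFSymbol^{-1}(F(p,J^Tp))$ against the clauses of Figure~\ref{fig:direct_diff_macro} and the identification of weakening with the context-extension $\ad{\Gamma}{Y}\mapsto\ad{\Gamma,x}{Y_1}$, is exactly the detailed elaboration that the paper leaves implicit.
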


Combining the previous propositions, we have shown that $\directD{\rho}{\Gamma}{Y}$ is correct.

\begin{theorem}
    For every $\Gamma \vdash$ !e:!$\reals$, we have 
    $\sem{\grad_\Gamma e}= \grad_\Gamma\sem{e}$.
\end{theorem}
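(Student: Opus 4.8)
The plan is to chain together the propositions already established for the UNF pipeline, unfolding the definition of $\grad_\Gamma$ from Figure~\ref{tbl:notation:one}. Without loss of generality I take $\Gamma=x_1:\reals,\ldots,x_n:\reals$: every ground type denotes a Euclidean space, and $\UNFSymbol$, $\Dsynrevsymbol_{\rho}$, $\UNFSymbol^{-1}$ and $\directD{\rho}{\Gamma}{Y}$ all commute, up to the evident canonical isomorphisms (which are smooth with smooth inverse in $\Diff$), with the flattening of products and arrays of reals into a single $\RR^m$; so the general statement follows by conjugating with these isomorphisms. With $\Gamma$ a real context, the full-gradient instance uses $\rho=\underline\Gamma=\reals^{\times n}$ and continuation $Y:=Id_\Gamma$, whose denotation is $id_{\RR^n}$.

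First I would compute $\sem{\grad_\Gamma e}$ directly from $\grad_\Gamma e=\pi_2\,\directD{\Gamma}{\Gamma}{Id_\Gamma}(e)(0_\Gamma,\underline 1)$: application denotes function application, $0_\Gamma$ and $\underline 1$ denote the points $0\in\RR^n$ and $1\in\RR$, and (by Lemma~\ref{lem:noHO} together with the typing lemma for $\directD{\rho}{\Gamma}{Y}$) the only non-ground free variable of $\directD{\rho}{\Gamma}{Y}(e)$ is the continuation $Y$, so substituting the closed term $Id_\Gamma$ for $Y$ is, semantically, the same as feeding $id_{\RR^n}$ to the interpretation of $\directD{\rho}{\Gamma}{Y}(e)$ regarded as a function of $Y$. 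Next I would invoke the proposition $\sem{\UNFSymbol^{-1}(\Dsynrevsymbol_{\rho}(\UNFSymbol(e)))}=\sem{\directD{\rho}{\Gamma}{Y}(e)}$ to pass to the UNF pipeline, and then the pipeline-correctness proposition, which, once its continuation argument is set to $id_{\RR^n}$, yields
\[\sem{\grad_\Gamma e}(x)=\bigl(J^{T}_{x}\langle Id_{\RR^n},\sem{e}\rangle\bigr)(0,\ldots,0,1)\qquad\text{for all }x\in\RR^n.\]

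A one-line linear-algebra computation then finishes the argument: the Jacobian at $x$ of $\langle Id_{\RR^n},\sem{e}\rangle\colon\RR^n\to\RR^{n+1}$ is the $(n+1)\times n$ matrix whose top $n\times n$ block is the identity and whose last row is $\nabla\sem{e}(x)^{T}$, so its transpose sends the seed covector $(0,\ldots,0,1)$ to $\nabla\sem{e}(x)$; hence $\sem{\grad_\Gamma e}(x)=\nabla\sem{e}(x)=(\grad_\Gamma\sem{e})(x)$ for every $x$, and since morphisms in $\Diff$ are determined by their values on points this gives $\sem{\grad_\Gamma e}=\grad_\Gamma\sem{e}$, where $\grad_\Gamma\sem{e}$ denotes the gradient of the smooth function $\sem{e}\colon\RR^n\to\RR$. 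The real content sits in the cited propositions rather than in this assembly; the one point to treat carefully in the write-up is the bookkeeping that makes their hypotheses match exactly --- that substituting $Id_\Gamma$ for $Y$ and then feeding $(0_\Gamma,\underline 1)$ corresponds, at the level of $\Diff$-morphisms, to running the reverse pass with the identity continuation and the seed covector $1$ --- together with checking that the ``WLOG real context'' reduction is respected by each transformation in the pipeline.
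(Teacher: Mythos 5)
Your proposal follows essentially the same route as the paper: the paper's entire proof is to combine the preceding propositions (the pipeline-correctness statement $\seml \pi_2 \UNFSymbol^{-1}(\Dsynrevsymbol_{\rho}(\UNFSymbol(e)))\semr(x,Id_{\RR^n})=J^T_{x}\langle Id_{\RR^n},\sem{e}\rangle$ and the agreement $\seml\UNFSymbol^{-1}(\Dsynrevsymbol_{\rho}(\UNFSymbol(e)))\semr = \seml\directD{\rho}{\Gamma}{Y}(e)\semr$), exactly as you do. Your write-up is more explicit than the paper's one-line proof about the final seed-covector computation and about reducing a general ground context to $\RR^n$, but these are details the paper leaves implicit rather than a different argument.
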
 
\section{Beyond the Source language}
\label{sec:generalization}
In this section, we generalize our language by relaxing the imposed restrictions.
First, we allow free variables for the function argument of !reduce! in Section~\ref{sub:lift_reduce}. 
Then we show how to support conditionals in Section~\ref{sub:lift_conditional}.
Finally, we provide a recipe for adding more constructs to the language in Section~\ref{sub:lift_recipe}. Further generalizations for more array operations, non-smooth scalar operators, and general array support can be found in the supplementary materials.

\subsection{Lifting the restriction on reduce}
\label{sub:lift_reduce}

Assume we allow $\Gamma$, !x!: $\reals$, !y!: $\reals$ $\vdash$ !e!: $\reals$ to be the function argument in
!reduce (x,y.e) v A!. We need to add a term depending on $\grad_{\{xi\}}e$ to !yi! in the continuation.
As writing the derivative becomes quite cumbersome, we use a more compact notation using arrays of tuples. 
Similarly to the monoid $\widehat{+},0_\Gamma$ defined in Section~\ref{sub:Macro for pure reverse mode transformation},
we use $\widehat{\times},1_\Gamma$ for obvious extension of the monoid $(\reals,\times,1)$. 
The modified reverse-mode transformation is shown in Figure~\ref{fig:lift_reduce}.

\begin{figure}

\begin{small}
\addtolength{\tabcolsep}{-2pt}
\begin{tabular}{|r c l|}
\hline
$\directD{\rho}{\Gamma}{Y}$(!reduce (x,y.e1) e$_2$ e$_3$!) 
&=& $\ldots$ same as Figure~\ref{fig:direct_diff_macro} up until !A_$3$! \\
&& !let B$_0$=map2 (a,b ->(!$\grad_\Gamma$!e)[a/x,b/y] A$_0$ A) in!\\
&& !let B$_1$=scanr $1_\Gamma$ $\widehat{\times}$ B$_0$ in! \\
&& !let B$_2$=reduce $\widehat{+}$ $0_\Gamma$ B$_1$ in! \\
&& !<reduce (x,y.e1) y$_1$ A, fun ($\boldsymbol{x}$),z) ->! \\
&& !Y($\boldsymbol{x}\widehat{+}(z*$B$_2$),0,map2 (a,b. a*b*z) A$_2$ A$_3$)! \\ \hline
\end{tabular}
\end{small}
\vspace{-0.4cm}
\caption{The reverse-mode AD transformation of \codekw{reduce} without restrictions on its function argument.}
\vspace{-0.4cm}
\label{fig:lift_reduce}
\end{figure}

\subsection{Conditionals} 
\label{sub:lift_conditional}

\smartpara{Non-Solution 1}
Adding conditionals to the source language can be done easily, for instance via defining

\begin{tabular}{r c l}
$\directD{\rho}{\Gamma}{Y}$(!if e$_1$ then e$_2$ else e$_3$!) &=& !if e$_1$ then $\directD{\rho}{\Gamma}{Y}$(e$_2$) else $\directD{\rho}{\Gamma}{Y}$(e$_3$)! 
\end{tabular}

The problem is that this could break the complexity of reverse mode because of the non-linear usage of $Y$, and makes everything harder to optimize.

\smartpara{Non-Solution 2}
A slightly better option would be to define 

\begin{tabular}{l}
    $\directD{\rho}{\Gamma}{Y}$(!if e$_1$ then e$_2$ else e$_3$!) = \\
    !let b=e$_1$ in <if b then e$_2$ else e$_3$, fun (x$_1$,$\ldots$,x$_n$,z) ->! \\
    !Y(b*!$\directD{\rho}{\Gamma}{Y}$!(e$_2$)(x$_1$,$\ldots$,x$_n$,z)+(1-b)*!$\directD{\rho}{\Gamma}{Y}$!(e$_3$)(x$_1$,$\ldots$,x$_n$,z))>!
\end{tabular}

Now both derivatives of !e2! and !e3! are put together, and this might unlock some optimizations, 
but there is still a non-linear usage of the continuation variable !Y!.

\smartpara{Solution}
If we know we want to compute the whole gradient of the expression, we can define the translation as follows:

\begin{tabular}{l}
    !$\directD{\rho}{\Gamma}{Y}$(if e$_1$ then e$_2$ else e$_3$) = let b=e$_1$ in <if b then e$_2$ else e$_3$,! \\
    !fun (x$_1$,$\ldots$,x$_n$,z) -> Y((x$_1$,$\ldots$,x$_n$)!$\widehat{+}$($\grad_\Gamma$!(e$_2$)*b!$\widehat{+}\grad_\Gamma$!(e$_3$)*(1-b))*z)>!\\
\end{tabular}

This time, there is a linear usage of the continuation variable !Y!. We note that adding conditionals does not break differentiability as long as there are no non-trivial primitives $\reals\to\BB$.
Non-smooth functions such as the Rectified Linear Unit (ReLU) in machine learning, which can be defined as $ReLU(x)\defeq max(0,x)$ 
requires a non-smooth primitive such as $>0: \reals\to\BB$. In several AD systems such as TensorFlow, 
the condition is evaluated before differentiation is applied, and so conditionals are never directly differentiated.

\subsection{Recipe for Adding More Constructs}
\label{sub:lift_recipe}
To add more constructs to our language, 
there is no need to go through UNF; most of the backend of our work can be used as a black box.
Non-smooth variables (typically booleans or integers) should be considered as external to the language, similarly to the way we treat indices $n$ of arrays.

\smartpara{Reverse AD}
If the operator has type $A\to B$, then one should provide its transpose Jacobian, a term $B\to A$. 
Such operators can often be unrolled to first-order programs. 

\smartpara{Correctness}
To check the correctness of the given Jacobian, it suffices to 
check that the semantics of the transpose Jacobian matches the Jacobian of the unrolled program. 

\smartpara{Complexity}
To ensure the complexity guarantee of the whole transformation, one needs to check that there is a linear usage of the continuation variable $Y$ and that the cost of the 
proposed Jacobian is at most $k$ times the complexity of the operator. 

\section{Discussion and future work} 
\label{sec:discussion_and_future_work}

\noindent \textbf{Design space}
First, there is a tradeoff to reach between a general expressive language and a domain specific one. The latter usually has more static information and a specific representation that lends itself to better optimizations.
Then, many optimizations performed on AD implementations consist in hand-crafted derivatives for useful operations like matrix-matrix multiplication, dot-product, etc.
They don't seem to arise from theoretical justifications, are error-prone, and can hardly fit with more general optimizations.
This makes these systems harder to prove correct. The problem is thus to ensure provable correctness and pureness, while not compromising on efficiency. In addition, we would like something easily, provably efficient. 
A real-world implementation based on our work would of course optimize further, potentially using hand-crafted operations as well, but it would be based on solid grounds.

\noindent \textbf{Higher-order functions}
Some recent work \cite{vakar2020reverse,sherman2021} present reverse-mode in a higher-order language.
\cite{vakar2020reverse} uses categorical semantics to show correctness of the reverse-mode transformation 
and \cite{sherman2021} uses sophisticated higher-order primitives such as root finding, max, argmax or integral. 
Their work focuses on computable reals, which is hard to compare in terms of efficiency with our more standard approach of AD.
It is however quite difficult to prove a complexity result in the higher-order setting. 
In addition, standard techniques of defunctionalization struggle when higher-order is combined with recursion.
Our reverse-mode transformation does not support recursion, 
so it is currently always possible to partially evaluate a higher-order program to our Source language, seen as intermediate representation,
then perform our reverse-mode transformation.

\noindent \textbf{Array primitives}
We focused on giving reverse derivatives for a small set of array primitives. 
Other common primitives include filter, flatten, gather. 
These functions can easily be added to our Source language once we provide a reverse derivative for them.
This is reminiscent of hand-crafted derivatives present in large AD frameworks (usually hundreds), 
and these can already be added in our language. 
As shown in \ref{sub:lift_recipe}, one mostly only needs to make sure that the provided transpose Jacobian is correct.





\section{Related Work} 
\label{sec:related_work}

\noindent \textbf{Correctness of AD in functional languages.}
Several recent works \cite{huot2020correctness,vakar2020reverse,vakar2020denotational,brunel2019backpropagation,barthe2020versatility,mazza2021automatic,lee2020correctness,abadi-plotkin2020} have focused on correctness of AD in a purely functional setting, 
often leaving efficiency on the side, especially for reverse-mode differentiation. 
We see our work as a complement and a first bridge between these works 
and more practical considerations of efficiency, 
which often require a lot more care than is acknowledged in more theoretical works.


\noindent \textbf{Usage of iteration mechanics.}
An immense effort in machine learning for the past decade has been in finding
good architectures, to limit computational costs, 
avoid vanishing and exploding gradients, 
and have better building blocks for large complicated systems than traditional layers of a neural network.
Different approaches such as Dynamic neural networks \cite{jin2017manipulability,wu2016deep}, 
Recursive NN \cite{socher2011parsing,biancofiore2017recursive}, 
Reccurent NN \cite{bahdanau2014neural,luong2015effective}, 
Tree LSTM \cite{tai2015improved,chen2016enhanced}, 
Dynamic Recursive NN \cite{guo2019dynamic}, 
Top-down Tree LSTM \cite{zhang2015top}, and
Recursion in DNN \cite{jeong2018improving}
have found that recursive data structures such as trees are good candidates.
We have emphasized here on differentiating fold-based recursion on arrays for efficiency, 
but one should be able to adapt this to any algebraic data type. 
It will be interesting to see if and how we recover efficient purely functional backpropagation (as opposed to the imperative version of~\cite{lantern_icfp}) on the proposed architectures, 
which is usually derived by hand and one main goal of these papers.

\noindent \textbf{Array Languages and AD.}
Given the enormous computation needs for state-of-the-art large scale machine learning applications, 
which require extremely efficient tensor computations and automatic differentiation for backpropagation, 
combining array languages and automatic differentiation (tensor calculus) in the best-fitted 
intermediate representation for optimizations is of key interest and active research~\cite{bernstein2020differentiating,laue2018computing,laue2020simple}. 
Advanced array programming is considered an orthogonal problem to AD, and we focused our work on the differentiation aspect.

\noindent \textbf{Comparison to other recent papers.}
Table~\ref{tbl:relwork} does not reflect all the aspects of AD. For instance, \cite{lee2020correctness} studies in more detail non differentiability, 
and \cite{sherman2021} the differentiability of highly non-trivial higher-order and partial functions. 
Our work is somewhat close in spirit to the idea of \cite{Elliott:2018:SEA:3243631.3236765} of compiling to categories.
The idea of using closures as back-propagators is receiving recent attention, as is highlighted in \cite{vytiniotis2019differentiable,lantern_icfp}.
These ideas are used in Julia Zygote \cite{innes2019zygote}, Swift AD \cite{wei2018first}, and recently in \cite{paszke2021getting}.
These seem closer to using control mechanisms than having purely functional reverse-derivatives.
Other aspects of AD are discussed in recent surveys \cite{van2018automatic,baydin2017automatic}. \cite{krawiec2022provably} show how to do efficient reverse-mode AD for a higher-order purely functional language, but at the cost of requiring a monadic translation.

\section{Conclusion}
\label{sec:conclusion}

We introduced a transformation on programs to compute provably efficient (\S\ref{sec:complexity}) 
gradients via reverse derivatives in a purely functional way (\S\ref{sub:Macro for pure reverse mode transformation})
on a simple yet expressive language with functions on arrays (\S\ref{sub:sourcelang}), 
combined with standard functional optimizations (\S\ref{fig:optim}).  
We introduced a novel intermediate representation, Unary Normal Form (\S\ref{sec:unf}) 
to decompose our translation into simpler ones.
We gave denotational semantics to our languages (\S\ref{sec:correctness}), 
and we proved the correctness of the reverse-mode translation (\S\ref{sub:Correctness theorem}).
We showed (\S\ref{sec:generalization}) how to lift the restrictions that
we introduced on arrays and how to extend our approach with other constructs such as conditionals.

\begin{acks}
We have benefited from discussing this work with many people, including Younesse~Kaddar, Jesse~Sigal, Matthijs~V\'ak\'ar, Emmanuel Arrighi, Sam Staton and others. The first author is supported by a Royal Society University Research Fellowship.
The second author thanks Huawei for their support of the distributed
data management and processing laboratory at the University of Edinburgh.
\end{acks}

\bibliography{refs}

\appendix
 \section{Appendix}

 \subsection{Operational semantics}

In Figure~\ref{fig:op_semantics_target} a small step call-by-value operational semantics for the language. 
The evaluation contexts aregiven in Figure~\ref{fig:ev_contexts}. 

\begin{figure*}[t]
    \begin{tabular}{|l c l|}
        \hline
        \multicolumn{3}{|l|}{Evaluation contexts} \\
        $E$ & \mbox{::=} & 
        [] 
        $\mid$ !let x = E in e! 
        $\mid$ !<E, e>!
        $\mid$ !<v, E>!
        $\mid$ $\pi_i$!(E)!
        $\mid$ !E op2 e!
        $\mid$ !v op2 e!
        $\mid$ !op1 E! \\
        && $\mid$ !map (x.e) E!
        $\mid$ !map2 (x,y.e) E e!
        $\mid$ !map2 (x,y.e) v E! \\
        && $\mid$ !foldl (x,y.e) E e!
        $\mid$ !foldl (x,y.e) v e! \\
        && $\mid$ !reduce (x,y.e) E e! 
        $\mid$ !reduce (x,y.e) v e! \\
        && $\mid$ !scanl (x,y.e) E e!
        $\mid$ !scanl (x,y.e) v e! \\
        && $\mid$ !scanr (x,y.e) E e!
        $\mid$ !scanr (x,y.e) v e! \\
        && $\mid$ !shift1L E! $\mid$ !shift1R E! \\
        && $\mid$ !E(e$\ldots$e)!
        $\mid$ !e(v$\ldots$vEe$\ldots$e)!
        $\mid$ !if E then e else e! \\
        && $\mid$ ![v,$\ldots$,v,E,e$\ldots$,e]!
        \\ \hline
        \multicolumn{3}{|l|}{Values} \\ 
        !v! & \mbox{::=} & 
        \cnst{}  
        $\mid$ !<v, v>!
        $\mid$ !true! 
        $\mid$ !false!
        $\mid$ !fun (x1,$\ldots$,xn) -> e!
        $\mid$ ![v,$\ldots$,v]! 
        \\ \hline
        \end{tabular}
    \caption{Evaluation contexts and values}
\label{fig:ev_contexts}    
\end{figure*}

\begin{figure*}[tb]
\begin{tabular}{|l c l|}
    \hline
    !op1! \cnst{} & \transto & \underline{op1(c)} \\ \hline
    \cnst{} !op2! \cnst{}' & \transto & \underline{op2(c,c')}\\ \hline
    !let x=v in e! & \transto & !e[v/x]!  \\ \hline
    $\pi_i$!<v$_1$, v$_2$>! & \transto & !vi!\\ \hline
    !(fun (x$_1$,$\ldots$,x$_n$) -> e)(v$_1$$\ldots$v$_n$)! & \transto & !e[v$_1$/x$_1$,$\ldots$,v$_n$/x$_n$]! \\ \hline
    !shift1L [v$_1$,$\ldots$,v$_{n+1}$]! & \transto & ![v$_2$,$\ldots$,v$_{n+1}$]! \\ \hline
    !shift1L []! & \transto & ![]! \\ \hline
    !shift1R [v$_1$,$\ldots$,v$_{n+1}$]! & \transto & ![v$_1$,$\ldots$,v$_n$]! \\ \hline
    !shift1R []! & \transto & ![]! \\ \hline
    !scanl (x,y.e) v []! & \transto & ![v]! \\\hline
    !scanl (x,y.e) v [v$_1$,$\ldots$,v$_n$]! & \transto & !v::(scanl (x,y.e) e[v/x,v$_1$/y]! \\ 
    && !           [v$_2$,$\ldots$,v$_n$])!\\ \hline
    !scanr (x,y.e) v []! & \transto & ![v]! \\ \hline
    !scanr (x,y.e) v [v$_1$,$\ldots$,v$_n$]! & \transto & !(scanl (x,y.e) e[v/x,v$_1$/y]! \\
    && !       [v$_2$,$\ldots$,v$_n$])::v! \\ \hline
    !reduce (x,y.e) v []! & \transto & !v! \\ \hline
    !reduce (x,y.e) v [v$_1$,$\ldots$,v$_n$]! & \transto  & !reduce (x,y.e) e[v/x,v$_1$/y] [v$_2$,$\ldots$,v$_n$])!\\ \hline
    !map2 (x,y.e) [v$_{11}$,$\ldots$,v$_{1n}$]! & \multirow{2}{*}{\transto} & ![e[v$_{11}$/x,v$_{21}$/y],$\ldots$,e[v$_{1n}$/x,v$_{2n}$/y]]! \\ 
    ![v$_{21}$,$\ldots$,v$_{2n}$]! && \\ \hline
    !if true then e$_1$ else e$_2$! & \transto  & !e$_1$! \\ \hline
    !if false then e$_1$ else e$_2$! & \transto  & !e$_2$! \\ \hline
    \end{tabular}
\vspace{-0.2cm}
\caption{Operational semantics of the source and target languages}
\vspace{-0.4cm}
\label{fig:op_semantics_target}
\end{figure*}

 \subsection{Reverse derivative of array operations}

 \label{sub:Reverse derivative of array operations}

We now prove that the reverse mode transformation is correct on array operations. 

 \begin{proposition}
    The reverse derivative of !map2! is correct.
\end{proposition}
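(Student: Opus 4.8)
The plan is to verify that the semantics of the reverse-mode transformation of `map2` — namely the transpose Jacobian term produced by $\directD{\rho}{\Gamma}{Y}(\texttt{map2}\,(x,y.e_1)\,e_2\,e_3)$ in Figure~\ref{fig:direct_diff_macro}, or equivalently $\invUNF{J^T\texttt{map2}_{T;x,y.e}}$ in Figure~\ref{fig:unf_to_target} — agrees with the genuine transpose Jacobian of $\sem{\texttt{map2}\,(x,y.e_1)\,e_2\,e_3}$. By the compositional structure already established (the correctness of $\Dsynrevsymbol_\rho$ on Source UNF, Lemma/Proposition on $\UNFSymbol$ and $\UNFSymbol^{-1}$), it suffices to check the single primitive $J^T\texttt{map2}_{T;x,y.e}$ in isolation; the context $T$ and the continuation plumbing are handled uniformly by the categorical machinery, so I would reduce immediately to the case $\Gamma$ empty (or a ground context $\Gamma$ treated as extra constant parameters) and compute $J^T$ of the map $m : (\RR^n)^2 \to \RR^n$, $m(A,B)[i] = f(A[i],B[i])$ where $f = \sem{e_1} : \RR\times\RR\to\RR$ (allowing $f$ to also depend on the $\Gamma$-parameters).

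First I would write down the forward Jacobian. For a perturbation $(dA, dB)$ the $i$-th component of $\J m (A,B)(dA,dB)$ is $\partial_1 f(A[i],B[i])\cdot dA[i] + \partial_2 f(A[i],B[i])\cdot dB[i]$, since distinct indices don't interact — $\J m(A,B)$ is block-diagonal across indices. Transposing a block-diagonal linear map is done blockwise, so given a cotangent $Z\in\RR^n$ on the output, $\J^T m(A,B)(Z)$ has first component the array $i\mapsto \partial_1 f(A[i],B[i])\cdot Z[i]$ and second component $i\mapsto \partial_2 f(A[i],B[i])\cdot Z[i]$ — and, when $f$ depends on $\Gamma$, an additional $\Gamma$-component obtained by summing (\texttt{reduce}\,$\widehat{+}$) the per-index $\Gamma$-cotangents $\grad_\Gamma e_1[A[i]/x,B[i]/y]\cdot Z[i]$, which accounts for the $\widehat{+}(\texttt{reduce}\,\widehat{+}\,\widehat{0}\,G)$ term. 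Then I would simply read off the transformation output: $\invUNF{J^T\texttt{map2}}$ computes exactly $G = \texttt{map2}\,*\,Z\,(\texttt{map2}\,(a,b.(\grad_\Gamma e_1)[a/x,b/y])\,A\,B)$, $\texttt{map2}\,(a,b.(\grad_{\{x\}}e_1)[a/x]*b)\,A\,Z$ for the $A$-component, and the symmetric expression for the $B$-component, matching term by term once one unfolds that $\grad_{\{x\}}e_1 = \partial_1 f$ under the semantic bracket (this is the scalar-operation case $\sem{\partial\,\mathsf{op}} = \partial\sem{\mathsf{op}}$ already assumed, lifted through the recursion defining $\grad$).

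The one genuinely delicate point — and the step I expect to be the main obstacle — is the bookkeeping of the ambient context $\Gamma$ and the continuation: making precise that $\grad_\Gamma e_1$ as it appears inside `map2` really does denote the partial derivative of $\sem{e_1}$ with respect to the $\Gamma$-variables evaluated at the point $(A[i],B[i])$, uniformly in $i$, and that summing these with `reduce` is the correct transpose of the "broadcast $\Gamma$ into every cell" map that `map2` implicitly performs. Concretely this is the statement that the reverse derivative of the diagonal/copy map is fan-in (addition), instantiated $n$ times in parallel — the same $\widehat{+}$ that appears in the lax-functoriality map for $\Dsynrevsymbol_\rho$. I would isolate this as a small lemma about $\J^T$ of $\Delta_n$-style duplication maps and the interpretation of `map2` given in the $\CartSp$ semantics (the $\langle\Delta_n,\sem{e_2},\sem{e_3}\rangle;\mathsf{swap};\sem{e_1}^{\times n}$ description), then chain the chain rule through those three stages; the rest is the routine block-diagonal transpose computation sketched above.
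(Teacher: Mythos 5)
Your proposal is correct, and it rests on the same essential fact as the paper's own proof --- that $\J\sem{\texttt{map2}\,(x,y.e_1)}$ is block-diagonal across array indices, so its transpose acts index-wise on the two array arguments and sums over indices on the shared context $\Gamma$ --- but you reach it by a different route. The paper exploits linearity of the transpose Jacobian to test it only on one-hot cotangent vectors $Z_i$: this collapses the entire expression, including the $\texttt{reduce}\,\widehat{+}\,\widehat{0}\,G$ term, to a single scalar instance $\Gamma\vdash e_1[a_i/x,b_i/y]:\reals$, and the conclusion follows from the independence of $a_i,b_i$ from the $\Gamma$-variables; correctness for \texttt{map2} is thus inherited from the already-established scalar case rather than re-derived. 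You instead compute the full forward Jacobian for an arbitrary cotangent $Z$, transpose it blockwise, and isolate a separate lemma that the transpose of the broadcast (duplication) of $\Gamma$ into every cell is fan-in, i.e.\ $\widehat{+}$-summation. The two arguments are equivalent by linearity. The paper's version is shorter and delegates more to the scalar case; yours makes explicit the provenance of the $\widehat{+}$-reduce term as $\J^T$ of the $\Delta_n$ duplication in the $\CartSp$ interpretation of \texttt{map2}, which the one-hot argument silently absorbs, and which adapts more directly to the general-array setting of the appendix where the per-cell cotangents are no longer scalars.
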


\begin{proof}
Let us denote by !t! the term !map2 (x,y.e) A B!. Without loss of generality, assume !A!=$[a_1,\ldots,a_n]$ and !B!=$[b_1,\ldots,b_n]$.
By chosing !Z! to a hot vector at i, call it !Z$_i$!, we're back to showing the result for the term $\Gamma \vdash$!e[a$_i$/x,b$_i$/y]: $\reals$!.
In !$\directD{\rho}{\Gamma}{Y}$(t)(Z$_i$)!, the term !G = (map2 * Z (map2 (a,b.($\grad_\Gamma$e$_{1}$)[a/x.b/y]) A B))! 
reduces to $\frac{\partial e}{\partial x_i}$[a$_i$/x,b$_i$/y]. As !a$_i$,b$_i$! are independant of $x_i$, 
this term is equal to $\frac{\partial e[a_i/x,b_i/y]}{\partial x_i}$, as expected.
Similarly, the term !map2 * (map2 (a,b.(!$\grad_{\{x\}}$!e$_{1}$)[a/x,b/y]) A B) Z! reduces to $\frac{\partial e}{\partial x}$[a$_i$/x,b$_i$/y].
As !a$_i$! is a variable, and independant of !b$_i$!, this is equal to $\frac{\partial e[a_i/x,b_i/y]}{\partial a_i}$.
\end{proof}

\begin{lemma}
    if !op2! is an associative binary operation with unit $\Gamma \vdash$ !v:! $\reals$, then 
    $\frac{\partial op2(v,e)}{\partial y_1}\times\frac{\partial v}{\partial x_i}=0$ 
    and $\frac{\partial op2(e,v)}{\partial y_2}\times\frac{\partial v}{\partial x_i}=0$ for all $x_i$.
\end{lemma}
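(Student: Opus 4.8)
The plan is to differentiate the defining equation of a unit, rather than to manipulate !op2! directly. Recall that asserting $v$ is a (two‑sided) unit of the associative operation !op2! means, working in the ambient category of smooth maps ($\CartSp$ suffices for this first‑order fragment, or $\Diff$), that the identities $\sem{op2}(\sem{v}(\gamma), z) = z$ and $\sem{op2}(z, \sem{v}(\gamma)) = z$ hold for every $\gamma \in \sem{\Gamma}$ and \emph{every} $z \in \RR$. The crucial observation is that the unit law holds uniformly in $z$, not just at the particular value $\sem{e}(\gamma)$; I would phrase it as an identity of smooth functions of the pair $(\gamma, z)$ with $z$ independent of $\gamma$.

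First I would treat the left unit. Write $V = \sem{v} : \sem{\Gamma} \to \RR$ and $F = \sem{op2} : \RR \times \RR \to \RR$, so that $F(V(\gamma), z) = z$ for all $(\gamma, z)$. Differentiating both sides with respect to the coordinate $x_i$ of $\gamma$ with $z$ held fixed, the chain rule gives $\partial_1 F(V(\gamma), z)\cdot \partial_{x_i} V(\gamma) = 0$, the right‑hand side being $\partial_{x_i} z = 0$. Since this holds for all $z$, I may instantiate $z := \sem{e}(\gamma)$, which is exactly the claim $\frac{\partial op2(v,e)}{\partial y_1}\times\frac{\partial v}{\partial x_i} = 0$. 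The right‑unit case is symmetric: from $F(z, V(\gamma)) = z$ differentiation with respect to $x_i$ yields $\partial_2 F(z, V(\gamma))\cdot \partial_{x_i} V(\gamma) = 0$ for all $z$, and instantiating $z := \sem{e}(\gamma)$ gives $\frac{\partial op2(e,v)}{\partial y_2}\times\frac{\partial v}{\partial x_i} = 0$.

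I would also record the underlying intuition: a semigroup has at most one two‑sided unit, so $V = \sem{v}$ must be the constant map picking out that fixed element, whence $\partial_{x_i} V = 0$ and both products vanish trivially; the differentiation argument above is simply this fact packaged so as not to rely on the uniqueness of units, and it is the form that slots cleanly into the chain‑rule bookkeeping for !reduce!. There is essentially no obstacle here — the one subtlety to flag is that one must invoke the unit law in its uniform‑in‑$z$ form, since differentiating only the instance $op2(v,e) = e$ (where $e$ may itself depend on $x_i$) does not by itself exhibit the vanishing.
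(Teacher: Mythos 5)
Your proof is correct, and it takes a slightly but genuinely different route from the paper's. The paper differentiates the instance $\code{v op2 e}=\code{e}$ in which the second argument is the ($x_i$-dependent) term $\code{e}$; the chain rule then yields the two-term identity $\frac{\partial op2(v,e)}{\partial y_1}\cdot\frac{\partial v}{\partial x_i}+\frac{\partial op2(v,e)}{\partial y_2}\cdot\frac{\partial e}{\partial x_i}=\frac{\partial e}{\partial x_i}$, and the paper concludes by arguing that $\frac{\partial e}{\partial x_i}$ is arbitrary, so the coefficient of $\frac{\partial e}{\partial x_i}$ must be $1$ and the remaining summand must vanish. You instead differentiate the unit law in its uniform-in-$z$ form $\sem{op2}(\sem{v}(\gamma),z)=z$ with $z$ held fixed, which gives $\partial_1 F(V(\gamma),z)\cdot\partial_{x_i}V(\gamma)=0$ directly, and then specialize $z:=\sem{e}(\gamma)$. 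The two arguments extract the same information from the same identity, but yours buys a cleaner justification: the paper's ``arbitrariness'' step implicitly requires that one can vary $\frac{\partial e}{\partial x_i}$ while holding the value $\sem{e}(\gamma)$ (and hence the coefficients $\partial_j F$) fixed, a point it leaves unstated, whereas your version separates the two terms for free by keeping $z$ independent of $\gamma$ from the start. Your closing remark that a two-sided unit of a fixed operation is unique, so $\sem{v}$ is constant and $\partial_{x_i}\sem{v}=0$ outright, is also a valid (and even shorter) observation that the paper does not make.
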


\begin{proof}
    For any $\Gamma \vdash$ !e:! $\reals$, we have !v op2 e!=!e!.
    Differentiating and using the chain rule we get 
    $$\frac{\partial op2(v,e)}{\partial y_1}\times\frac{\partial v}{\partial x_i}
    +\frac{\partial op2(v,e)}{\partial y_2}\times\frac{\partial e}{\partial x_i}
    = \frac{\partial e}{\partial x_i}$$
As $\frac{\partial e}{\partial x_i}$ is arbitrary, 
this shows that $\frac{\partial op2(v,e)}{\partial y_2}=1$ and $\frac{\partial op2(v,e)}{\partial y_1}\times\frac{\partial v}{\partial x_i}=0$.
Similarly for the other case.
\end{proof}

\begin{proposition}
    The reverse derivative of !reduce! is correct.
\end{proposition}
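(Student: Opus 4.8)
The plan is to unfold the definition of $\directD{\rho}{\Gamma}{Y}(\texttt{reduce}\,(x,y.e_1)\,e_2\,e_3)$ from Figure~\ref{fig:direct_diff_macro}, evaluate its auxiliary arrays $A_0,A_1,A_2,A_3$ symbolically via the operational semantics of Figure~\ref{fig:op_semantics_target}, recognise them as prefix/suffix products of the partial derivatives of $e_1$ along the fold, and then combine this with the inductive correctness of $\directD{\rho}{\Gamma}{Y}$ on the strict subterms $e_2$ and $e_3$ --- exactly as in the already-established \texttt{map2} case. Write $\boldsymbol a=\sem{e_3}=[a_1,\dots,a_n]$, $v=\sem{e_2}$, and set $s_0=v$, $s_i=\sem{e_1}(s_{i-1},a_i)$, so that $r\defeq\sem{\texttt{reduce}\,(x,y.e_1)\,e_2\,e_3}=s_n$; by the chain rule, $\partial s_i/\partial s_{i-1}=\partial_1 e_1(s_{i-1},a_i)$ and $\partial s_i/\partial a_i=\partial_2 e_1(s_{i-1},a_i)$. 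I will take $n\ge 1$; the $n=0$ case is degenerate, since then $r=v$, which is constant because $e_2$ is a unit.

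First I would carry out the symbolic evaluation. Using the reduction rules for \texttt{scanl}, \texttt{shift1R}, \texttt{shift1L}, \texttt{map2}, \texttt{scanr}: $A_0=\texttt{shift1R}(\texttt{scanl}\dots)=[s_0,\dots,s_{n-1}]$, i.e. $A_0[i]=s_{i-1}$; $A_2=\texttt{map2}\,(a,b.\grad_{\{y\}}e_1[a/x,b/y])\,A_0\,A$ has $i$-th entry $\partial_2 e_1(s_{i-1},a_i)=\partial s_i/\partial a_i$; $A_1=\texttt{shift1L}(\texttt{map2}\,(a,b.\grad_{\{x\}}e_1[a/x,b/y])\,A_0\,A)$ has $k$-th entry ($1\le k\le n-1$) equal to $\partial_1 e_1(s_k,a_{k+1})=\partial s_{k+1}/\partial s_k$; and $A_3=\texttt{scanr}\,*\,1\,A_1$ has $i$-th entry the suffix product $\prod_{k=i}^{n-1}\partial s_{k+1}/\partial s_k$, which telescopes to $\partial s_n/\partial s_i$ (with $A_3[n]=1=\partial s_n/\partial s_n$); one checks the lengths line up ($A_0,A,A_2,A_3$ of length $n$, $A_1$ of length $n-1$). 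Hence the array $\texttt{map2}\,(x,y.x{*}y{*}z)\,A_2\,A_3$ passed into the $A$-slot of the continuation has $i$-th entry $(\partial s_i/\partial a_i)(\partial s_n/\partial s_i)\,z=(\partial r/\partial a_i)\,z$, i.e. it is precisely the transpose Jacobian of \texttt{reduce}, seen as a function of its array argument, applied to the incoming cotangent $z$. By the induction hypothesis for $e_3$, pushing this cotangent through $\directD{\rho}{\Gamma,y_1}{Y_1}(e_3)$ accumulates $\sum_i(\partial a_i/\partial x_j)(\partial r/\partial a_i)\,z$ into gradient slot $j$ before handing control to $Y_1$.

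The one remaining point is the contribution of the initial value $e_2$. The continuation feeds $0$ into the slot of $Y_2$ attached to $y_1$ (Figure~\ref{fig:direct_diff_macro}; made explicit in Figure~\ref{fig:lift_reduce}), so I must argue that the would-be contribution $(\partial r/\partial s_0)(\partial v/\partial x_j)$ flowing back through $s_0=v$ vanishes. Telescoping once more, $\partial r/\partial s_0=\partial_1 e_1(v,a_1)\cdot\prod_{k=1}^{n-1}\partial s_{k+1}/\partial s_k=\partial_1 e_1(v,a_1)\cdot A_3[1]$, whence $(\partial r/\partial s_0)(\partial v/\partial x_j)=A_3[1]\cdot\big(\partial_1 e_1(v,a_1)\cdot\partial v/\partial x_j\big)$, and the bracketed factor is $0$ by the preceding lemma on associative operations with a unit. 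So discarding the $e_2$-path costs nothing: what $\directD{\rho}{\Gamma}{Y}$ deposits in slot $j$ is $\sum_i(\partial a_i/\partial x_j)(\partial r/\partial a_i)\,z+(\partial r/\partial s_0)(\partial v/\partial x_j)\,z=(\partial r/\partial x_j)\,z$, and composing through $Y_1$ then $Y$ (again by the induction hypotheses on $e_2$ and $e_3$, using $\ad{\Gamma}{Y}(\reals)=\reals\times(\Gamma\times\reals\to\rho)$) yields the desired correctness statement in the same form as for \texttt{map2}; the same computation also shows $\sem{\invUNF{J^T\texttt{reduce}}}=J^T\sem{\invUNF{\texttt{reduce}}}$, since those expressions coincide. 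The fiddly part of all this is the index bookkeeping of the two shifts and the \texttt{scanr}, but that is routine; the genuinely delicate step --- and the main obstacle --- is seeing that the macro silently drops the gradient path through $e_2$ and that the unit hypothesis, packaged as that lemma, is exactly what licenses the omission. This is also precisely where the ``$e_2$ is a unit of $e_1$'' restriction is used for \emph{correctness}, and not merely for the parallel reading of \texttt{reduce}.
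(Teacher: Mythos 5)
Your proof is correct and follows essentially the same route as the paper's: identify $A_0,\dots,A_3$ as the fold's intermediate values, the two families of partial derivatives of $e_1$, and the telescoping suffix products $\partial s_n/\partial s_i$, then invoke the preceding unit lemma to discharge the gradient path through $e_2$. The paper merely exemplifies this on an array of size $3$ and leaves the continuation plumbing implicit, so your general-$n$ version with the explicit induction hypotheses and the careful regrouping $(\partial r/\partial s_0)(\partial v/\partial x_j)=A_3[1]\cdot\bigl(\partial_1 e_1(v,a_1)\cdot\partial v/\partial x_j\bigr)=0$ is, if anything, a more complete rendering of the same argument.
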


\begin{proof}
    The notation for the general case are cumbersome and non-insightful. 
    We will exemplify the proof on the case of an array of size 3. We use infix notation for the binary operation $e$.
    The output term unrolls to $v_3\defeq ((v ~e ~a_1)~ e~ a_2) ~e~ a_3$, where $v$ is the unit of $e$.
    By the lemma above we know that the derivative w.r.t $v$ is 0, and focus on the partial derivatives w.r.t $a_i$.
    Write $v_1\defeq(v~ e~ a_1)$, $v_2 \defeq v_1 ~e ~a_2$.
    By inspection we have
    \begin{itemize}
        \item $\frac{\partial v_3}{\partial a_3}=\frac{\partial e}{\partial x_2}(v_2,a_3)$
        \item $\frac{\partial v_3}{\partial a_2}=\frac{\partial v_3}{\partial x_1}\times\frac{\partial v_2}{\partial x_2}=\frac{\partial e}{\partial x_1}(v_2,a_3)\times \frac{\partial e}{\partial x_2}(v_1,a_2)$
        \item $\frac{\partial v_3}{\partial a_1}=\frac{\partial v_3}{\partial x_1}\times\frac{\partial v_2}{\partial x_2}\times\frac{\partial v_1}{\partial x_2}=\frac{\partial e}{\partial x_1}(v_2,a_3)\times \frac{\partial e}{\partial x_1}(v_1,a_2)\times\frac{\partial e}{\partial x_2}(v,a_1)$
    \end{itemize}
    We thus need the following intermediate results
    \begin{itemize}
        \item $A_0=[v,v_1,v_2]$=!shift1R(scanl v e A)!
        \item $A_1 =[\grad_{\{x_1\}}e(v_1,a_2),\grad_{\{x_1\}}e(v_2,a_3)]=$ !shift1L (map2 (a,b.$\grad_{\{x_1\}}$e(a/x,b/y)) $A_0$ A)!
        \item $A_2 =[\grad_{\{x_2\}}e(v,a_1),\grad_{\{x_2\}}e(v_1,a_2),\grad_{\{x_2\}}e(v_2,a_3)]=$ !map2 (a,b.$\grad_{\{x_2\}}$e(a/x,b/y)) $A_0$ A!
        \item $A_3 =[\frac{\partial e}{\partial x_1}(v_2,a_3)\times \frac{\partial e}{\partial x_1}(v_1,a_2),\frac{\partial e}{\partial x_1}(v_2,a_3),1]=$ !scanr 1 * $A_1$!
    \end{itemize}
    And we return $A_4=$ !map2 $A_2$ $A_3$!.
\end{proof}

\subsection{Adding more array operators}
\label{sub:lift_more_arr_op}

There are two main differences with fold left !foldl! compared to !reduce!. 
First, in !foldl (x,y.e) v A! the starting accumulation element !v! is not a unit for !(x,y.e)!,
so it will have non-trivial derivatives in general and we need to account for that.
Second, we will need a more general !scanl! which allows !(x,y.e)! as a function argument. 
In other words, we need the general scan left computing the intermediate values of a fold left.
This should be a different primitive but we will still call it !scanl! in this section.

Finally, the former point implies we need a few more array manipulations.
These can be elegantly dealt with by changing the semantics of !scanl! and !scanr!. 
Let's assume that they now return a pair of an array of size $n$ 
of the intermediate computations and the final result.
The reverse derivative of !foldl! is then as shown in Figure~\ref{fig:foldl_map_trans}.
In this figure, we have shown the translation of the !map! operator as well, which is very similar to !map2!.

\begin{figure}
\begin{center}
\begin{tabular}{|r c l|}
\hline
    $\directD{\rho}{\Gamma}{Y}$(!foldl (x,y.e$_1$) e$_2$ e$_3$!) &=&
            !let v,Y$_1$ = $\directD{\rho}{\Gamma}{Y}$(e$_2$) in! \\
            && !let A,Y$_2$ = $\directD{\rho}{\Gamma,v}{Y_1}$(e$_3$) in! \\
            && !let A$_0$,r$_1$ = (scanl (x,y.e$_1$) v A) in! \\
            && !let A$_1$ = map2 (a,b.(!$\grad_{\{x\}}$!e$_1$)[a/x,b/y]) A$_0$ A! \\
            && !let A$_2$ = map2 (a,b.(!$\grad_{\{y\}}$!)e$_1$[a/x,b/y]) A$_0$ A! \\
            && !let r$_2$, A$_3$ = scanr * 1 A$_1$! \\
            && !<r$_1$, fun (x$_1$,$\ldots$,x$_n$,z) ->! \\
            && !let y$_1$,B = r$_2$*z ,map2 (x,y. x*y*z) A$_2$ A$_3$ in! \\
            && !Y$_2$(x$_1$,$\ldots$,x$_n$,y$_1$,B)>! \\
    $\directD{\rho}{\Gamma}{Y}$(!map (x.e$_{1}$) e$_{2}$!) &=&  
            !let A,Y$_{1}$ = $\directD{\rho}{\Gamma}{Y}$(e$_{2}$) in! \\
            && !<map (x.e$_{1}$) A, fun (x$_{1}$,$\ldots$,x$_n$,Z) -> !\\
            && !let G = (map2 * Z!\\
            && !  (map (a.(!$\grad_\Gamma$!e$_{1}$)[a/x]) A)) in! \\
            && !Y$_{1}$( (x$_{1}$,$\ldots$,x$_n$)$\widehat{+}$(reduce $\widehat{+}$ $\widehat{0}$ G),!\\
            && \quad\quad !map2 (a,b.($\grad_{\{x\}}$e$_{1}$)[a/x]*b) A Z )>!\\\hline
\end{tabular}
\end{center}
\caption{The reverse-mode AD transformation of \codekw{foldl} and \codekw{map} operators.}
\label{fig:foldl_map_trans}
\end{figure}

\subsection{Adding non-smooth scalar operators} 
\label{sub:lift_non_smooth}

We assumed the unary and binary operators were denoted by smooth functions $\RR^n\to\RR$. 
There is no additional difficulty in considering operators which are partial functions 
like division or operators which are not smooth at a point like square root.

These functions are then given intentional derivatives which provide valid derivatives 
on the domain of definition and differentiability of the operator. 
These functions are well known to be the bete noire of AD \cite{griewank2008evaluating} 
and we do not provide novel solutions to these.  
Several recent work have shown how to give semantics to such operators in the context of AD \cite{vakar2020denotational,mazza2021automatic,sherman2021,lee2020correctness}.

\subsection{General arrays} 
\label{sub:lift_gen_arr}

We now show how to generalize our reverse-mode transformation to be defined on arrays over any ground type !G!.
That is, we need to adapt the reverse derivatives of !map2! and !reduce! when they have more general function arguments.

A ground type !G! is interpreted as an Euclidean space $A$. 
It is in particular a real vector space.
Similarly, a ground context !$\Gamma$=x$_1$:G$_1$,$\ldots$,x$_n$:G$_n$! is interpreted as $\bigoplus_{1\leq i\leq n}A_i$, where $A_i\defeq\seml$(!G$_i$!)$\semr$.
The denotation of the gradient of a term !$\Gamma \vdash$ e: G! at a point is then a matrix, more precisely an element of $(\bigoplus_{1\leq i\leq n}A_i)\otimes A$
where $\otimes$ is the tensor product of real vector spaces. This space is isomorphic to $\bigoplus_{1\leq i\leq n}A_i\otimes A$.

We can define $\otimes$ on the types of our language inductively by

\begin{tabular}{r c l}
    $\reals \otimes A$ & $\defeq$ & $A$ \\
    $(A_1 \times \ldots \times A_n)\otimes A$ & $\defeq$ & $(A_1\otimes A) \times \ldots \times (A_n \otimes A)$ \\
    $\Array{A_1}{n} \otimes A$ & $\defeq$ & $\Array{A_1\otimes A}{n}$
\end{tabular}

With this definition, we recover that the gradient of $\Gamma \vdash$!e: !$\reals$ is a tuple of type $A_1\times\ldots\times A_n$ as expected.
For !map2!, we need a generalization of $*:\reals\to\reals$. 
If !e$_1$: A!, then $\grad_{\{x\}}$!e$_1$: $A\otimes A$! and we need a new primitive $\widehat{*}:(A\otimes A) \times A \to A$.
If we represent $A\otimes A$ as a matrix, then $\widehat{*}$ is matrix-vector multiplication.
Similarly, for !A$_{3}$! in !reduce! we need a new primitive $\widetilde{*}:(A\otimes A)\times(A\otimes A) \to (A\otimes A)$.
$\widetilde{*}$ corresponds to matrix-matrix multiplication.
Then, using this notation, there are only minimal changes to the reverse derivatives of !map2! and !reduce!, as can be seen in Figure~\ref{fig:lift_gen_arr}.

\begin{figure}
\begin{center}
\begin{tabular}{|r c l|}
\hline
    $\directD{\rho}{\Gamma}{Y}$(!map2 (x,y.e$_1$: G) e$_2$ e$_3$!) &=&  
    !let A,Y$_1$ = !$\directD{\rho}{\Gamma}{Y}$!(e2) in! \\
    && !let B,Y$_2$ = !$\directD{\rho}{\Gamma,A}{Y1}$!(e$_3$) in! \\
    && !<map2 (x,y.e$_1$) A B, fun (x$_1$,$\ldots$,x$_n$,Z) -> !\\
    && !let G = (map2 $\widehat{*}$ Z!\\
    && !  (map (a.(!$\grad_\Gamma$!e$_{1}$)[a/x]) A)) in! \\
    && !Y$_2$( (x$_1$,$\ldots$,x$_n$)!$\widehat{+}$!(reduce $\widehat{+}$ $\widehat{0}$ G),!\\
    && \quad\quad! map2 (a,b.(!$\grad_{\{x\}}$!e$_1$)[a/x]!$\widehat{*}$!b) A Z,!\\
    && \quad\quad! map2 (a,b.(!$\grad_{\{y\}}$!e$_1$)[a/x]!$\widehat{*}$!b) B Z )>!\\
    $\directD{\rho}{\Gamma}{Y}$(!reduce (x,y.e$_{1}$) e$_{2}$ e$_{3}$!) &=&
    !let y$_{1}$,Y$_{1}$ = !$\directD{\rho}{\Gamma}{Y}$!(e$_{2}$) in! \\
    && !let A,Y$_{2}$ = !$\directD{\rho}{\Gamma,y_1}{Y_1}$!(e$_{3}$) in! \\
    && !let A$_{0}$ = shift1R (scanl (x,y.e$_{1}$) y$_{1}$ A) in! \\
    && !let A$_{1}$ = shift1L (map2! \\ 
    && !      (a,b.(!$\grad_{\{x\}}$!e$_{1}$)[a/x,b/y]) A$_{0}$ A) in! \\
    && !let A$_{2}$ = map2! \\
    && !      (a,b.(!$\grad_{\{y\}}$!)e$_{1}$[a/x,b/y]) A$_{0}$ A in! \\
    && !let A$_{3}$ = scanr $\widetilde{*}$ 1 A$_{1}$ in! \\
    && !<reduce (x,y.e$_{1}$) y$_{1}$ A, fun (x$_{1}$,$\ldots$,x$_n$,z) ->! \\
    && !Y$_{2}$(x$_{1}$,$\ldots$,x$_n$, map2 (x,y. x$\widehat{*}$(y$\widehat{*}$z)) A$_{2}$ A$_{3}$>! \\ \hline
\end{tabular}
\end{center}
\caption{The reverse-mode AD transformation of \codekw{map2} and \codekw{reduce} for general arrays.}
\label{fig:lift_gen_arr}
\end{figure}

Evidently, one can combine all the generalizations from the previous subsections.
Even though this transformation has the correct complexity, it is open for future research to find
even better representations to allow for more optimizations. 
In particular, representations looking like Einsum \cite{van2011numpy} could be of interest 
and has been recently studied in the context of AD \cite{laue2018computing,laue2020simple}.
More generally, there is growing interest in tensor calculus \cite{liao2019differentiable,bernstein2020differentiating}.

\subsection{Gradient from the introduction}
\label{sub:gradintro}

We show that the gradients from Section \ref{sec:intro} are obtained as instances of our general construction. 
The proofs consist in instantiating the general derivatives to these cases 
and showing that each rewrite step is a simple known optimization.

Similarly to numpy, we use the notation OnesLike(A) to mean map (x -> 1) A 
and ZerosLike(A) to mean map (x -> 0) A. 
Finding these constant arrays is key to a lot of optimizations that leverage the ring algebraic structure of the reals to arrays.

 \begin{lemma}
    !$\nabla_A$prod(A)= map2 * (scanr * 1 (shift1L A)) (shift1R (scanl * 1 A))!
 \end{lemma}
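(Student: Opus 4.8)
The plan is to instantiate the general reverse-mode transformation for !reduce!, then run it through the partial-evaluation step and the optimizations of Figure~\ref{fig:optim}. Unfolding the definition, !prod(A)! is !reduce * 1 A!, so I would apply the !reduce! clause of $\directD{\rho}{\Gamma}{Y}$ from Figure~\ref{fig:direct_diff_macro} in the context $\Gamma=\{$!A!$:\Array{\reals}{n}\}$, with function argument !e$_1$ = x*y!, initial value !e$_2$ = 1! (the unit of !*!, as !reduce! requires), and array !e$_3$ = A! (a variable). The required sub-gradients are $\grad_{\{x\}}(x\cdot y)=y$ and $\grad_{\{y\}}(x\cdot y)=x$, while !e$_2$! and !e$_3$! contribute only the two trivial continuations associated to a constant and to a variable. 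Instantiating the rule then produces exactly the second display of the worked Example in Section~\ref{sub:Macro for pure reverse mode transformation}: it binds !A$_0$ = shift1R (scanl * 1 A)!, !A$_1$ = shift1L (map2 (a,b.b) A$_0$ A)!, !A$_2$ = map2 (a,b.a) A$_0$ A!, !A$_3$ = scanr * 1 A$_1$!, and returns a pair whose second component is !fun (X,z) -> Y$_2$(X, map2 (a,b. a*z*b) A$_2$ A$_3$)!.

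Next I would perform the partial-evaluation step of Section~\ref{sub:Partial evaluation and optimization}: inlining and forward-substituting the almost-identity continuations !Y$_1$! and !Y$_2$! (their linear, ground-free usage is exactly what the lemmas there guarantee) collapses the continuation to !fun (X,z) -> Y(X + map2 (a,b. a*z*b) A$_2$ A$_3$)!, which is the simplified form already exhibited for !prod(A)! in that subsection. Per the definition of $\grad_\Gamma$ in Figure~\ref{tbl:notation:one}, the gradient is then $\pi_2$ of this pair with !Y! set to $Id_\Gamma$ and the result applied to $(0_\Gamma,\underline{1})$, i.e.\ to $($!ZerosLike(n)!$,\underline{1})$; this yields !ZerosLike(n) + map2 (a,b. a*1*b) A$_2$ A$_3$! with !A$_0$!--!A$_3$! still bound as above.

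It then remains to discharge the algebraic clean-up using the rules of Figure~\ref{fig:optim}, together with commutativity of !*! and !+! on reals (which lifts through !map2!): the inner !map2 (a,b.b) A$_0$ A! and !map2 (a,b.a) A$_0$ A! reduce to !A! and to !A$_0$! respectively — the obvious `projection' specialisations of the !map (x.x) A! rule — so that !A$_1$! becomes !shift1L A!, !A$_2$! becomes !shift1R (scanl * 1 A)!, and !A$_3$! becomes !scanr * 1 (shift1L A)!; the factor !a*1*b! reduces to !a*b! by !1*e! \transto !e!; the leading !ZerosLike(n) + -! is removed by !map2 + A ZerosLike(B)! \transto !A!; and a final use of commutativity of !*! rewrites !map2 * A$_2$ A$_3$! to !map2 * A$_3$ A$_2$!, which is exactly !map2 * (scanr * 1 (shift1L A)) (shift1R (scanl * 1 A))!, the claimed right-hand side. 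I expect the main obstacle to be not any single step but the bookkeeping — tracking, through the partial-evaluation step, the contexts carried by the nested continuations, and checking that each of the handful of rewrites used is literally one of the listed optimizations or a manifestly semantics-preserving instance of one (the `projection' forms of !map2!, the uses of commutativity) — so that the equality is obtained as a certified chain of rewrites rather than merely as a semantic identity.
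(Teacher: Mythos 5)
Your proposal is correct and follows essentially the same route as the paper's own proof: instantiate the reverse derivative of \texttt{reduce * 1 A}, partially evaluate, and discharge the rest with the listed algebraic and array simplifications (the projection \texttt{map2}s collapsing \texttt{A}$_1$ to \texttt{shift1L A} and \texttt{A}$_2$ to \texttt{A}$_0$, and \texttt{a*b*1} to \texttt{a*b}). The only differences are cosmetic: you carry the continuation and the $(0_\Gamma,\underline{1})$ initialization explicitly where the paper's appendix proof starts from the already-collapsed \texttt{fun z -> ...} applied to $1$, and your final appeal to commutativity of \texttt{*} is needed only because you keep the argument order of Figure~\ref{fig:direct_diff_macro}, whereas the appendix writes \texttt{map2 * A}$_3$ \texttt{A}$_2$ from the outset.
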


 \begin{proof}
The gradient of  derivative $\nabla_A$!prod(A)! is given by

\begin{tabular}{c l r}
    & $\nabla_A$ !prod(A)! & \\
    =&  $\nabla_A$(!reduce * 1 A!) & \\
    $\defeq$ & \Big(!fun z ->! & \\ 
    & !let A$_0$ = shift1R (scanl * 1 A) in! & \\
    & !let A$_1$ = shift1L (map2 (x,y -> y) A$_0$ A) in! & \\ 
    & !let A$_2$ = map2 (x,y -> x) A$_0$ A in! & \\
    & !let A$_3$ = scanr * 1 A$_1$ in! &\\
    & !map2 (a,b -> a*b*z) A$_3$ A$_2$! & \\
    & \Big)(1)\\
     $\stackrel{\beta-reduction}{=}$  & !let A$_0$ = shift1R (scanl * 1 A) in! & \\
    & !let A$_1$ = shift1L (map2 (x,y -> y) A$_0$ A) in! & !A$_1$=shift1L A! \\
    & !let A$_2$ = map2 (x,y -> x) A$_0$ A in! & !A$_2$= A$_0$!\\
    & !let A$_3$ = scanr * 1 A$_1$ in! & \\
    & !map2 (a,b -> a*b*1) A$_3$ A$_2$! & !map2 * A$_3$ A$_2$! \\
    = & !let A$_0$ = shift1R (scanl * 1 A) in! & \\
    & !let A$_1$=shift1L A! &\\
    & !let A$_3$ = scanr * 1 A$_1$! & \\
    & !map2 * A$_3$ A$_0$! & forward substitution !A$_2$! \\
    $\stackrel{\eta-reduction}{=}$ & !map2 * (scanr * 1 shift1L A)! &\\
    & \quad\quad\quad\quad !(shift1R (scanl * 1 A))! &
\end{tabular}

 \end{proof}

 \begin{lemma}
     $\nabla_A$!sum(A)! = !map (x -> 1) A!
 \end{lemma}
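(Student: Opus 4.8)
The plan is to specialize the general reverse-mode transformation of Figure~\ref{fig:direct_diff_macro} to the term $\mathtt{reduce}\;(x,y.x{+}y)\;0\;A$ and then simplify with the optimizations of Figure~\ref{fig:optim}, exactly as in the proof of the preceding lemma for $\mathtt{prod}$. By definition of $\grad_\Gamma$ in Figure~\ref{tbl:notation:one}, $\grad_\Gamma(\mathtt{sum}(A)) = \pi_2\,\directD{\Gamma}{\Gamma}{Id_\Gamma}(\mathtt{reduce}\;(x,y.x{+}y)\;0\;A)(0_\Gamma,\underline 1)$ with $\Gamma=\{A:\Array{\reals}{n}\}$, so $0_\Gamma = \mathtt{ZerosLike}(n)$.

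First I would apply the $\mathtt{reduce}$ clause of Figure~\ref{fig:direct_diff_macro} with $e_1 = x{+}y$, $e_2 = 0$, $e_3 = A$. Since $+$ is a primitive binary operator with $\partial_1(+) = \partial_2(+) = 1$, both $\grad_{\{x\}}(x{+}y)$ and $\grad_{\{y\}}(x{+}y)$ equal $\underline 1$, so the intermediate arrays collapse: $A_1 = \mathtt{shift1L}(\mathtt{map2}\;(a,b.\underline 1)\;A_0\;A)$ reduces to an array of ones of length $n{-}1$, $A_2 = \mathtt{map2}\;(a,b.\underline 1)\;A_0\;A$ reduces to $\mathtt{OnesLike}(n)$, and $A_3 = \mathtt{scanr}\;*\;\underline 1\;A_1$ reduces to $\mathtt{OnesLike}(n)$; since $A_0$ is then unused it is removed by forward substitution. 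Next I would $\beta$-reduce the returned continuation, whose shape mirrors the worked $\mathtt{prod}$ example --- $\mathtt{fun}\;(X,z)\to Y_2(X,\underline 0,\mathtt{map2}\;(x,y.\,x{*}y{*}z)\;A_2\;A_3)$ --- at $(X,z) = (\mathtt{ZerosLike}(n),\underline 1)$, unfolding the continuations $Y_2$ and $Y_1$ produced by differentiating the constant $0$ and then the variable $A$ down to $Y = Id_\Gamma$. What remains is $\mathtt{ZerosLike}(n)\;\widehat{+}\;\mathtt{map2}\;(x,y.\,x{*}y{*}\underline 1)\;\mathtt{OnesLike}(n)\;\mathtt{OnesLike}(n)$.

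Finally, I would apply the algebraic and array simplifications of Figure~\ref{fig:optim}: $\underline 1{*}e\leadsto e$ kills the tangent factor, $\mathtt{map2}\;*\;A\;\mathtt{OnesLike}(B)\leadsto A$ collapses the inner $\mathtt{map2}$ to $\mathtt{OnesLike}(n)$, and $\mathtt{map2}\;+\;A\;\mathtt{ZerosLike}(B)\leadsto A$ drops the leading zero array, leaving $\mathtt{OnesLike}(n)$, which is precisely $\mathtt{map}\;(x.\,\underline 1)\;A$ (the choice of array argument is semantically irrelevant, both denoting $[1,\dots,1]$). I expect the main obstacle to be purely bookkeeping: threading the continuation variables $Y, Y_1, Y_2$ and the successive context extensions through the $\mathtt{reduce}$ clause without slip, together with the array-algebraic facts that $\mathtt{shift1L}$ and $\mathtt{scanr}\;*\;\underline 1$ send arrays of ones to arrays of ones --- the latter being exactly the kind of simplification whose detailed verification is relegated to the supplementary material, as for the $\mathtt{prod}$ case.
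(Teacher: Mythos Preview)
Your proposal is correct and follows essentially the same route as the paper's proof: instantiate the \texttt{reduce} clause with $e_1=x{+}y$, observe that both partial derivatives of $+$ are $\underline 1$ so $A_1,A_2,A_3$ collapse to arrays of ones, then apply the array-algebraic simplifications of Figure~\ref{fig:optim} to reach \texttt{OnesLike}$(A)$. The only cosmetic difference is that the paper starts from the already partially-evaluated form (with the $Y_1,Y_2$ continuations from differentiating the constant $0$ and the variable $A$ already unwound), whereas you spell out that unwinding explicitly; the substance is identical.
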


 \begin{proof}
The gradient of !sum(A)! is given by

    \begin{tabular}{c l r}
    & $\nabla_A$ !sum(A)! & \\
    = & $\nabla_A$ !reduce + 0 A! & \\
    $\defeq$ & \Big(!fun z ->! & \\ 
    & !let A$_0$ = shift1R (scanl + 0 A) in! & \\
    & !let A$_1$ = shift1L! & \\
    & \quad\quad !(map2 (x,y -> 1) A$_0$ A) in! & \\ 
    & !let A$_2$ = map2 (x,y -> 1) A$_0$ A in! & \\
    & !let A$_3$ = scanr * 1 A$_1$ in! &\\
    & !map2 (a,b -> a*b*z) A$_3$ A$_2$! & \\
    & \Big)(1)\\
    $\stackrel{\beta-reduction}{=}$  & !let A$_0$ = shift1R (scanl + 0 A) in! & \\
    & !let A$_1$ = shift1L! & !A$_1$=OnesLike(shift1L(A))! \\
    & \quad\quad !(map2 (x,y -> 1) A$_0$ A) in! & \\
    & !let A$_2$ = map2 (x,y -> 1) A$_0$ A in! & !A$_2$=OnesLike(A)!\\
    & !let A$_3$ = scanr * 1 A$_1$ in! & \\
    & !map2 (a,b -> a*b*1) A$_3$ A$_2$! & !map2 * A$_3$ A$_2$! \\
    = & !let A$_0$ = shift1R (scanl + 0 A) in! & \\
    & !let A$_1$ = OnesLike(shift1L(A))! & forward substitution \\
    & !let A$_2$ = OnesLike(A)! & forward substitution \\
    & !let A$_3$ = scanr * 1 A$_1$ in! & \\
    & !map2 * A$_3$ A$_2$! & \\
    = & !let A$_3$ = scanr * 1! & !A$_3$=OnesLike(A)!\\
    &  \quad\quad!OnesLike(shift1L(A)) in! & \\
    & !map2 * A$_3$ OnesLike(A)! & forward substitution \\
    = & !map2 * OnesLike(A) OnesLike(A)! & \\
    = & !OnesLike(A)!
    \end{tabular}

\end{proof}

 \begin{lemma}
     $\nabla_A$!dot(A,B)! = !B! 
 \end{lemma}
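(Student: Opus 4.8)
The plan is to follow the exact recipe used for the two preceding lemmas on \texttt{prod} and \texttt{sum}: unfold $\text{dot}(A,B)\defeq\text{reduce}\;+\;0\;(\text{map2}\;*\;A\;B)$, instantiate the general reverse-mode transformation $\directD{\rho}{\Gamma}{Y}$ of Figure~\ref{fig:direct_diff_macro} at $\Gamma=\{A:\Array{\reals}{n},B:\Array{\reals}{n}\}$, $Y=Id_\Gamma$, apply the resulting continuation to the seed $(0_\Gamma,\underline{1})$ as prescribed by $\grad_\Gamma(e)=\pi_2\directD{\Gamma}{\Gamma}{Id_\Gamma}(e)(0_\Gamma,\underline{1})$, and then rewrite repeatedly with the sanctioned optimizations of Figure~\ref{fig:optim} ($\beta$-reduction, forward substitution, constant propagation, and the \texttt{OnesLike}/\texttt{ZerosLike} array identities). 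Finally, restrict the resulting gradient tuple to its $A$-component, i.e.\ take $\grad_{\{A\}}$ in the sense of Figure~\ref{tbl:notation:one}.

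The first computational step is to expand the \texttt{reduce} case with bound function $x+y$ and unit $0$. Since $\grad_{\{x\}}(x+y)=\grad_{\{y\}}(x+y)=1$, the auxiliary arrays collapse: $A_1=\text{shift1L}(\text{map2}\;(\_,\_.1)\;A_0\;A)$ rewrites to $\texttt{OnesLike}(\ldots)$ (using $\text{shift1L}\,\texttt{OnesLike}(n{+}1)\transto\texttt{OnesLike}(n)$), likewise $A_2=\texttt{OnesLike}(\ldots)$, and $A_3=\text{scanr}\;*\;1\;\texttt{OnesLike}(\ldots)\transto\texttt{OnesLike}(\ldots)$ just as in the \texttt{sum} proof. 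Hence the adjoint that \texttt{reduce} back-propagates onto its array argument $\text{map2}\;*\;A\;B$ is $\text{map2}\;(x,y.\,x{*}y{*}z)\;\texttt{OnesLike}\;\texttt{OnesLike}$, which after substituting the seed $z=\underline{1}$ simplifies to $\texttt{OnesLike}(\ldots)$. The second step expands the \texttt{map2} case on $\text{map2}\;*\;A\;B$ with this incoming adjoint $Z=\texttt{OnesLike}(\ldots)$. Here $\grad_{\{x\}}(x{*}y)=y$, $\grad_{\{y\}}(x{*}y)=x$, while $\grad_\Gamma(x{*}y)=0_\Gamma$ since the scalar product does not mention the array variables of $\Gamma$; therefore $G=\text{map2}\;*\;Z\;\texttt{ZerosLike}(\ldots)\transto\texttt{ZerosLike}(\ldots)$ and $\text{reduce}\;\widehat{+}\;\widehat{0}\;G\transto 0_\Gamma$, so the context part contributes nothing. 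The adjoint flowing back onto the first argument $A$ is $\text{map2}\;*\;(\text{map2}\;(a,b.b)\;A\;B)\;Z=\text{map2}\;*\;B\;\texttt{OnesLike}(\ldots)\transto B$, and symmetrically the adjoint onto $B$ is $A$. Threading these through the trivial continuations of $\directD{}{}{}(A)$ and $\directD{}{}{}(B)$ and projecting with $\grad_{\{A\}}$ discards the $A$-contribution and leaves exactly $B$.

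I do not expect a genuine obstacle; as in the companion lemmas, the only work is bookkeeping — carrying the nested \texttt{let}-bindings produced by $\directD{\rho}{\Gamma}{Y}$ through the pipeline, checking that each rewrite is one of the rules of Figure~\ref{fig:optim}, and verifying that the $0_\Gamma$/$\widehat{+}$ monoid terms on the context part vanish. The mildly delicate point is ordering: the seed substitution $z\mapsto\underline{1}$ coming from $\grad_\Gamma(e)=\pi_2\directD{}{}{}(e)(0_\Gamma,\underline{1})$ must be performed before the array-algebraic simplifications fire, so that the adjoint handed to the inner \texttt{map2} is genuinely $\texttt{OnesLike}(\ldots)$ rather than an array of un-reduced copies of $z$; and one must be careful that $\grad_{\{A\}}$ selects the contribution routed to $A$'s slot (which is $B$), not the symmetric one routed to $B$'s slot.
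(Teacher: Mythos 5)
Your proposal is correct and follows essentially the same route as the paper: unfold \texttt{dot} as \texttt{reduce + 0 (map2 * A B)}, observe that the \texttt{reduce}-with-\texttt{+} derivative collapses to a \texttt{OnesLike} adjoint (exactly as in the \texttt{sum} lemma), push that adjoint through the \texttt{map2 *} continuation to obtain the pair \texttt{(B, A)}, and project onto the \texttt{A}-component. The paper merely packages this slightly differently, by first naming the \texttt{map2} continuation \texttt{Y} and then reducing \texttt{Y(OnesLike(A))}, but the computational content and the optimization rules invoked are the same.
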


 \begin{proof}
The reverse derivative of !map2 * A B! is given by 

\begin{tabular}{c l}
    & !fun (Z) ->! \\
    & !let C$_1$ = map2 (a,b -> b) A B in! \\
    & !let C$_2$ = map2 (a,b -> a) A B in! \\
    & !(map2 * C$_1$ Z, map2 * C$_2$ Z)!
\end{tabular}

Let's call this term !Y!.
For convenience, let us also rewrite \\
!map2 * B (map (x -> 1) A)! 
as !let C = map (x -> 1) A in map2 * B C!.

Then the gradient of !dot(A,B)! is given by

\begin{tabular}{c l r}
    & $\nabla$ !dot(A,B)! & \\
    = & $\nabla$ !let C = map (x -> 1) A in map2 * B C! & \\
    $\defeq$ & \Big(!fun z ->! & \\ 
    & !let A$_0$ = shift1R (scanl + 0 A) in! &\\
    & !let A$_1$ = shift1L (map2 (x,y -> 1) A$_0$ A) in! & \\ 
    & !let A$_2$ = map2 (x,y -> 1) A$_0$ A in! & \\
    & !let A$_3$ = scanr * 1 A$_1$ in! &\\
    & !Y(map2 (a,b -> a*b*z) A$_3$ A$_2$)! & \\
    & \Big)!(ZerosLike(A),1)!\\
    = & !Y(OnesLike(A))! & (same reduction as previously)\\
    = & !let C$_1$ = map2 (a,b -> b) A B in! & !C$_1$ = B! \\
    & !let C$_2$ = map2 (a,b -> a) A B in! & !C$_2$ = A! \\
    & !(map2 * C$_1$ OnesLike(A),! &\\ 
    & !map2 * C$_2$ OnesLike(A))!  &\\
    = & !let C$_1$=B in! & forward substitution \\
    & !let C$_2$=A in! & forward substitution \\
    & !(map2 * C$_1$ OnesLike(A),! & !C$_1$! \\
    & !map2 * C$_2$ OnesLike(A))! & !C$_2$! \\
    = & !(map2 * B OnesLike(A),! & \\
    & !map2 * A OnesLike(A))! & \\
    = & !(B, A)!
\end{tabular}

If we are only interested in the gradient w.r.t. !A!, this indeed gives !B!.
\end{proof}

\end{document}